\def\ps@pprintTitle{%
 \let\@oddhead\@empty
 \let\@evenhead\@empty
 \def\@oddfoot{\centerline{\thepage}}%
 \let\@evenfoot\@oddfoot}
\def\@author#1{\g@addto@macro\elsauthors{\normalsize%
    \def\baselinestretch{1}%
    \upshape\authorsep#1\unskip\textsuperscript{%
      \ifx\@fnmark\@empty\else\unskip\sep\@fnmark\let\sep=,\fi
      \ifx\@corref\@empty\else\unskip\sep\@corref\let\sep=,\fi
      }%
    \def\authorsep{\unskip,\space}%
    \global\let\@fnmark\@empty
    \global\let\@corref\@empty  
    \global\let\sep\@empty}%
    \@eadauthor={#1}
}
\newcommand\xqed[1]{
  \leavevmode\unskip\penalty9999 \hbox{}\nobreak\hfill
  \quad\hbox{#1}}
\newcommand{\qedexmp}{\xqed{$\diamond$}}
\newcommand{\Z}{\mathbb{Z}}
\newcommand{\N}{\mathbb{N}}
\newcommand{\cdeg}[2][]{{\rm cdeg}_{#1}(#2)}
\newcommand{\rdeg}[2][]{{\rm rdeg}_{#1}(#2)}
\newcommand{\diag}{{\rm Diag}}
\DeclareMathOperator{\hermiteDiagonal}{HermiteDiagonal}
\DeclareMathOperator{\colBasis}{ColumnBasis}
\DeclareMathOperator{\determinant}{determinant}
\DeclareMathOperator{\mkb}{MinimalKernelBasis}
\newcommand{\mat}[1]{\mathbf{#1}}  
\newcommand{\row}[1]{\mathbf{#1}} 
\newcommand{\col}[1]{\mathbf{#1}} 
\newcommand{\trsp}[1]{#1^\mathsf{T}} 
\newcommand{\var}{x}   
\newcommand{\matdim}{n}  
\newcommand{\hermite}{\mat{H}}  
\newcommand{\smith}{\mat{S}}  
\newcommand{\idMat}{\mathbf{I}}  
\newcommand{\degDet}[1][A]{D(\mat{#1})}  
\newcommand{\bigO}[1]{\mathcal{O}(#1)} 
\newcommand{\softO}[1]{\widetilde{\mathcal{O}}(#1)} 
\newcommand{\expmatmul}{\omega}  
\newcommand{\field}{\mathbb{K}} 
\newcommand{\polRing}{\field[\var]} 
\newcommand{\storeArg}{} 
\newcommand{\matSpace}[1][\matdim]{\renewcommand\storeArg{#1}\matSpaceAux} 
\newcommand{\polMatSpace}[1][\matdim]{\renewcommand\storeArg{#1}\polMatSpaceAux} 
\newcommand{\matSpaceAux}[1][\storeArg]{\field^{\storeArg \times #1}} 
\newcommand{\polMatSpaceAux}[1][\storeArg]{\polRing^{\storeArg \times #1}} 
\newcommand{\shift}[2][s]{#1_{#2}} 
\newcommand{\shifts}[1][s]{\tuple{#1}} 
\newcommand{\shiftSpace}[1][\matdim]{\Z^{#1}} 
\newcommand{\unishift}{\shifts[0]} 
\newcommand{\leadingMat}[2][]{\mathrm{lm}_{#1}(#2)} 
\newcommand{\shiftMat}[1]{\mat{\var}^{#1\,}} 
\newcommand*{\vv}[1]{\vec{\mkern0mu#1}}  
\newcommand{\anyMat}{\boldsymbol{\ast}}
\newcommand{\rowParLinDiag}[2][\minDegs]{\mathcal{L}_{#1}(#2)}
\newcommand{\degExp}{\delta}
\newcommand{\shiftDegExp}{\shifts[m]}
\newcommand{\expandMat}[1][\degLins]{\boldsymbol{\mathcal{E}}_{#1}}
\newcommand{\triExpMat}[1][\degLins]{\boldsymbol{\mathcal{T}}_{#1}}
\newcommand{\quoExp}{\alpha}
\newcommand{\remExp}{\beta}
\newcommand{\expand}[1]{\widetilde{#1}}
\newcommand{\minDeg}{\delta}
\newcommand{\minDegs}{\vv{\delta}}
\newcommand{\tuple}[1]{\vv{#1}}   
\newcommand{\sumVec}[1]{|#1|}
\crefname{equation}{equation}{equations}
\Crefname{equation}{Equation}{Equations}
\theoremstyle{plain}
\newtheorem{definition}{Definition}[section]
\newtheorem{theorem}[definition]{Theorem}
\newtheorem{corollary}[definition]{Corollary}
\newtheorem{proposition}[definition]{Proposition}
\newtheorem{lemma}[definition]{Lemma}
\theoremstyle{remark}
\newtheorem{example}[definition]{Example}
\newtheorem{remark}[definition]{Remark}
\begin{document}

\begin{frontmatter}
\title{Fast, deterministic computation of the Hermite normal form \\ and determinant of a polynomial matrix}

\author{George Labahn\corref{cor}\fnref{aff1}}
\ead{glabahn@uwaterloo.ca}

\fntext[aff1]{David R. Cheriton School of Computer Science, University of Waterloo, Waterloo ON, Canada N2L 3G1}
\cortext[cor]{Corresponding author.}

\author{Vincent Neiger\fnref{aff2}}
\ead{vincent.neiger@ens-lyon.fr}

\author{Wei Zhou\fnref{aff1}}
\ead{w2zhou@uwaterloo.ca}

\fntext[aff2]{ENS de Lyon (Laboratoire LIP, CNRS, Inria, UCBL,
Universit\'e de Lyon), Lyon, France}


\begin{abstract}
Given a nonsingular $\matdim \times \matdim$ matrix of univariate polynomials
over a field $\field$, we give fast and deterministic algorithms to compute its determinant and its
Hermite normal form. Our algorithms use
$\softO{\matdim^\expmatmul \lceil s \rceil}$ operations in $\field$, where $s$
is bounded from above by both the average of the degrees of the rows and that
of the columns of the matrix and $\expmatmul$ is the exponent of matrix
multiplication. The soft-$\mathcal{O}$ notation indicates that logarithmic factors in the
big-$\mathcal{O}$ are omitted while the ceiling function indicates that the cost is
$\softO{\matdim^\expmatmul}$ when $s = o(1)$.  Our algorithms are based on a
fast and deterministic triangularization method for computing the diagonal
entries of the Hermite form of a nonsingular matrix.
\end{abstract}

\begin{keyword}
  Hermite normal form \sep determinant \sep polynomial matrix.
\end{keyword}

\end{frontmatter}

\section{Introduction}

For a given nonsingular polynomial matrix $\mat{A}$ in $\polMatSpace$, one can
find a unimodular matrix $\mathbf{U} \in \polMatSpace$ such that $\mat{A}
\mat{U} = \mat{H}$ is triangular. Unimodular means that there is a polynomial
inverse matrix, or equivalently, the determinant is a nonzero constant from
$\field$. Triangularizing a matrix is useful for solving linear systems and
computing matrix operations such as determinants or normal forms.  In the
latter case, the best-known example is the Hermite normal form, first defined
by Hermite in 1851 in the context of triangularizing integer
matrices~\cite{Hermite1851}. Here,
\[ \hermite =
\left[ \begin{array}{cccc} 
    h_{11} & & \\
    h_{21} & h_{22} & & \\
    \vdots & \vdots & \ddots & \\
  h_{\matdim 1} & \cdots & \cdots & h_{\matdim \matdim}
\end{array} \right] \]
with the added properties that each $h_{ii}$ is monic and $\deg(h_{ij}) <
\deg(h_{ii})$ for all $j<i$. Classical variations of this definition include
specifying upper rather than lower triangular forms, and specifying row rather
than column forms. In the latter case, the unimodular matrix multiplies on the
left rather than the right, and the degree of the diagonal entries dominates
that of their columns rather than their rows.

The goal of this paper is the fast, deterministic computation of the
determinant and Hermite normal form of a nonsingular polynomial matrix. The
common ingredient in both algorithms is a method for the fast computation of the
diagonal entries of a matrix triangularization. The product of these entries
gives, at least up to a constant, the determinant while Hermite forms are
determined from a given triangularization by reducing the remaining entries
modulo the diagonal entries.

In the case of determinant computation, there has been a number of efforts
directed to obtaining algorithms whose complexities are given in terms of
exponents of matrix multiplication. Interestingly enough, in the case of
matrices over a field,  Bunch and Hopcroft \cite{BunchHopcroft1974} showed that
if there exists an algorithm which multiplies $\matdim\times\matdim$ matrices
in $\bigO{\matdim^\expmatmul}$ field operations for some $\expmatmul$, then
there also exists an algorithm for computing the determinant with the same cost
bound $\bigO{\matdim^\expmatmul}$. In the case of an arbitrary commutative ring
or of the integers, fast determinant algorithms have been given by Kaltofen
\cite{kaltofen92}, Abbott \emph{et al.} \cite{abbott}
and Kaltofen and Villard
\cite{KaltofenVillard}. We refer the reader to the last named paper and the
references therein for more details on efficient determinant computation of
such matrices.

In the specific case of the determinant of a matrix of polynomials $\mat{A}$
with $\deg(\mat{A})=d$, Storjohann \cite{storjohann:phd2000} gave a recursive
deterministic algorithm making use of fraction-free Gaussian elimination with a
cost of $\softO{\matdim^{\expmatmul + 1} d}$ operations. A deterministic
$\bigO{\matdim^3 d^2}$ algorithm was later given by Mulders and Storjohann
\cite{mulders-storjohann:2003}, modifying their algorithm for weak Popov form
computation.  Using low rank perturbations, Eberly \emph{et al.}
\cite{EberlyGiesbrechtVillard} gave a randomized determinant algorithm for integer matrices
which can be adapted to be used with polynomial matrices using
$\softO{\matdim^{3.5} d}$ field operations. Storjohann
\cite{storjohann:2003} later used high order lifting to give a randomized
algorithm which computes the determinant using $\softO{\matdim^\expmatmul d}$
field operations. The algorithm of Giorgi \emph{et al.} \cite{Giorgi2003} has a
similar cost but only works on a class of generic input matrices, matrices that
are well behaved in the computation.

Similarly there has been considerable progress in the efficient computation of
the Hermite form of a polynomial matrix. Hafner and McCurley \cite{hafner} and
Iliopoulos \cite{iliopoulos} give algorithms with a complexity bound of
$\softO{\matdim^{4} d}$ operations from $\field$ where $d = \deg(\mat{A})$.
They control the size of the matrices encountered during the computation by
working modulo the determinant. Using matrix multiplication the algorithms of
Hafner and McCurley \cite{hafner}, Storjohann and Labahn
\cite{storjohann-labahn96} and Villard \cite{villard96} reduce the cost to
$\softO{\matdim^{\expmatmul + 1}d}$ operations where  $\expmatmul$ is the
exponent of matrix multiplication. The algorithm of Storjohann and Labahn
worked with integer matrices but the results directly carry over to polynomial
matrices. Mulders and Storjohann \cite{mulders-storjohann:2003} then gave an
iterative algorithm having complexity $\bigO{\matdim^{3} d^2}$, thus reducing
the exponent of $\matdim$ but at the cost of increasing the exponent of~$d$. 

During the past two decades, there has been a goal to design algorithms that
perform various $\polRing$-linear algebra operations in about the time that it
takes to multiply two polynomial matrices having the same dimension and degree
as the input matrix, namely at a cost $\softO{\matdim^{\expmatmul}d}$.
\emph{Randomized} algorithms with such a cost already exist for a number of
polynomial matrix problems, for example for linear system solving
\cite{storjohann:2003}, Smith normal form computation \cite{storjohann:2003},
row reduction \cite{Giorgi2003} and small nullspace bases computation
\cite{StoVil05}. In the case of polynomial matrix inversion, the randomized
algorithm in \cite{Storjohann15} costs $\softO{\matdim^3 d}$, which is
quasi-linear in the number of field elements used to represent the inverse.
For Hermite form computation, Gupta and Storjohann \cite{GS2011} gave a
randomized algorithm with expected cost $\softO{\matdim^{3}d}$, later improved
to $\softO{\matdim^\expmatmul d}$ in~\cite{Gupta11}. Their algorithm was the
first to be both softly cubic in $\matdim$ and softly linear in $d$. It is
worth mentioning that all the algorithms cited in this paragraph are of the Las
Vegas type.

Recently, \emph{deterministic} fast algorithms have been given for linear
system solving and row reduction \cite{GSSV2012}, minimal nullspace bases
\cite{za2012}, and matrix inversion \cite{ZhLaSt15}. Having a deterministic
algorithm has advantages. As a simple but important example, this allows for
use over a small finite field $\field$ without the need for resorting to
field extensions. The previous fastest Hermite form
algorithms~\cite{GS2011,Gupta11} do require such field extensions. In this
paper, we give deterministic fast algorithms for computing Hermite forms and
determinants.

Our approach relies on an efficient method for determining the diagonal
elements of a triangularization of the input matrix $\mat{A}$. We can do this
recursively by determining, for each integer $k$, a partition
\begin{equation}
  \label{eqn:diag_recursion}
\mat{A} \cdot \mat{U}=\begin{bmatrix}\mat{A}_{u}\\
\mat{A}_{d}
\end{bmatrix}\begin{bmatrix}\mat{U}_{\ell} & \mat{U}_{r}\end{bmatrix}=\begin{bmatrix}\mat{B}_{1} & \mat{0} \\
\anyMat & \mat{B}_{2}
\end{bmatrix}=\mat{B} \nonumber
\end{equation}
where $\mat{A}_{u}$ has $k$ rows, $\mat{U}_{\ell}$ has $k$ columns and
$\mat{B}_{1}$ is of size $k \times k$. The subscripts for $\mat{A}$ and
$\mat{U}$ are meant to denote up, down, left and right. As $\mat{A}$ is
nonsingular, $\mat{A}_{u}$ has full rank and hence one has that $\mat{U}_{r}$
is a basis of the kernel of $\mat{A}_{u}$.  Furthermore the matrix
$\mat{B}_{1}$ is nonsingular and is therefore a column basis of $\mat{A}_{u}$. 

However the recursion described above requires additional properties if it is to be efficient
for our applications. In the case of determinants, $\mat{A} \cdot \mat{U}$
being lower triangular implies that we need both the product of the diagonals
and also the determinant of the unimodular multiplier. For the case of Hermite
form computation a sensible approach would be to first determine a triangular
form of $\mat{A}$ and then reduce the lower triangular elements using the
diagonal entries with unimodular operations.  In both applications it appears
that we would need to know $\mat{U} = \mat{A}^{-1} \mat{H}$.  However the
degrees in such a unimodular multiplier can be too large for efficient
computation. Indeed there are examples where the sum of the degrees in
$\mat{U}$ is $\Theta(\matdim^3 d)$ (see \Cref{sec:diagonals}), in which case
computing $\mat{U}$ is beyond our target cost $\softO{\matdim^\expmatmul d}$. 

In order to achieve the desired efficiency, our triangularization computations
need to be done without actually determining the entire unimodular matrix
$\mat{U}$. We accomplish this by making use of shifted minimal kernel bases and
column bases of polynomial matrices, whose computations can be done efficiently
using algorithms from \cite{za2012} and \cite{za2013}. Shifts are weightings of
column degrees which basically help us to control the computations using column
degrees rather than the degree of the polynomial matrix. Using the degree
becomes an issue for efficiency when the degrees in the input matrix vary
considerably from column to column. We remark that shifted minimal kernel bases
and column bases, used in the context of fast block elimination, have also been
used for deterministic algorithms for inversion \cite{ZhLaSt15} and unimodular
completion \cite{zl2014} of polynomial matrices.

Fast algorithms for computing shifted minimal kernel bases~\cite{za2012} and
column bases~\cite{za2013} imply that we can deterministically find the
diagonals in $\softO{\matdim^{\expmatmul} \lceil s\rceil}$ field operations,
where $s$ is the average of the column degrees of $\mat{A}$. We recall that the
ceiling function indicates that for matrices with very low average column
degree $s \in o(1)$, this cost is still $\softO{\matdim^\expmatmul}$. By
modifying this algorithm slightly we can also compute the determinant of the
unimodular multiplier, giving our first contribution. In the next theorem, $\degDet$ is the
so-called \emph{generic determinant bound} as defined in \cite{GSSV2012} (see
also \Cref{subsec:degdet}). It has the important property that
$\degDet/\matdim$ is bounded from above by both the average of the degrees of
the columns of $\mat{A}$ and that of its rows. 

\begin{theorem}
  \label{thm:determinant}
  Let $\mat{A}$ be a nonsingular matrix in $\polMatSpace$. There is a
  deterministic algorithm which computes the determinant of $\mat{A}$ using
  $\softO{\matdim^\expmatmul \lceil \degDet/\matdim \rceil} \subseteq
  \softO{\matdim^\expmatmul \lceil s \rceil}$ operations in $\field$, with $s$
  being the minimum of the average of the degrees of the columns of $\mat{A}$
  and that of its rows.
\end{theorem}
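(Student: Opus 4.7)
The plan is to leverage the triangularization procedure promised in \Cref{sec:diagonals}, which recursively computes the diagonal entries of some triangular form $\mat{H} = \mat{A} \mat{U}$ within the target cost $\softO{\matdim^\expmatmul \lceil \degDet/\matdim \rceil}$ and without ever materializing the unimodular multiplier $\mat{U}$. Given such diagonal entries $h_1,\ldots,h_\matdim$, a single nonzero constant $c \in \field^*$ separates $\det(\mat{A})$ from $\prod_i h_i$: indeed, $\det(\mat{A}) \det(\mat{U}) = \prod_i h_i$, and $\det(\mat{U}) \in \field^*$ since $\mat{U}$ is unimodular. Thus the whole theorem reduces to computing this scalar $\det(\mat{U})^{-1}$ alongside the diagonal computation, without disturbing the cost.

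To compute this constant, I would instrument each recursive step of the triangularization. At level $k$, we split $\mat{A} = \begin{bmatrix}\mat{A}_u \\ \mat{A}_d\end{bmatrix}$, compute a minimal kernel basis $\mat{U}_r$ of $\mat{A}_u$ (in shifted Popov form, so its leading coefficient matrix with respect to the chosen shift has a prescribed structure), compute a column basis $\mat{B}_1$ of $\mat{A}_u$ together with the multiplier $\mat{U}_\ell$ such that $\mat{A}_u \mat{U}_\ell = \mat{B}_1$, set $\mat{B}_2 = \mat{A}_d \mat{U}_r$, and recurse on $\mat{B}_1$ and $\mat{B}_2$. Because the block factorization gives $\det(\mat{A})\det([\mat{U}_\ell \mid \mat{U}_r]) = \det(\mat{B}_1)\det(\mat{B}_2)$, it suffices at each node to extract the constant $\det([\mat{U}_\ell \mid \mat{U}_r]) \in \field^*$ from the leading-coefficient data already produced by the kernel and column basis subroutines of \cite{za2012,za2013}. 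Multiplying these scalars through the recursion tree yields $\det(\mat{U})$, after which $\det(\mat{A}) = \det(\mat{U})^{-1} \prod_i h_i$.

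The accounting is straightforward: each constant is read off from structured leading matrices of size at most $\matdim \times \matdim$ over $\field$, so each level adds at most $\bigO{\matdim^\expmatmul}$ field operations, which is absorbed by the cost of the triangularization itself. The final cost bound $\softO{\matdim^\expmatmul \lceil \degDet/\matdim \rceil}$ carries over from the diagonal-computation algorithm, and the inclusion in $\softO{\matdim^\expmatmul \lceil s \rceil}$ for $s$ the minimum of the average row and column degrees is precisely the property of $\degDet$ recalled in \Cref{subsec:degdet}.

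The main obstacle is the second step: verifying that a valid choice of $\mat{U}_\ell$ and $\mat{U}_r$ at each recursion node makes $\det([\mat{U}_\ell \mid \mat{U}_r])$ explicitly readable from quantities the kernel-basis and column-basis algorithms already output, and checking that such a choice is compatible with the recursive reuse of these matrices in the child calls on $\mat{B}_1$ and $\mat{B}_2$. Once this compatibility is in place, the correctness follows from the block determinant identity applied level by level, and the complexity follows immediately from the analysis of the underlying triangularization.
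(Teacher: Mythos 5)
Your overall recursive architecture is the same as the paper's: split $\mat{A}$, compute the diagonal blocks $\mat{B}_1$ and $\mat{B}_2$ without materializing $\mat{U}$, and carry a scalar correction $\det(\mat{U})$ through the recursion. But the step you defer as ``the main obstacle'' is a genuine gap, and it is precisely where the paper's technical content lies. The column basis algorithm of \cite{za2013} does \emph{not} return a multiplier $\mat{U}_{\ell}$ with $\mat{A}_u\mat{U}_{\ell}=\mat{B}_1$; such a $\mat{U}_{\ell}$ may have entries of degree too large to be computed within the target cost --- the same obstruction, discussed in \Cref{sec:diagonals}, that rules out computing $\mat{U}$ itself. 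What the column basis computation does return for free is the \emph{right factor} $\mat{V}_u$ with $\mat{B}_1\mat{V}_u=\mat{A}_u$, which the paper identifies as the top block of $\mat{V}=\mat{U}^{-1}$. The missing idea is then \Cref{lem:scalingToDeterminant}: for \emph{any} unimodular completion $\mat{U}^{*}=\begin{bmatrix}\mat{U}_{\ell}^{*} & \mat{U}_{r}\end{bmatrix}$ of the kernel basis, the identities $\mat{V}_u\mat{U}_r=\mat{0}$ and $\mat{V}_d\mat{U}_r=\idMat$ give $\mat{V}\mat{U}^{*}=\begin{bmatrix}\mat{V}_u\mat{U}_{\ell}^{*} & \mat{0}\\ \anyMat & \idMat\end{bmatrix}$, hence $\det(\mat{U})^{-1}=\det(\mat{V})=\det(\mat{V}_u\mat{U}_{\ell}^{*})/\det(\mat{U}^{*})$. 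A further simplification (\Cref{lem:scalingToDeterminantSimplified}) notes that all these determinants are constants from $\field$, so one may replace $\mat{U}_r$ and $\mat{V}_u$ by their constant terms modulo $\var$ and take the completion to be any constant nonsingular one, found by Gaussian elimination in $\bigO{\matdim^{\expmatmul}}$ operations. Without some version of this argument, your plan to ``read off'' $\det(\begin{bmatrix}\mat{U}_{\ell} & \mat{U}_{r}\end{bmatrix})$ from leading-coefficient data does not go through: neither $\mat{U}_{\ell}$ nor that determinant is visible in the output of the kernel-basis or column-basis routines.

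A second, smaller gap concerns the cost bound. The diagonal algorithm (\Cref{prop:diagonalCost}) and the resulting determinant recursion (\Cref{prop:det}) are analyzed in terms of the average \emph{column} degree $s$ only, so your argument yields $\softO{\matdim^{\expmatmul}\lceil s\rceil}$ but not the stronger bound $\softO{\matdim^{\expmatmul}\lceil \degDet/\matdim\rceil}$ stated in the theorem. The paper obtains the latter by first applying the partial-linearization reduction of \Cref{prop:reduction_determinant} (from \cite[Corollary~3]{GSSV2012}), which replaces $\mat{A}$ by a matrix of dimension less than $3\matdim$ and degree at most $\lceil\degDet/\matdim\rceil$ with the same determinant; this preprocessing step should be stated explicitly rather than attributed to the diagonal algorithm itself.
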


Applying our fast diagonal entry algorithm for Hermite form computation has
more technical challenges. The difficulty comes
from the unpredictability of the diagonal degrees of $\hermite$, which coincide
with its row degrees. Indeed, we know that the sum of the diagonal degrees in
$\hermite$ is $\deg(\det(\mat{A})) \le \matdim d$, and so the sum of the
degrees in $\hermite$ is $\bigO{\matdim^2 d}$. Still, the best known \emph{a
priori} bound for the degree of the $i$-th diagonal entry is $(\matdim - i + 1)
d$ and hence the sum of these bounds is $\bigO{\matdim^2 d}$, a factor of
$\matdim$ larger than the actual sum. Determining the diagonal entries gives us
the row degrees of $\hermite$ and thus solves this issue. Still, it remains a
second major task: that of computing the remaining entries of $\hermite$.

The randomized algorithm of Gupta and Storjohann~\cite{GS2011,Gupta11}
solves the Hermite form problem using two steps, which both make use of the
Smith normal form $\smith$ of $\mat{A}$ and partial information on a left
multiplier $\mat{V}$ for this Smith form. The matrices $\smith$ and $\mat{V}$
can be computed with a Las Vegas randomized algorithm using an expected number
of $\softO{\matdim^\expmatmul d}$ field operations~\cite{GS2011,Gupta11},
relying in particular on high-order lifting~\cite[Section~17]{storjohann:2003}.
The first step of their algorithm consists of computing the diagonal entries of
$\hermite$ by triangularization of a $2\matdim \times 2\matdim$ matrix
involving $\smith$ and $\mat{V}$, a computation done in
$\softO{\matdim^\expmatmul d}$ operations~\cite{Gupta11}. The second step sets
up a system of linear modular equations which admits $\mat{A}$ as a basis of
solutions: the matrix of the system is $\mat{V}$ and the moduli are the
diagonal entries of $\smith$. The degrees of the diagonal entries obtained in
the first step are then used to find $\hermite$ as another basis of solutions
of this system, computed in $\softO{\matdim^\expmatmul d}$~\cite{GS2011} using
in particular fast minimal approximant basis and partial linearization
techniques~\cite{Storjohann06,ZL2012}.

The algorithm presented here for Hermite forms follows a two-step process
similar to the algorithm of Gupta and Storjohann, but it avoids using the Smith
form of $\mat{A}$, whose deterministic computation in
$\softO{\matdim^\expmatmul d}$ still remains an open problem. Instead, as
explained above, we compute the diagonal entries of $\hermite$
deterministically via \cref{eqn:diag_recursion} using
$\softO{\matdim^{\expmatmul} \lceil s\rceil}$ field operations, where $s$ is
the average of the column degrees of $\mat{A}$.  As for the second step, using
the knowledge of the diagonal degrees of $\hermite$ combined with partial
linearization techniques from~\cite[Section~6]{GSSV2012}, we show that
$\hermite$ can then be computed via a single call to fast deterministic column
reduction~\cite{GSSV2012} using $\softO{\matdim^{\expmatmul} d}$ field
operations. This new problem reduction illustrates the fact that
knowing in advance the degree shape of reduced or normal forms makes their
computation much easier, something already observed and exploited
in~\cite{GS2011,zhou:phd2012,JeNeScVi16}.

This approach results in a deterministic $\softO{\matdim^{\expmatmul} d}$
algorithm for Hermite form computation, which is satisfactory for matrices
$\mat{A}$ that have most entries of similar degree $d = \deg(\mat{A})$.
However, inspired from other contexts such as approximant and kernel basis
computations~\cite{Storjohann06,ZL2012,JeNeScVi16,za2012} as well as polynomial
matrix inversion~\cite{ZhLaSt15} and the determinant algorithm in this paper,
one may hope for algorithms that are even faster than
$\softO{\matdim^{\expmatmul} d}$ when the degrees in $\mat{A}$ are non-uniform,
for example, if all high-degree entries are located in a few rows and columns
of $\mat{A}$. In the present paper we use ideas in~\cite{GSSV2012} to reduce
the non-uniformity of the degrees in $\mat{A}$ in the context of Hermite form
computation, thus obtaining \cref{thm:hermite}. 

\begin{theorem}
  \label{thm:hermite}
  Let $\mat{A}$ be a nonsingular matrix in $\polMatSpace$. There is a
  deterministic algorithm which computes the Hermite form of $\mat{A}$ using
  $\softO{\matdim^\expmatmul \lceil \degDet/\matdim \rceil} \subseteq
  \softO{\matdim^\expmatmul \lceil s \rceil}$ operations in $\field$, with $s$
  being the minimum of the average of the degrees of the columns of $\mat{A}$
  and that of its rows.
\end{theorem}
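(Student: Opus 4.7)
The plan is to follow a two-step strategy that computes first the diagonal of $\hermite$ and then the strictly lower part, matching the structure laid out in the introduction.

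For step one, I would invoke the fast deterministic triangularization method that underlies \cref{thm:determinant}. Applied recursively through the block partition in \cref{eqn:diag_recursion}, it produces the diagonal entries $h_{11},\ldots,h_{\matdim\matdim}$ of $\hermite$, relying on shifted minimal kernel bases~\cite{za2012} and column bases~\cite{za2013} rather than on the full unimodular multiplier $\mat{U}$. The cost of this step is $\softO{\matdim^\expmatmul \lceil \degDet/\matdim\rceil}$, already matching the target bound. The crucial output is not just the polynomials $h_{ii}$ themselves but their degrees $d_i = \deg(h_{ii})$, which coincide with the $i$-th row degree of $\hermite$ and satisfy $d_1+\cdots+d_\matdim = \deg(\det(\mat{A}))\le \degDet$.

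For step two, I want to recover the strictly lower triangular part of $\hermite$ without recomputing anything already known. Since $\hermite = \mat{A}\mat{U}$ for some unimodular $\mat{U}$, the columns of $\hermite$ lie in the $\polRing$-module generated by the columns of $\mat{A}$, and among all such bases $\hermite$ is uniquely characterized by being column reduced with prescribed diagonal/row-degree profile $(d_1,\ldots,d_\matdim)$. The idea is therefore to reduce Hermite form computation to a single call to deterministic column reduction from~\cite{GSSV2012}, feeding it $\mat{A}$ together with a shift that encodes the row-degree profile just computed. A direct reduction gives cost $\softO{\matdim^\expmatmul d}$ where $d=\deg(\mat{A})$; to bring this down to $\softO{\matdim^\expmatmul \lceil\degDet/\matdim\rceil}$ one applies the partial linearization machinery from~\cite[Section~6]{GSSV2012} to the input $\mat{A}$, splitting its rows (or columns) of unusually high degree into several rows/columns of controlled degree. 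This produces a surrogate matrix of dimension $\bigO{\matdim}$ whose average degree is $\degDet/\matdim$, the column reduction of which can be pulled back to yield $\hermite$.

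The main obstacle, as I see it, is step two: controlling the correctness and the shape of the output of column reduction when its input has been altered by partial linearization. One must show (a) that the shift induced by $(d_1,\ldots,d_\matdim)$ indeed forces the reduced form to have the required diagonal-dominant row-degree structure, so that after normalizing leading coefficients and reducing modulo diagonal entries one gets precisely $\hermite$; and (b) that the linearization is reversible in the sense that columns coming from one original high-degree column recombine into a single column of $\hermite$. Once these facts are in place, combining the two steps yields a deterministic algorithm whose total cost is dominated by $\softO{\matdim^\expmatmul \lceil \degDet/\matdim\rceil}$, and since $\degDet/\matdim$ is bounded by both the average column degree and the average row degree of $\mat{A}$~\cite{GSSV2012}, this gives the refined bound claimed in \cref{thm:hermite}.
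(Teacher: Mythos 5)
Your overall architecture matches the paper's: compute the diagonal degrees $\minDegs$ of $\hermite$ first, then obtain the remaining entries from a single shifted column reduction normalized via the leading matrix, and use the partial linearization of \cite[Section~6]{GSSV2012} on the input to replace $\deg(\mat{A})$ by $\lceil\degDet/\matdim\rceil$ in the cost. However, there is a genuine gap in your step two: the claim that ``a direct reduction gives cost $\softO{\matdim^\expmatmul d}$ where $d=\deg(\mat{A})$'' is false. The deterministic reduction algorithm of \cite{GSSV2012} must be run on the matrix $\shiftMat{\shifts[\mu]-\minDegs}\mat{A}$, whose degree is governed by the \emph{amplitude} of the shift, and $\max(\minDegs)$ can be as large as $\deg(\det(\mat{A}))$ while $\min(\minDegs)=0$; the direct reduction then costs $\softO{\matdim^\expmatmul(\deg(\mat{A})+\max(\minDegs))}$, which can be a factor of $\matdim$ above the target. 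Reducing $\deg(\mat{A})$ itself via the Section~6 linearization does not help here, since even for a matrix of degree $\lceil\degDet/\matdim\rceil$ the diagonal degrees of its Hermite form can still be wildly non-uniform, with one entry absorbing essentially all of $\deg(\det(\mat{A}))$.

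The missing ingredient is a \emph{second, distinct} partial linearization, applied not to the input degrees but to the output degrees $\minDegs$: the paper's construction $\rowParLinDiag{\mat{A}}$ (\Cref{defn:rowparlindiag}) splits each conceptual row of $\hermite$ of degree exceeding the average $\degExp\le 1+\deg(\mat{A})$ into several rows, inserts matching elementary columns, and produces a matrix of dimension less than $2\matdim$ together with a shift $\shifts[d]$ of amplitude at most $\degExp$. The nontrivial content is that this transformation commutes with shifted reduction and unimodular equivalence (\Cref{lem:rowparlindiag}), so that any $-\shifts[d]$-reduced form $\hat{\mat{R}}$ of $\rowParLinDiag{\mat{A}}$ normalizes to $\rowParLinDiag{\hermite}=\hat{\mat{R}}\cdot\leadingMat[{-\shifts[d]}]{\hat{\mat{R}}}^{-1}$ (\Cref{cor:reduction_pivdeg}), from which $\hermite$ is read off as a submatrix of $\expandMat\cdot\rowParLinDiag{\hermite}$. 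Your obstacles (a) and (b) are resolved exactly by \Cref{lem:popov_normalize} and \Cref{lem:rowparlindiag}, but without the $\minDegs$-based linearization your step two does not achieve $\softO{\matdim^\expmatmul\deg(\mat{A})}$, let alone the refined bound. A minor additional remark: after the normalization $\mat{R}\cdot\leadingMat[{-\minDegs}]{\mat{R}}^{-1}$ no further reduction modulo the diagonal entries is needed, since the $-\minDegs$-Popov form is already the Hermite form.
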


The remainder of this paper is organized as follows. In
\Cref{sec:preliminaries} we give preliminary information on shifted degrees as
well as kernel and column bases of polynomial matrices. We also recall why it
is interesting to have cost bounds involving the generic determinant bound
rather than the degree of the matrix; see in particular \cref{rk:degdet}.  \Cref{sec:diagonals} contains the fast algorithm
for finding the diagonal entries of a triangular form.  This is followed in \Cref{sec:determinant} 
by our algorithm for
finding the determinant. The reduction of degrees of off diagonal entries in the Hermite form  is then given in \Cref{sec:hermite}. It computes the remaining
entries by relying in particular on fast deterministic column reduction. In 
\Cref{sec:parlin} we then give the details about how to use partial linearization to
decrease the non-uniformity of the degrees in the input matrix for Hermite form
computation.  The paper ends with a conclusion and
topics for future research.

\section{Preliminaries}
\label{sec:preliminaries}

In this section we first give the basic notations for \emph{column degrees} and
\emph{shifted degrees} of vectors and matrices of polynomials.  We then present
the building blocks used in our algorithms, namely the concepts of {\em kernel
basis} and {\em column basis} for a matrix of polynomials. Finally, we explain
our interest in having cost bounds involving the so-called \emph{generic
determinant bound}.

\subsection{Shifted Degrees}

Our methods make use of the concept of {\em shifted} degrees of polynomial
matrices \cite{BLV:1999}, basically shifting the importance of the degrees in
some of the rows of a basis. For a column vector $\row{p}= \trsp{[p_1,\dots,p_\matdim]}$ 
of univariate polynomials over a field $\field$, its column degree, denoted by $\cdeg{\row{p}}$, 
is the maximum of the degrees of the entries of $\row{p}$, that is, 
\[
  \cdeg{\row{p}} = \max_{1\le i\le \matdim} \deg(p_{i}).
\]
The \emph{shifted column degree} generalizes this standard column degree by
taking the maximum after shifting the degrees by a given integer vector that is
known as a \emph{shift}. More specifically, the shifted column degree of
$\row{p}$ with respect to a shift $\shifts = (\shift{1},\dots,\shift{n}) \in
\shiftSpace$, or the \emph{$\shifts$-column degree} of $\row{p}$, is 
\[
  \cdeg[\shifts]{\row{p}} = \max_{1\le i\le \matdim}(\deg(p_{i})+\shift{i}) =
  \deg(\shiftMat{\shifts} \cdot\mathbf{p}),
\]
where 
\[
\shiftMat{\shifts} =
\diag\left(\var^{\shift{1}},\var^{\shift{2}},\ldots,\var^{\shift{n}}\right) \,
.
\]
For a matrix $\mat{P}$, we use $\cdeg{\mat{P}}$ and $\cdeg[\shifts]{\mat{P}}$
to denote respectively the list of its column degrees and the list of its
shifted $\shifts$-column degrees. For the \emph{uniform shift}
$\shifts=(0,\ldots,0)$, the shifted column degree specializes to the standard
column degree. Similarly, $\cdeg[-\shifts]{\mathbf{P}} \leq 0$ is equivalent to
$\deg(p_{ij}) \leq s_i$ for all $i$ and $j$, that is, $\shifts$ bounds the row
degrees of $\mathbf{P}$.

The shifted row degree of a row vector $\row{q} = [q_{1},\dots,q_{\matdim}]$ is
defined similarly as 
\[
  \rdeg[\shifts]{\mat{q}} = \max_{1\le i\le n}[\deg(q_{i})+\shift{i}] =
  \deg(\mathbf{q}\cdot \shiftMat{\shifts}).
\]
Shifted degrees have been used previously in polynomial matrix computations and
in generalizations of some matrix normal forms \cite{BLV:jsc06}. The shifted
column degree is equivalent to the notion of \emph{defect} commonly used in the
rational approximation literature.

Along with shifted degrees we also make use of the notion of a polynomial
matrix being column reduced. A full-rank polynomial matrix $\mat{A} =
[a_{ij}]_{i,j}$ is column reduced if its leading column coefficient matrix,
that is the matrix
\[
  \leadingMat{\mat{A}} = [\mathrm{coeff}( a_{ij}, x, d_j ) ]_{1\leq i, j \leq
  \matdim}, \text{ with } (d_1,\ldots,d_\matdim) = \cdeg{\mat{A}},
\]
has full rank. Then, the polynomial matrix $\mat{A}$ is $\shifts$-column
reduced if $\shiftMat{\shifts} \mat{A}$ is column reduced. The concept of
$\mat{A}$ being shifted row reduced is similar.

The usefulness of the shifted degrees can be seen from their applications in
polynomial matrix computation problems such as Hermite-Pad\'e and M-Pad\'e
approximations \cite{Beckermann92,BarBul92,BeLa94,ZL2012}, minimal kernel bases
\cite{za2012}, and shifted column reduction~\cite{BLV:jsc06,Neiger16}.

An essential fact needed in this paper, also based on the use of shifted
degrees, is the efficient multiplication of matrices with unbalanced degrees
\cite[Theorem 3.7]{za2012}. 

\begin{theorem}\label{thm:multiplyUnbalancedMatrices} 
  Let $\mat{A} \in \polMatSpace[m][n]$ with $m\le n$, $\shifts \in \N^\matdim$
  a shift with entries bounding the column degrees of $\mat{A}$, and $\xi$ a
  bound on the sum of the entries of $\shifts$. Let $\mat{B} \in
  \polMatSpace[n][k]$ with $k\in \bigO{m}$ and the sum $\theta$ of its
  $\shifts$-column degrees satisfying $\theta\in \bigO{\xi}$. Then we can
  multiply $\mat{A}$ and $\mat{B}$ with a cost of $\softO{ \matdim^{2}
  m^{\expmatmul-2} \lceil s \rceil } \subseteq \softO{ n^{\expmatmul} \lceil s
\rceil }$, where $s=\xi/n$ is the average of the entries of $\shifts$. 
\end{theorem}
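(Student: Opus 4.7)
The plan is to reduce the computation of $\mat{A}\mat{B}$ to the multiplication of two polynomial matrices with uniform and small degrees, for which the standard fast matrix-multiplication algorithms apply. The mechanism is the classical partial linearization trick, applied twice.

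First, set the block size $d = \lceil s \rceil \ge 1$, so that $d \le \xi/n + 1$. For each column $j$ of $\mat{A}$, split it $x^d$-adically as $\mat{A}_{*,j} = \sum_{\ell=0}^{r_j-1} x^{\ell d}\, \mat{A}^{(\ell)}_{*,j}$, with $\deg \mat{A}^{(\ell)}_{*,j} < d$ and $r_j = \lceil (s_j+1)/d\rceil$, and gather the pieces into a matrix $\expand{\mat{A}} \in \polMatSpace[m][N_1]$ of uniform column degree $< d$. Symmetrically, expand each row $j$ of $\mat{B}$ into $r_j$ copies, the $\ell$-th copy being $x^{\ell d}\, \mat{B}_{j,*}$, to obtain $\expand{\mat{B}} \in \polMatSpace[N_1][k]$. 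By construction $\expand{\mat{A}}\expand{\mat{B}} = \mat{A}\mat{B}$, and the elementary bound $r_j \le s_j/d + 2$ combined with $\sum_j s_j \le \xi$ gives $N_1 \le \xi/d + 2n = \bigO{n}$.

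Next, I balance the column degrees of $\expand{\mat{B}}$. A short calculation using $(r_j-1)d \le s_j$ together with $\deg(B_{jt}) \le \theta_t - s_j$ shows that column $t$ of $\expand{\mat{B}}$ still has degree at most $\theta_t$. Splitting each such column into $\lceil (\theta_t+1)/d\rceil$ pieces of degree $< d$ yields $\expand{\mat{B}}' \in \polMatSpace[N_1][N_2]$ of uniform column degree $< d$, together with a structured $N_2 \times k$ recombination matrix $\mat{F}$ whose only nonzero entries are pure powers of $x$. Since $\theta \in \bigO{\xi}$ and $k \in \bigO{m} \subseteq \bigO{n}$, one obtains $N_2 = \bigO{n}$, and by construction $\mat{A}\mat{B} = (\expand{\mat{A}}\expand{\mat{B}}')\,\mat{F}$.

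Finally, multiply $\expand{\mat{A}}$ of size $m \times \bigO{n}$ by $\expand{\mat{B}}'$ of size $\bigO{n} \times \bigO{n}$, all entries of degree $< d$. Using the standard block decomposition of rectangular matrix multiplication with $m \le n$, this takes $\bigO{(n/m)^2 m^\expmatmul} = \bigO{n^2 m^{\expmatmul - 2}}$ polynomial products in degree $< d$, hence $\softO{n^2 m^{\expmatmul - 2} d} = \softO{n^2 m^{\expmatmul-2}\lceil s\rceil}$ field operations via fast polynomial arithmetic. Multiplying the result by $\mat{F}$ is a shift-and-add that costs $\bigO{m\theta}$ field operations, absorbed in the previous bound. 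The main technical point is the second paragraph: without the sharp inequality $(r_j-1)d \le s_j$ the columns of $\expand{\mat{B}}$ would pick up an extra $\bigO{d}$ from each row duplication and the second linearization would produce too many columns for the cost analysis to go through.
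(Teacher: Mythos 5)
Your proof is correct, but note that the paper itself offers no proof of this statement: it is imported verbatim as \cite[Theorem~3.7]{za2012}, so there is nothing in-paper to compare against line by line. Your double partial linearization is a valid self-contained derivation, and the two load-bearing steps both check out: with $d=\lceil s\rceil$ the $x^d$-adic splitting of the columns of $\mat{A}$ gives $N_1\le \xi/d+2n\in\bigO{n}$ because $d\ge\xi/n$, and the inequality $(r_j-1)d\le s_j$ (which holds since $(r_j-1)d<s_j+1$ and both sides are integers) is exactly what keeps the degree of column $t$ of $\expand{\mat{B}}$ at most $\theta_t$, so that the second linearization yields $N_2\le\theta/d+2k\in\bigO{n}$ and the whole product reduces to $\bigO{(n/m)^2}$ multiplications of $m\times m$ matrices of degree less than $d$, for a total of $\softO{n^2m^{\expmatmul-2}\lceil s\rceil}$. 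The original argument in \cite{za2012} rests on the same partial-linearization idea but is organized differently, essentially splitting the columns of $\mat{B}$ into logarithmically many groups according to the magnitude of their $\shifts$-column degrees and balancing each group; your version replaces that case analysis by a second linearization together with the explicit recombination matrix $\mat{F}$, which is arguably cleaner. Two minor points to tidy up: $d=\lceil s\rceil\ge 1$ fails in the degenerate case $\xi=0$ (where $\mat{A}$ is constant and everything is trivial), so you should take $d=\max(1,\lceil s\rceil)$; and the cost of applying $\mat{F}$ is really $\bigO{m(\theta+dN_2)}$ rather than $\bigO{m\theta}$ once you account for the zero-padding of each degree-$<2d$ piece, though this is still absorbed in the main term.
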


\subsection{Shifted Kernel  and Column Bases}

The kernel of $\mat{A} \in \polMatSpace[m][n]$ is the $\polRing$-module
$\{\col{p} \in \polMatSpace[n][1] \;\;|\;\; \mat{A} \col{p} = 0\}$. Such a
module is free and of rank $k \le n$ \cite[Chapter~12, Theorem~4]{DumFoo04};
any of its bases is called a kernel basis of $\mat{A}$. In other words:

\begin{definition}
\label{def:kernelBasis}
Given $\mat{A} \in \polMatSpace[m][n]$, a polynomial matrix $\mat{N}\in
\polMatSpace[n][k]$ is a (right) kernel basis of $\mat{A}$ if the following
properties hold: 
\begin{enumerate}
  \item $\mat{N}$ has full rank, 
  \item $\mat{N}$ satisfies $\mat{A} \cdot \mat{N} = 0$,
  \item Any $\col{q} \in \polMatSpace[n][1]$ satisfying $\mat{A} \col{q} = 0$
    can be written as a linear combination of the columns of $\mat{N}$, that
    is, there exists $\col{p} \in \polMatSpace[k][1]$ such that
    $\col{q}=\mat{N}\col{p}$. 
\end{enumerate}
\end{definition}
\noindent
It is easy to show that any pair of kernel bases $\mat{N}$ and $\mat{M}$ of
$\mat{A}$ are unimodularly equivalent. An $\shifts$-minimal kernel basis of
$\mat{A}$ is a kernel basis that is $\shifts$-column reduced. 

\begin{definition}
  Given $\mat{A} \in \polMatSpace[m][n]$, a matrix $\mat{N} \in
  \polMatSpace[n][k]$ is an $\shifts$-minimal (right) kernel basis of $\mat{A}$
  if $\mat{N}$ is a kernel basis of $\mat{A}$ and $\mat{N}$ is $\shifts$-column
  reduced.
\end{definition}

A column basis of $\mat{A}$ is a basis of the $\polRing$-module $\{\mat{A}
\col{p} , \;\; \col{p} \in \polMatSpace[n][1]\}$, which is free of rank $r \le
n$. Such a basis can be represented as a full rank matrix $\mat{M} \in
\polMatSpace[m][r]$ whose columns are the basis elements. A column basis is not
unique and indeed any column basis right multiplied by a unimodular matrix
gives another column basis.

\begin{example}
  \label{eg:ctd_kernel_column_bases}
  Let 
  \[
    \mat{A} =
    \begin{bmatrix}
                      6x + 1 &    2x^3 + x^2 + 6x + 1 &                      3 \\
      4x^5 + 5x^4 + 4x^2 + x & 6x^5 + 5x^4 + 2x^3 + 4 & x^4 + 5x^3 + 6x^2 + 5x
    \end{bmatrix}
  \]
  be a $2 \times 3$ matrix over $\Z_7[x]$ having column degree $\shifts =
  (5,5,4)$.  Then a column basis $\mat{B}$, and a kernel basis $\mat{N}$, of
  $\mat{A}$ are given by
  \[
    \mat{B} =
    \begin{bmatrix}
      5x + 5 & 1 \\
           3 & 1
    \end{bmatrix}
\quad\text{and}\quad
    \mat{N} =
    \begin{bmatrix}
           6x^6 + 4x^5 + 5x^4 + 3x^3 + 4x^2 + 1 \\
                         4x^4 + 5x^3 + x^2 + 6x \\
      4x^7 + 4x^6 + 4x^5 + 4x^3 + 5x^2 + 3x + 2
    \end{bmatrix}.
  \]
  For example, if $\col{b}_1$ and $\col{b}_2$ denote the columns of $\mat{B}$
  then the third column of $\mat{A}$, denoted by $\col{a}_3$, is given by
  \[
    \col{a}_3  =  (4x^3 + 3x^2 + 6x + 5) \,\col{b}_1 \,+\, (x^4 + 4x^2 + x + 6) \,\col{b}_2 .
  \]
  Here $\cdeg[\shifts]{\mat{N}} = (11)$. In addition, the shifted leading coefficient matrix
  \[
  \leadingMat[\shifts]{\mat{N}} =
  \begin{bmatrix}
      6 \\
      0 \\
      4
  \end{bmatrix}
  \]
  has full rank, and hence we have that $\mat{N}$ is an $\shifts$-minimal
  kernel basis of $\mat{A}$. \qedexmp
\end{example}

Fast algorithms for kernel basis computation and column basis computation are
given in \cite{za2012} and in \cite{za2013}, respectively. In both cases they
make use of fast methods for order bases (often also referred to as minimal
approximant bases) \cite{BeLa94,Giorgi2003,ZL2009,ZL2012}. In what follows, we write
$\sumVec{\shifts}$ for the sum of the entries of a tuple $\shifts \in
\N^\matdim$ with nonnegative entries.

\begin{theorem}\label{thm:fastkercolbasis} 
  Let $\mat{A} \in \polMatSpace[m][n]$ with $m \le n$ and $m \in \Theta(n)$,
  and let $\shifts\in\N^\matdim$ be such that $\cdeg{\mat{A}} \le \shifts$
  componentwise. Then, there exist deterministic algorithms which compute
  \begin{itemize}
    \item[(i)] an $\shifts$-minimal kernel basis of $\mat{A}$ using
      $\softO{\matdim^{\expmatmul} \lceil s\rceil }$ field operations,
    \item[(ii)] a column basis of $\mat{A}$ using $\softO{ \matdim^{\expmatmul}
    \lceil s \rceil }$ field operations,
  \end{itemize}
    where $s=\sumVec{\shifts}/n$ is the average column degree of $\mat{A}$.
\end{theorem}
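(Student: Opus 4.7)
The proof uses fast minimal approximant basis computation (PM-Basis) as the atomic primitive: in $\softO{n^{\expmatmul - 1} m \sigma}$ field operations one can compute an $\shifts$-minimal approximant basis of order $\sigma$ for an $m\times n$ matrix. Applied in a single shot at order $\sigma \approx \sumVec{\shifts}$, this already yields a kernel basis, but at cost $\softO{n^{\expmatmul}\cdot \sumVec{\shifts}}$, which is a factor of $n$ too large. Both parts are therefore obtained by a divide-and-conquer scheme that keeps the shifted-degree budget under control, combined with the unbalanced matrix multiplication of Theorem~\ref{thm:multiplyUnbalancedMatrices}.

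For part (i), the plan is to split $\mat{A}$ horizontally into a top block $\mat{A}_u$ and a bottom block $\mat{A}_d$, each with about $m/2$ rows. Recursively compute an $\shifts$-minimal kernel basis $\mat{N}_u$ of $\mat{A}_u$; set $\tuple{t} = \cdeg[\shifts]{\mat{N}_u}$; form the product $\mat{A}_d \mat{N}_u$ using Theorem~\ref{thm:multiplyUnbalancedMatrices}; then recursively compute a $\tuple{t}$-minimal kernel basis $\mat{N}_d$ of $\mat{A}_d \mat{N}_u$, and return $\mat{N}_u \mat{N}_d$. The base case, where $m$ is at most a small constant, is handled by a direct PM-Basis call at a suitable order. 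Correctness follows from $\ker \mat{A} = \ker\mat{A}_u \cap \ker\mat{A}_d$ and from the preservation of shifted minimality when composing minimal kernel bases with matching shifts, so that the output is indeed an $\shifts$-minimal kernel basis of $\mat{A}$.

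For part (ii), one follows the same divide-and-conquer strategy, now splitting $\mat{A} = [\mat{A}_1 \mid \mat{A}_2]$ by columns. Recursively compute column bases $\mat{B}_1$ and $\mat{B}_2$ of the two halves; since the column $\polRing$-module of $\mat{A}$ is spanned by the columns of $[\mat{B}_1 \mid \mat{B}_2]$, it suffices to compute a column basis of that concatenation, which in turn can be extracted from a minimal kernel basis of $[\mat{B}_1 \mid \mat{B}_2]$ by discarding the columns corresponding to the relations encoded in the kernel, as detailed in~\cite{za2013}. Alternatively, a column basis can be read off directly from the approximant basis computed inside the kernel algorithm of part~(i). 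Either way, the cost analysis mirrors that of part~(i).

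The main obstacle, and the reason the final cost is $\softO{n^{\expmatmul}\lceil s\rceil}$ rather than $\softO{n^{\expmatmul+1}s}$, is the degree bookkeeping across the recursion. The crucial invariant is $\sumVec{\tuple{t}} \le \sumVec{\shifts}$ for the updated shift $\tuple{t} = \cdeg[\shifts]{\mat{N}_u}$; this follows from the predictable-degree property of $\shifts$-minimal bases together with the hypothesis $\cdeg{\mat{A}} \le \shifts$, and it ensures that the degree budget is non-increasing down the recursion. Each internal multiplication can then be charged to $\softO{n^{\expmatmul}\lceil s\rceil}$ by Theorem~\ref{thm:multiplyUnbalancedMatrices}, and summing a logarithmic number of such charges yields the stated bound. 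The ceiling emerges naturally from base cases in which the relevant shift has very small total sum, corresponding precisely to the regime $s = o(1)$ where a constant amount of work per row is unavoidable.
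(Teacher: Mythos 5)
First, a point of reference: the paper itself gives no proof of this theorem --- it is imported from \cite{za2012} (part~(i)) and \cite{za2013} (part~(ii)) --- so your proposal has to be measured against those cited algorithms. For part~(i), your skeleton (split the rows, recurse on $\mat{A}_u$ to get $\mat{N}_u$, form $\mat{A}_d\mat{N}_u$ via \Cref{thm:multiplyUnbalancedMatrices}, recurse with the updated shift $\tuple{t}=\cdeg[\shifts]{\mat{N}_u}$, return $\mat{N}_u\mat{N}_d$) is indeed the heart of \cite{za2012}, and the invariant $\sumVec{\tuple{t}}\le\sumVec{\shifts}$ is their Theorem~3.4. The genuine gap is in the cost accounting. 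The recursion tree has $\Theta(m)$ leaves, not a logarithmic number of nodes, and a leaf is a $1\times k$ problem in which $k$ may still be as large as $n$ while the shift sum may still be as large as $\sumVec{\shifts}=ns$; a ``direct PM-Basis call at a suitable order'' then means order $\Theta(ns)$, costing $\softO{n^{\expmatmul-1}\cdot ns}$ for that single leaf and $\softO{n^{\expmatmul+1}s}$ over all leaves --- exactly the factor-of-$n$ overshoot you set out to avoid. (A related issue affects the internal products: \Cref{thm:multiplyUnbalancedMatrices} requires the right factor to have $\bigO{m}$ columns for an $m$-row left factor, which fails at deep levels where the row dimension has shrunk but the column dimension has not.) The algorithm of \cite{za2012} escapes this by inserting, at each call, an approximant basis computation at the small order $\Theta(\sumVec{\shifts}/m)$: a degree count shows that all but $\bigO{m}$ columns of that basis already lie in the kernel, so the residual problem has column dimension $\bigO{m}$ before the row splitting is applied. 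This dimension-reduction step is what legitimizes ``one charge per level, logarithmically many levels,'' and it is absent from your sketch.

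For part~(ii), the column-splitting recursion is not the algorithm of \cite{za2013}, which splits by \emph{rows} and rests on the same block elimination as \Cref{lem:oneStepHermiteDiagonal} (a column basis of $\mat{A}$ is assembled from one of $\mat{A}_u$ and one of $\mat{A}_d\mat{N}$). More importantly, your extraction step is incorrect: a column basis of $[\mat{B}_1\;\,\mat{B}_2]$ cannot in general be obtained by ``discarding the columns corresponding to the relations encoded in the kernel,'' because a generating set of a $\polRing$-module need not contain a basis as a subset. Already for the $1\times 2$ matrix $[x^2 \;\; x^3+x]$ over $\field[x]$ the column module is $x\,\polRing$, and neither column is an associate of $x$, so no column subset is a column basis. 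The fallback you mention --- reading a column basis off the order basis computed inside the kernel algorithm --- is closer to what \cite{za2013} actually does, but as stated it is an appeal to the reference rather than a proof.
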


\subsection{The generic determinant degree bound}
\label{subsec:degdet}

For a nonsingular $n \times n$ matrix $\mat{A} \in \polMatSpace$, the degree of
the determinant of $\mat{A}$ provides a good measure of the size of the output
$\hermite$ in the case of Hermite form computation. Indeed, if we denote by
$\minDegs = (\minDeg_1,\ldots,\minDeg_\matdim)$ the degrees of the diagonal
entries of $\hermite$, then we have $\deg(\det(\mat{A})) = \minDeg_1 + \cdots +
\minDeg_\matdim.$ Since the diagonal entries are those of largest degree in their respective
rows, we directly obtain that $\hermite$ can be represented using $\matdim^2 +
\matdim \sumVec{\minDegs} = \matdim^2 + \matdim \deg(\det(\mat{A}))$ field
elements.

The size of the input $\mat{A}$ can be measured by several quantities, which
differ in how precisely they account for the distribution of the degrees in
$\mat{A}$. It is interesting to relate these quantities to the degree of the
determinant of $\mat{A}$, since the latter measures the size of the output
$\hermite$. A first, coarse bound is given by the maximum degree of the entries
of the matrix: $\mat{A}$ can be represented by $\matdim^2 + \matdim^2
\deg(\mat{A})$ field elements. On the other hand, by definition of the
determinant we have that $\det(\mat{A})$ has degree at most $\matdim
\deg(\mat{A})$. A second, finer bound can be obtained using  the \emph{average}
of the row degrees and of the column degrees: the size of $\mat{A}$ in terms of
field elements is at most $\matdim^2 + \matdim \min(\sumVec{\rdeg{\mat{A}}},
\sumVec{\cdeg{\mat{A}}})$. Again we have the related bound \[
  \deg(\det(\mat{A})) \le \min(\sumVec{\rdeg{\mat{A}}},
  \sumVec{\cdeg{\mat{A}}}).
\] 

An even finer bound on the size of $\mat{A}$ is given by the {\em generic
determinant bound}, introduced
in~\cite[Section~6]{GSSV2012}.
For $\mat{A} =[a_{ij}] \in \polMatSpace[\matdim]$,  this is
defined as
\begin{equation}
  \label{eqn:degDet}
  \degDet = \max_{\pi \in S_\matdim} \sum_{1 \le i\le \matdim} \overline{\deg}(a_{i,\pi_i})
\end{equation}
where $S_\matdim$ is the set of permutations of $\{1,\ldots,\matdim\}$, and where
\[
\overline{\deg}(p) = \left\{ \begin{array}{crr} 0 & \mbox{if} & p = 0 \\
                     \deg(p) & \mbox{if} & p \ne 0 \end{array} \right. .
                 \]
By definition, we have the inequalities 
\[\deg(\det(\mat{A})) \le \degDet \le \min(\sumVec{\rdeg{\mat{A}}},
\sumVec{\cdeg{\mat{A}}}) \le \matdim \deg(\mat{A}),\] and it is easily checked
that $\mat{A}$ can be represented using $\matdim^2 + 2 \matdim \degDet$ field
elements.

Thus in Hermite form computation both the input and the output have
average degree in $\bigO{\degDet/\matdim}$ and can be represented using
$\bigO{\matdim^2 \lceil\degDet/\matdim\rceil}$ field elements. Furthermore
$\degDet$ gives a more precise account of the degrees in $\mat{A}$ than the
average row and column degrees, and an algorithm with cost bound
$\softO{\matdim^\expmatmul \lceil \degDet/\matdim \rceil}$ is always faster,
sometimes significantly, than an algorithm with cost bound
$\softO{\matdim^\expmatmul \lceil s \rceil}$ where $s$ is the average column
degree or the average row degree, let alone $s = \deg(\mat{A})$.

\begin{remark}
  \label{rk:degdet}
Let us justify why this can sometimes be \emph{significantly} faster. We have
seen that $\degDet / \matdim$ is bounded from above by both the average column
degree and the average row degree of $\mat{A}$. It turns out that, in some important cases
$\degDet/\matdim$ may be substantially smaller than these averages. For
example, consider $\mat{A}$ with one row and one column of uniformly large
degree $d$ and all other entries of degree $0$:
\[
\mat{A} =
\begin{bmatrix}
  [d]    & [d] & \cdots & [d] \\
  [d]    & [0] & \cdots & [0] \\
  \vdots & \vdots & \ddots & \vdots \\
  [d]    & [0]    & \cdots & [0]
\end{bmatrix}
\;\; \in \polMatSpace .
\]
Here, the average row degree and the average column degree are both
exactly $d$ while the generic determinant bound is $d$ as well. Thus, here
$\degDet/\matdim = d/\matdim$ is much smaller than $d = \deg(\mat{A}) =
\min(\sumVec{\rdeg{\mat{A}}}/\matdim, \sumVec{\cdeg{\mat{A}}}/\matdim)$.
For similar examples, we refer the reader to \cite[Example~4]{GSSV2012} and
\cite[equation~(8)]{ZhLaSt15}.
\qedexmp
\end{remark}

\section{Determining the diagonal entries of a triangular form}
\label{sec:diagonals}

In this section we show how to determine the diagonal entries of a triangular
form of a nonsingular matrix $\mat{A} \in \polMatSpace$ with $\mat{A}$ having
column degrees $\shifts$. Our algorithm makes use of fast kernel and column
bases computations.

As mentioned in the introduction, we consider unimodularly transforming $\mat{A}$ to 
\begin{equation}
\label{eq:block_triangularization}
\mat{A} \mat{U}=\mat{B}=
\begin{bmatrix}\mat{B}_{1} & \mat{0}\\
\anyMat & \mat{B}_{2}
\end{bmatrix}
\end{equation}
which eliminates a top right block and gives two square diagonal blocks
$\mat{B}_{1}$ and $\mat{B}_{2}$ in $\mat{B}$. After this block
triangularization step, the matrix is now closer to being in triangular form.
Applying this procedure recursively to $\mat{B}_{1}$ and $\mat{B}_{2}$, until
the matrices reach dimension $1$, gives the diagonal entries of a triangular
form of $\mat{A}$. These entries are unique up to multiplication by a nonzero
constant from $\field$, and in particular making them monic yields the diagonal
entries of the Hermite form of~$\mat{A}$.

In this procedure, a major problem is that the degrees in the unimodular
multiplier $\mat{U}$ can be too large for efficient computation.  For example,
the matrix
  \[ \mat{A} = \begin{bmatrix}
      1 & 0 & 0 & \cdots & 0 \\
      -\var^d & 1 & 0 & \cdots & 0 \\
      0 & -\var^d & 1 & \cdots & 0 \\
      \vdots & \ddots & \ddots & \ddots & 0 \\
      0 & \cdots & 0 & -\var^d & 1
  \end{bmatrix} \in \polMatSpace \]
of degree $d>0$ is unimodular and hence its Hermite form is the identity. However the
corresponding unimodular multiplier is
  \[ \mat{U} =  \begin{bmatrix}
      1 & 0 & 0 & \cdots & 0 \\
      \var^d & 1 & 0 & \cdots & 0 \\
      \var^{2d} & \var^d & 1 & \cdots & 0 \\
      \vdots & \ddots & \ddots  & \ddots & 0 \\
      \var^{(\matdim-1)d} & \cdots & \var^{2d} & \var^d & 1
  \end{bmatrix}, ~~~~~~~~~~~~~~~\]
with the sum of the degrees in $\mat{U}$ being in $\Theta(\matdim^3 d)$, beyond
our target cost $\bigO{\matdim^\expmatmul d}$.

\subsection{Fast block elimination}

Our approach is to make use of fast kernel and column basis methods to
efficiently compute the diagonal blocks $\mat{B}_{1}$ and $\mat{B}_{2}$ while
at the same time avoiding the computation of all of $\mat{U}$.

Partition $\mat{A}=\begin{bmatrix}\mat{A}_{u}\\\mat{A}_{d} \end{bmatrix}$, with
$\mat{A}_{u}$ and $\mat{A}_{d}$ consisting of the upper $\lceil n/2\rceil$ and
lower $\left\lfloor n/2\right\rfloor $ rows of $\mat{A}$, respectively.  Then
both upper and lower parts have full-rank since $\mat{A}$ is assumed to be
nonsingular. By partitioning $\mat{U}=\begin{bmatrix}\mat{U}_{\ell} &
  \mat{U}_{r}\end{bmatrix}$, where the column dimension of $\mat{U}_{\ell}$
matches the row dimension of $\mat{A}_{u}$, then $\mat{A} \cdot \mat{U}=
\mat{B}$ becomes
\[
\begin{bmatrix}\mat{A}_{u}\\
\mat{A}_{d}
\end{bmatrix}
\begin{bmatrix}\mathbf{U}_{\ell} & \mathbf{U}_{r}\end{bmatrix}
=
\begin{bmatrix}\mat{B}_{1} & \mat{0} \\
\anyMat & \mat{B}_{2}
\end{bmatrix}.
\]
Notice that the matrix $\mat{B}_{1}$ is nonsingular and is therefore a column
basis of $\mat{A}_{u}$. As such this can be efficiently computed as mentioned
in \Cref{thm:fastkercolbasis}. In order to compute
$\mat{B}_{2}=\mat{A}_{d}\mat{U}_{r}$, notice that the matrix $\mat{U}_{r}$ is a
right kernel basis of $\mat{A}_u$, which makes the top right block of $\mat{B}$
zero. 

The following lemma states that the kernel basis $\mat{U}_{r}$ can be replaced
by any other kernel basis of $\mat{A}_u$ thus giving another unimodular matrix
that also works. 
\begin{lemma}
\label{lem:oneStepHermiteDiagonal}
Partition $\mat{A} = \begin{bmatrix}\mat{A}_{u}\\ \mat{A}_{d} \end{bmatrix}$
and suppose $\mat{B}_{1}$ is a column basis of $\mat{A}_{u}$ and $\mat{N}$ a
kernel basis of $\mat{A}_{u}$. Then there is a unimodular matrix
$\mat{U}=\begin{bmatrix} \anyMat & \mat{N}\end{bmatrix}$ such that
\[
  \mat{A}\mat{U} = \begin{bmatrix}\mat{B}_{1} & \mat{0} \\
  \anyMat & \mat{B}_{2}
\end{bmatrix},
\]
where $\mat{B}_{2}=\mat{A}_{d}\mat{N}$. If $\mat{A}$ is square and nonsingular,
then $\mat{B}_{1}$ and $\mat{B}_{2}$ are also square and nonsingular.
\end{lemma}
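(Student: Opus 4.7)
The plan is to construct $\mat{U}_\ell$ directly from the column-basis hypothesis and then verify that $\mat{U} = [\,\mat{U}_\ell \;\; \mat{N}\,]$ is unimodular. Since $\mat{B}_1$ is a column basis of $\mat{A}_u$, every column of $\mat{B}_1$ lies in the $\polRing$-column span of $\mat{A}_u$, so there exists a polynomial matrix $\mat{U}_\ell$ with $\mat{A}_u \mat{U}_\ell = \mat{B}_1$. With this choice, and using $\mat{A}_u \mat{N} = \mat{0}$, the product $\mat{A}\mat{U}$ is immediately block lower triangular of the desired shape with $\mat{B}_2 = \mat{A}_d \mat{N}$.

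The heart of the argument is showing that $\mat{U}$ is unimodular, which I would do by proving that the columns of $\mat{U}$ generate the entire free module $\polRing^{\matdim \times 1}$. Given any $\mat{w} \in \polRing^{\matdim \times 1}$, the column-basis property yields some $\mat{c}$ with $\mat{A}_u \mat{w} = \mat{B}_1 \mat{c} = \mat{A}_u (\mat{U}_\ell \mat{c})$, so $\mat{w} - \mat{U}_\ell \mat{c}$ lies in the kernel of $\mat{A}_u$. By the defining property of the kernel basis $\mat{N}$ there is some $\mat{d}$ with $\mat{w} - \mat{U}_\ell \mat{c} = \mat{N}\mat{d}$, giving $\mat{w} = \mat{U}\begin{bmatrix}\mat{c}\\\mat{d}\end{bmatrix}$. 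A column count, using that $\mat{U}_\ell$ has $\rank(\mat{A}_u)$ columns while $\mat{N}$ has $\matdim - \rank(\mat{A}_u)$ columns, shows $\mat{U}$ is square. A square polynomial matrix whose columns $\polRing$-generate $\polRing^{\matdim \times 1}$ admits a polynomial right inverse (apply the generation property to each standard basis vector), and any square matrix over $\polRing$ with a one-sided polynomial inverse must have unit determinant; hence $\mat{U}$ is unimodular.

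For the final assertion, when $\mat{A}$ is square and nonsingular, both $\mat{A}_u$ and $\mat{A}_d$ have full row rank, since any nontrivial left dependence on either block would extend by zero to one on $\mat{A}$. Thus $\mat{B}_1$ is square of dimension $\lceil \matdim/2\rceil$ and nonsingular by the definition of a column basis, and $\mat{B}_2$ is the complementary square block of size $\lfloor \matdim/2\rfloor$. Taking determinants in $\mat{A}\mat{U} = \mat{B}$ yields $\det(\mat{A})\det(\mat{U}) = \det(\mat{B}_1)\det(\mat{B}_2)$; since the left-hand side is nonzero and $\det(\mat{B}_1) \neq 0$, we conclude $\det(\mat{B}_2) \neq 0$.

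The main obstacle I anticipate is pinning down the passage from ``columns of $\mat{U}$ generate $\polRing^{\matdim \times 1}$'' to ``$\mat{U}$ is unimodular''; everything else is straightforward bookkeeping on the kernel-basis and column-basis definitions together with a determinantal computation.
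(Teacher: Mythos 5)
Your proof is correct. Note that the paper itself does not argue this lemma at all: its ``proof'' is a one-line citation to Lemma~3.1 of Zhou--Labahn's column basis paper, so what you have written is a self-contained replacement for that citation rather than a variant of an argument in this paper. Your route is the natural module-theoretic one and all the steps check out: $\mat{U}_\ell$ exists because the columns of $\mat{B}_1$ lie in the column module of $\mat{A}_u$; the surjectivity argument (write $\mat{A}_u\mat{w} = \mat{B}_1\mat{c}$, observe $\mat{w}-\mat{U}_\ell\mat{c}$ is in the kernel, expand it over $\mat{N}$) correctly shows the columns of $\mat{U}$ generate $\polRing^{\matdim\times 1}$; the column count $\rank(\mat{A}_u) + (\matdim - \rank(\mat{A}_u)) = \matdim$ makes $\mat{U}$ square, and a square matrix with a polynomial right inverse has $\det(\mat{U})\det(\mat{V})=1$, hence constant nonzero determinant. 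The final determinantal argument for the nonsingularity of $\mat{B}_2$ is also fine. Two cosmetic remarks: the lemma is stated for an arbitrary row partition of $\mat{A}$, so the specific dimension $\lceil \matdim/2\rceil$ you quote for $\mat{B}_1$ belongs to the algorithm, not to the lemma --- your argument does not use it, so nothing breaks; and the rank counts for the column module and the kernel module (that they are free of ranks $\rank(\mat{A}_u)$ and $\matdim-\rank(\mat{A}_u)$ respectively) are asserted rather than derived, but these are exactly the facts the paper records in its preliminaries, so relying on them is legitimate.
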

\begin{proof}
  This follows from \cite[Lemma~3.1]{za2013}.
\end{proof}

Note that we do not compute the blocks represented by the symbol $\anyMat$.
Thus \Cref{lem:oneStepHermiteDiagonal} allows us to determine $\mat{B}_{1}$ and
$\mat{B}_{2}$ independently without computing the unimodular matrix. This
procedure for computing the diagonal entries is presented in
\Cref{alg:hermiteDiagonal}. Formally the cost of this algorithm is given in
\Cref{prop:diagonalCost}.

\begin{algorithm}[t]
\caption{$\hermiteDiagonal(\mat{A})$}
\label{alg:hermiteDiagonal}

\begin{algorithmic}[1]
\REQUIRE{$\mat{A} \in \polMatSpace$ nonsingular.
}

\ENSURE{$\row{d} \in \polRing^\matdim$ the list of diagonal entries of the
Hermite normal form of $\mat{A}$.}

\IF{$n=1$}
\STATE{write $\mat{A} = \lambda \row{d}$ with $\lambda\in\field$ and $\row{d} \in\polRing$ monic;}
\RETURN{$\row{d}$;}
\ENDIF

\STATE{Partition $\mat{A} := \begin{bmatrix}\mat{A}_{u}\\ \mat{A}_{d}
\end{bmatrix}$, where $\mat{A}_{u}$ consists of the top $\left\lceil
n/2\right\rceil $ rows of $\mat{A}$;}

\STATE{$\mat{B}_{1}:=\colBasis(\mat{A}_{u})$;}

\STATE{$\mat{N}:=\mkb(\mat{A}_{u},\cdeg{\mat{A}})$;}

\STATE{$\mat{B}_{2}:=\mat{A}_{d}\mat{N}$;}

\STATE{$\row{d}_{1}:=\hermiteDiagonal(\mat{B}_{1})$;}
\STATE{$\row{d}_{2}:=\hermiteDiagonal(\mat{B}_{2})$;}

\RETURN{$\left[\row{d}_{1},\row{d}_{2}\right]$;}
\end{algorithmic}
\end{algorithm}

\subsection{Computational cost and example}

Before giving a cost bound for our algorithm, let us observe its correctness on
an example.

\begin{example}
  \label{eg:ctd_hermitediagonal}
Let
\[
  \mat{A} =
  \begin{bmatrix}
                    6x + 1 &       2x^3 + x^2 + 6x + 1 &                      3 \\
    4x^5 + 5x^4 + 4x^2 + x &    6x^5 + 5x^4 + 2x^3 + 4 & x^4 + 5x^3 + 6x^2 + 5x \\
                         2 & 2x^5 + 5x^4 + 5x^3 + 6x^2 &                      6
  \end{bmatrix} ,
\]
working over $\mathbb{Z}_7[x]$. Considering the matrix $\mat{A}_u$ formed by
the top two rows of $\mat{A}$, then a column basis $\mat{B}_1$ and kernel
basis $\mat{N}$ of $\mat{A}_u$ were given in \Cref{eg:ctd_kernel_column_bases}.
If $\mat{A}_d$ denotes the bottom row of $\mat{A}$, then this gives diagonal
blocks
\[
\mat{B}_1 =
\begin{bmatrix}
  5x + 5 & 1 \\
       3 & 1
\end{bmatrix} 
\]
and
\[\mat{B}_2 = \mat{A}_d \mat{N} =
\begin{bmatrix}
  x^9 + 2x^8 + x^7 + 4x^6 + 6x^5 + 4x^4 + 3x^3 + 3x^2 + 4x
\end{bmatrix} .
\]
Recursively computing with $\mat{B}_1$, we obtain a column basis and kernel
basis of the top row $\mat{B}_{1,u}$ of $\mat{B}_1$, as
\[
\tilde{\mat{B}}_1 =
\begin{bmatrix}
  1
 \end{bmatrix}
 \quad\text{and}\quad
  \tilde{\mat{N}} =
  \begin{bmatrix} 
          1 \\
     2x + 2
  \end{bmatrix} .
\]
If $\mat{B}_{1,d}$ denote the bottom row of $\mat{B}_1$, we get
$\tilde{\mat{B}}_2 = \mat{B}_{1,d} \, \tilde{\mat{N}} = \begin{bmatrix} 2x + 5
\end{bmatrix}$, which gives the second diagonal block from $\mat{B}_1$. Thus we
have the diagonal entries of a triangular form of $\mat{B}_1$. On the other
hand, since $\mat{B}_2$ is already a $1\times 1$ matrix we do not need to do
any extra work. As a result we have that $\mat{A}$ is unimodularly equivalent
to 
\[
  \begin{bmatrix}
    1 &       \\
    * & 2x+5     \\
    * &    * & x^9 + 2x^8 + x^7 + 4x^6 + 6x^5 + 4x^4 + 3x^3 + 3x^2 + 4x
  \end{bmatrix},
\]
giving, up to making them monic, the diagonal entries of the Hermite form of
$\mat{A}$. \qedexmp
\end{example}

\begin{proposition}\label{prop:diagonalCost}
\Cref{alg:hermiteDiagonal} costs $\softO{\matdim^{\expmatmul} \lceil s \rceil}$
field operations to compute the diagonal entries of the Hermite normal form of
a nonsingular matrix $\mat{A} \in \polMatSpace$, where $s
=\sumVec{\cdeg{\mat{A}}}/\matdim$ is the average column degree of $\mat{A}$. 
\end{proposition}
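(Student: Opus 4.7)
\emph{Correctness} will follow by induction on $\matdim$ from \Cref{lem:oneStepHermiteDiagonal}: in the inductive step the lemma produces a unimodular $\mat{U}$ such that $\mat{A}\mat{U}$ is block lower triangular with diagonal blocks $\mat{B}_1$ and $\mat{B}_2$, and since unimodular column operations preserve the Hermite diagonal of the underlying $\polRing$-lattice, concatenating the diagonals returned for $\mat{B}_1$ and $\mat{B}_2$ yields that of $\mat{A}$. The base case $\matdim = 1$ is immediate from the explicit scaling in the algorithm.

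The heart of the proof is the cost bound. The plan is to establish a \emph{column-degree invariant} ensuring that the recursion does not blow up, namely
\[
\sumVec{\cdeg{\mat{B}_1}} + \sumVec{\cdeg{\mat{B}_2}} \le \sumVec{\cdeg{\mat{A}}}.
\]
Setting $\shifts = \cdeg{\mat{A}}$, the $\mat{B}_2$ contribution is easy: since the column degrees of $\mat{A}_d$ are bounded entrywise by $\shifts$, the bound $\cdeg{\matcol{\mat{B}_2}{j}} \le \cdeg[\shifts]{\matcol{\mat{N}}{j}}$ holds column-by-column, so $\sumVec{\cdeg{\mat{B}_2}} \le \sumVec{\cdeg[\shifts]{\mat{N}}}$. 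The complementary bound $\sumVec{\cdeg{\mat{B}_1}} + \sumVec{\cdeg[\shifts]{\mat{N}}} \le \sumVec{\shifts}$ will come from $\shifts$-minimality of $\mat{N}$ combined with the fact that the column basis routine of~\cite{za2013} returns a column-reduced $\mat{B}_1$, so that its column-degree sum equals $\deg(\det(\mat{B}_1))$. This is the main technical obstacle: it amounts to a ``degree conservation'' statement between a reduced column basis and an $\shifts$-minimal kernel basis of the common matrix $\mat{A}_u$, which one derives by factoring $\mat{A}_u = \mat{B}_1 \mat{E}$ and relating the shifted degrees of $\mat{E}$ to those of a complement of $\mat{N}$ in a unimodular matrix.

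Granted the invariant, the analysis proceeds by unrolling the recursion. At the top level the three non-recursive steps each cost $\softO{\matdim^\expmatmul \lceil s \rceil}$: the column and kernel bases by \Cref{thm:fastkercolbasis}, and the product $\mat{A}_d\mat{N}$ by \Cref{thm:multiplyUnbalancedMatrices}, whose hypotheses are satisfied because $\cdeg{\mat{A}_d} \le \shifts$ and $\sumVec{\cdeg[\shifts]{\mat{N}}} \in \bigO{\sumVec{\shifts}}$. Writing $T(\matdim,\sigma)$ for the worst-case cost on an input of dimension $\matdim$ with total column-degree sum at most $\sigma$, the recurrence
\[
T(\matdim,\sigma) \le T(\lceil \matdim/2\rceil,\sigma_1) + T(\lfloor \matdim/2 \rfloor,\sigma_2) + \softO{\matdim^\expmatmul \lceil \sigma/\matdim \rceil},\quad \sigma_1 + \sigma_2 \le \sigma,
\]
unrolls into $O(\log \matdim)$ levels. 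Because $\expmatmul > 2$, the per-level cost is bounded by a geometric series in the subproblem dimension and sums to $\softO{\matdim^\expmatmul \lceil \sigma/\matdim \rceil}$; the ceiling subsumes the $\softO{\matdim^\expmatmul}$ contribution of leaves when $s = o(1)$. Specializing $\sigma = \sumVec{\cdeg{\mat{A}}} = \matdim s$ yields the announced bound.
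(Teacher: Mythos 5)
Your overall architecture --- per-call costs from \Cref{thm:fastkercolbasis} and \Cref{thm:multiplyUnbalancedMatrices}, then a divide-and-conquer recurrence controlled by a degree invariant --- is the same as the paper's, and your treatment of $\mat{B}_2$ (column degrees of $\mat{A}_d\mat{N}$ bounded columnwise by the $\shifts$-column degrees of $\mat{N}$, whose sum is in $\bigO{\xi}$ for $\xi = \sumVec{\cdeg{\mat{A}}}$) matches the paper exactly. The gap is in the invariant you choose to carry through the recursion. You require the additive conservation $\sumVec{\cdeg{\mat{B}_1}} + \sumVec{\cdeg{\mat{B}_2}} \le \sumVec{\cdeg{\mat{A}}}$, reduce it to $\sumVec{\cdeg{\mat{B}_1}} + \sumVec{\cdeg[\shifts]{\mat{N}}} \le \sumVec{\shifts}$, and then explicitly leave that inequality as ``the main technical obstacle,'' to be derived from a factorization argument. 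That inequality is not established anywhere in your proposal, it is not among the black-box facts cited from \cite{za2012,za2013} in this paper, and your sketch of it leans on $\sumVec{\cdeg{\mat{B}_1}} = \deg(\det(\mat{B}_1))$, which itself requires $\mat{B}_1$ to be column reduced --- a property of the specific algorithm of \cite{za2013} that you would also have to justify. As written, the central step of your cost analysis is therefore missing.

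The fix is that the additive invariant is unnecessary: the paper only uses the two separate bounds $\sumVec{\cdeg{\mat{B}_1}} \le \xi$ (from the proof of Lemma~5.5 in \cite{za2013}, the column-degree sum of the computed column basis is at most that of $\mat{A}_u$, hence at most $\xi$) and $\sumVec{\cdeg{\mat{B}_2}} \le \sumVec{\cdeg[\shifts]{\mat{N}}} \le \xi$ (from Lemma~3.1 and Theorem~3.4 of \cite{za2012}). In your notation this gives only $\sigma_1 \le \sigma$ and $\sigma_2 \le \sigma$, so the total degree sum can double at each level of the recursion; but the recurrence still closes because the dimension halves while $\xi$ stays fixed: level $k$ has $2^k$ subproblems of dimension $\matdim/2^k$, each with column-degree sum at most $\xi$, for a cost of $2^k \cdot \softO{(\matdim/2^k)^{\expmatmul-1}\xi + (\matdim/2^k)^{\expmatmul}}$, which is geometrically decreasing in $k$ as soon as $\expmatmul - 1 > 1$, and contributes only an extra logarithmic factor when $\expmatmul = 2$, absorbed by the soft-$\mathcal{O}$ (so your appeal to $\expmatmul > 2$ is also slightly too restrictive). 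Substituting these directly citable per-block bounds for your conservation claim closes the gap without changing anything else in your argument.
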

\begin{proof}
The three main operations are computing a column basis of $\mat{A}_{u}$,
computing a kernel basis $\mat{N}$ of $\mat{A}_{u}$, and multiplying the
matrices $\mat{A}_{d}\mat{N}$. Let $\shifts$ denote the column degree of
$\mat{A}$ and set $\xi = \sumVec{\shifts}$, an integer used to measure size for
our problem.

For the column basis computation, by \Cref{thm:fastkercolbasis} (see also
\cite[Theorem~5.6]{za2013}) we know that a column basis $\mat{B}_{1}$ of
$\mat{A}_{u}$ can be computed with a cost of $\softO{\matdim^{\expmatmul}
\lceil s \rceil }$, where $s = \xi/\matdim$. Furthermore, the sum of the column
degrees of the computed $\mat{B}_{1}$ is bounded by the sum of the column
degrees of $\mat{A}_u$ (see \cite{za2013}, in particular the proof of Lemma~5.5
therein). Thus, since $\cdeg{\mat{A}_u} \le \shifts$ componentwise, the sum of
the column degrees of $\mat{B}_1$ is at most $\xi$.

Similarly, according to \Cref{thm:fastkercolbasis} (see also
\cite[Theorem~4.1]{za2012}), computing an $\shifts$-minimal kernel basis
$\mat{N}$ of $\mat{A}_{u}$ costs $\softO{\matdim^\expmatmul \lceil s \rceil}$
operations, and the sum of the $\shifts$-column degrees of the output kernel
basis $\mat{N}$ is bounded by $\xi$~\cite[Theorem~3.4]{za2012}.
 
For the matrix multiplication $\mat{A}_{d}\mat{N}$, we have that the sum of the
column degrees of $\mat{A}_{d}$ and the sum of the $\shifts$-column degrees of
$\mat{N}$ are both bounded by $\xi$. Therefore
\Cref{thm:multiplyUnbalancedMatrices} applies and the multiplication can be
done with a cost of $\softO{\matdim^{\expmatmul} \lceil s \rceil}$.  Furthermore, 
since the entries of $\vec s$ bounds the corresponding column degrees of $\mat{A}_d$, 
according to \cite[Lemma~3.1]{za2012}, we have that
the column degrees of $\mat{B}_2 = \mat{A}_{d}\mat{N}$ are bounded by the $\shifts$-column degrees 
of $\mat{N}$. In particular, the sum of the column degrees of $\mat{B}_2$ is at most $\xi$.

If we let the cost of \Cref{alg:hermiteDiagonal} be $g(\matdim)$ for an input
matrix of dimension $\matdim$ then
\begin{eqnarray*}
  g(\matdim) & \in & \softO{\matdim^{\expmatmul} \lceil s \rceil} +
g(\lceil \matdim/2\rceil)+g(\lfloor \matdim/2\rfloor ).
\end{eqnarray*}
As $s = \xi / \matdim$ depends on $\matdim$ we use $\softO{\matdim^{\expmatmul}
\lceil s \rceil} = \softO{\matdim^{\expmatmul} ( s + 1) } =
\softO{\matdim^{\expmatmul-1} \xi + \matdim^\expmatmul}$ with $\xi$ not
depending on $\matdim$. Then we solve the recurrence relation as
\begin{align*}
  g(\matdim) & \in \softO{\matdim^{\expmatmul-1} \xi + \matdim^\expmatmul}
  + g(\lceil \matdim/2\rceil)+g(\lfloor \matdim/2\rfloor )\\
  & \subseteq \softO{\matdim^{\expmatmul-1} \xi + \matdim^\expmatmul} + 2 g(\lceil \matdim/2\rceil )\\
  & \subseteq \softO{\matdim^{\expmatmul-1} \xi + \matdim^\expmatmul} = \softO{\matdim^{\expmatmul} \lceil s \rceil}.
  \qedhere
\end{align*}
\end{proof}

In this cost bound, we do not detail the logarithmic factors because it is not
clear to us for the moment how many logarithmic factors arise from the calls to
the kernel basis and column basis algorithms of \cite{za2012,za2013}, where
they are not reported. Yet, from the recurrence relation above, it can be
observed that no extra logarithmic factor will be introduced if $\expmatmul>2$,
while an extra factor logarithmic in $\matdim$ will be introduced if
$\expmatmul=2$.

\section{Efficient Determinant Computation}
\label{sec:determinant}

In this section, we show how to recursively and efficiently compute the
determinant of a nonsingular matrix $\mat{A}\in\polMatSpace$ having column
degrees $\shifts$. Our algorithm follows a strategy similar to the recursive
block triangularization in \Cref{sec:diagonals}, making use of fast kernel
basis and column basis computation.

Indeed, after unimodularly transforming $\mat{A}$ to 
\[\mat{A} \mat{U} = \mat{B} =
\begin{bmatrix}
    \mat{B}_1 & \mat{0} \\
    \anyMat & \mat{B}_2
\end{bmatrix}\]
as in \cref{eq:block_triangularization}, the determinant
of $\mat{A}$ can be computed as 
\begin{equation}
  \label{eq:determinantFromDiagonalBlocks}
  \det(\mat{A})=\frac{\det(\mat{B})}{\det(\mat{U})}=
  \frac{\det(\mat{B}_{1})\det(\mat{B}_{2})}{\det(\mat{U})},
\end{equation}
which requires us to first compute $\det(\mat{B}_{1})$, $\det(\mat{B}_{2})$,
and $\det(\mat{U})$. The same procedure can then be applied to compute the
determinant of $\mat{B}_{1}$ and the determinant of $\mat{B}_{2}$. However, as 
$\mat{U}$ is unimodular we will handle its determinant differently. This can be
repeated recursively until the dimension becomes $1$.

One major obstacle for efficiency of this approach is that we do want to
compute the scalar $\det(\mat{U})$, and as noted in \Cref{sec:diagonals}, the
degrees of the unimodular matrix $\mat{U}$ can be too large for efficient
computation.
To sidestep this issue, we will show that $\det(\mat{U})$ can be computed with
only partial knowledge of the matrix $\mat{U}$. Combining this with the method
of \Cref{sec:diagonals} to compute the matrices $\mat{B}_{1}$ and $\mat{B}_{2}$
without computing all of $\mat{B}$ and $\mat{U}$, we obtain an efficient
recursive algorithm.

\begin{remark}
  \label{rk:det_from_diagonal}
  In some cases, the computation of the determinant is easily done from the
  diagonal entries of a triangular form. Indeed, let $\mat{A} \in \polMatSpace$
  be nonsingular and assume that we have computed the diagonal entries
  $h_{11},\ldots,h_{\matdim \matdim}$ of its Hermite form. Then, $\det(\mat{A})
  = \lambda h_{11} \cdots h_{\matdim\matdim}$ for some nonzero constant
  $\lambda\in\field$. If the constant coefficient of $h_{11} \cdots
  h_{\matdim\matdim}$ is nonzero, we can retrieve $\lambda$ by computing the
  constant coefficient of $\det(\mat{A})$, which is found by $\field$-linear
  algebra using $\bigO{\matdim^\expmatmul}$ operations since $\det(\mat{A})(0)
  = \det(\mat{A}(0))$. More generally, if we know $\alpha \in\field$ such that
  $h_{11}(\alpha) \cdots h_{\matdim\matdim}(\alpha) \neq 0$, then we can deduce
  $\det(\mat{A})$ efficiently. Yet, this does not lead to a fast deterministic
  algorithm in general since it may happen that $\det(\mat{A})(\alpha)=0$ for
  all field elements $\alpha$, or that finding $\alpha$ with $h_{11}(\alpha)
  \cdots h_{\matdim\matdim}(\alpha) \neq 0$ is a difficult task.
\qedexmp
\end{remark}

We now focus on computing the determinant of $\mat{U}$, or equivalently, the
determinant of $\mat{V}=\mat{U}^{-1}$. The column basis computation from
\cite{za2013} for computing the $m \times m$ diagonal block $\mat{B}_{1}$ also
gives $\mat{U}_{r}$, the matrix consisting of the right $(\matdim-m)$ columns
of $\mat{U}$, which is a right kernel basis of $\mat{A}_{u}$. In fact, this
column basis computation also gives a right factor multiplied with the column
basis $\mat{B}_{1}$ to give $\mat{A}_{u}$. The following lemma shows that this
right factor coincides with the matrix $\mat{V}_{u}$ consisting of the top $m$
rows of $\mat{V}$. The column basis computation therefore gives both
$\mat{U}_{r}$ and $\mat{V}_{u}$ with no additional work.

\begin{lemma}
Let $m$ be the dimension of $\mat{B}_{1}$. The matrix $\mat{V}_{u} \in
\polMatSpace[m][\matdim]$ satisfies $\mat{B}_{1} \mat{V}_{u}=\mat{A}_{u}$ if
and only if $\mat{V}_{u}$ is the submatrix of $\mat{V}=\mat{U}^{-1}$ formed by
its top $m$ rows.
\end{lemma}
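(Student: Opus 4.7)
The plan is to derive both directions by starting from the identity $\mat{A} = \mat{B}\mat{V}$, which holds since $\mat{V} = \mat{U}^{-1}$, and then exploiting the block lower triangular shape of $\mat{B}$ together with the nonsingularity of $\mat{B}_1$.

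First I would write $\mat{V} = \begin{bmatrix} \mat{V}_u \\ \mat{V}_d \end{bmatrix}$ where $\mat{V}_u$ consists of the top $m$ rows and $\mat{V}_d$ of the bottom $\matdim - m$ rows. Multiplying out
\[
\mat{A} = \mat{B} \mat{V} = \begin{bmatrix} \mat{B}_1 & \mat{0} \\ \anyMat & \mat{B}_2 \end{bmatrix} \begin{bmatrix} \mat{V}_u \\ \mat{V}_d \end{bmatrix} = \begin{bmatrix} \mat{B}_1 \mat{V}_u \\ \anyMat\, \mat{V}_u + \mat{B}_2 \mat{V}_d \end{bmatrix},
\]
and extracting the top $m$ rows immediately yields $\mat{A}_u = \mat{B}_1 \mat{V}_u$. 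This settles the ``only if'' direction: the top block of $\mat{V}=\mat{U}^{-1}$ does satisfy the stated equation.

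For the ``if'' direction, the key observation is that $\mat{B}_1$ is nonsingular, being a column basis of $\mat{A}_u$ which has full row rank $m$ (since $\mat{A}$ is nonsingular). Therefore the equation $\mat{B}_1 \mat{X} = \mat{A}_u$ admits at most one solution $\mat{X} \in \polMatSpace[m][\matdim]$, namely $\mat{X} = \mat{B}_1^{-1} \mat{A}_u$. The ``only if'' part just established that the top $m$ rows of $\mat{V}=\mat{U}^{-1}$ form such a solution; hence any $\mat{V}_u$ satisfying $\mat{B}_1 \mat{V}_u = \mat{A}_u$ must coincide with the top $m$ rows of $\mat{V}$.

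There is no real obstacle here: both directions reduce to the block product computation plus the uniqueness provided by the invertibility of $\mat{B}_1$ over $\polRing$'s fraction field. The main point to make explicit in the write-up is why $\mat{B}_1$ is nonsingular, which follows from $\mat{B}_1$ being a column basis of the full-row-rank matrix $\mat{A}_u$, or alternatively from the factorization $\det(\mat{A}) = \det(\mat{U})^{-1} \det(\mat{B}_1)\det(\mat{B}_2)$ with $\det(\mat{A})$ and $\det(\mat{U})$ both nonzero.
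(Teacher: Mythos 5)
Your proposal is correct and follows essentially the same route as the paper, whose entire proof is the block identity $\mat{B}\mat{V}=\mat{A}$ read off in the top $m$ rows; you merely make explicit the uniqueness argument (via nonsingularity of $\mat{B}_1$) that the paper leaves implicit for the converse direction. No issues.
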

\begin{proof}
The proof follows directly from 
\[
\mat{B} \mat{V} =
\begin{bmatrix}
  \mat{B}_{1} & \mat{0}\\
  \anyMat & \mat{B}_{2}
\end{bmatrix}
\begin{bmatrix}
  \mat{V}_{u}\\
  \mat{V}_{d}
\end{bmatrix}
=
\begin{bmatrix}
  \mat{A}_{u}\\
  \mat{A}_{d}
\end{bmatrix}
=\mat{A} \qedhere.
\]
\end{proof}

While the determinant of $\mat{V}$ or the determinant of $\mat{U}$ is needed to
compute the determinant of $\mat{A}$, a major problem is that we do not know
$\mat{U}_{\ell}$ or $\mat{V}_{d}$, which may not be efficiently computed due to
their possibly large degrees. This means we need to compute the determinant of
$\mat{V}$ or $\mat{U}$ without knowing the complete matrix $\mat{V}$ or
$\mat{U}$. The following lemma shows how this can be done using just
$\mat{U}_{r}$ and $\mat{V}_{u}$, which are obtained from the computation of the
column basis $\mat{B}_{1}$.

\begin{lemma}
\label{lem:scalingToDeterminant}
Let $\mat{U} = \begin{bmatrix} \mat{U}_{\ell} & \mat{U}_{r}\end{bmatrix}$ and
$\mat{A}$ satisfy, as before, 
\[
\mat{A} \mat{U}=\begin{bmatrix}
  \mat{A}_{u}\\
  \mat{A}_{d}
\end{bmatrix}
\begin{bmatrix}
  \mat{U}_{\ell} & \mat{U}_{r}
\end{bmatrix}
=\begin{bmatrix}
  \mat{B}_{1} & \mat{0}\\
  \anyMat & \mat{B}_{2}
\end{bmatrix}=\mat{B},
\]
where the row dimension of $\mat{A}_{u}$, the column dimension of
$\mat{U}_{\ell}$, and the dimension of $\mat{B}_{1}$ are $m$.  Let
$\mat{V}=\begin{bmatrix}\mat{V}_{u}\\ \mat{V}_{d}
\end{bmatrix}$ be the inverse of $\mat{U}$ with $m$ rows in $\mat{V}_{u}$ and $\mat{U}_{\ell}^{*} \in \polMatSpace[\matdim][m]$ be a matrix such that
$\mat{U}^{*} = \begin{bmatrix} \mat{U}_{\ell}^{*} & \mat{U}_{r}\end{bmatrix}$
is unimodular. Then $\mat{V}_{u}\mat{U}_{\ell}^{*}$ is unimodular and
\[
\det(\mat{A}) = \frac{\det(\mat{B})\det(\mat{V}_{u}\mat{U}_{\ell}^{*})}{\det(\mat{U}^{*})}.
\]
\end{lemma}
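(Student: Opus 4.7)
The plan is to compute the determinant of the auxiliary product $\mat{V} \mat{U}^*$ by exploiting the block structure, and then convert this into a relation between $\det(\mat{A})$, $\det(\mat{B})$, $\det(\mat{V}_u \mat{U}_\ell^*)$, and $\det(\mat{U}^*)$. The key observation is that although we do not know $\mat{U}_\ell$ (or equivalently $\mat{V}_d$), all we really need in order to recover $\det(\mat{U})$ is any completion of $\mat{U}_r$ into a unimodular matrix, and $\mat{U}_\ell^*$ is provided precisely for this purpose.

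First, I would write out the identity $\mat{V}\mat{U} = \idMat_\matdim$ blockwise using the partitions $\mat{V} = \trsp{[\trsp{\mat{V}_u}\;\trsp{\mat{V}_d}]}$ and $\mat{U} = [\mat{U}_\ell\;\mat{U}_r]$. This yields the four identities $\mat{V}_u \mat{U}_\ell = \idMat_m$, $\mat{V}_u \mat{U}_r = \mat{0}$, $\mat{V}_d \mat{U}_\ell = \mat{0}$, and $\mat{V}_d \mat{U}_r = \idMat_{\matdim-m}$. Next I would compute
\[
\mat{V}\mat{U}^* = \begin{bmatrix}\mat{V}_u\\\mat{V}_d\end{bmatrix}\begin{bmatrix}\mat{U}_\ell^* & \mat{U}_r\end{bmatrix} = \begin{bmatrix}\mat{V}_u\mat{U}_\ell^* & \mat{V}_u\mat{U}_r \\ \mat{V}_d\mat{U}_\ell^* & \mat{V}_d\mat{U}_r\end{bmatrix} = \begin{bmatrix}\mat{V}_u\mat{U}_\ell^* & \mat{0} \\ \mat{V}_d\mat{U}_\ell^* & \idMat_{\matdim-m}\end{bmatrix},
\]
using the two relations $\mat{V}_u\mat{U}_r = \mat{0}$ and $\mat{V}_d\mat{U}_r = \idMat_{\matdim-m}$. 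Taking the determinant of this block lower-triangular matrix immediately gives $\det(\mat{V}\mat{U}^*) = \det(\mat{V}_u \mat{U}_\ell^*)$.

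From here the rest is bookkeeping with the multiplicativity of the determinant. Since $\mat{U}$ is unimodular, so is $\mat{V} = \mat{U}^{-1}$, and by hypothesis $\mat{U}^*$ is also unimodular; hence $\mat{V}\mat{U}^*$ is unimodular, so its determinant $\det(\mat{V}_u\mat{U}_\ell^*)$ is a nonzero constant in $\field$. This proves that the square matrix $\mat{V}_u\mat{U}_\ell^*$ is unimodular. Combining $\det(\mat{V})\det(\mat{U}^*) = \det(\mat{V}_u\mat{U}_\ell^*)$ with $\det(\mat{V}) = \det(\mat{U})^{-1}$ yields
\[
\det(\mat{U}) = \frac{\det(\mat{U}^*)}{\det(\mat{V}_u\mat{U}_\ell^*)},
\]
and then the identity $\det(\mat{A})\det(\mat{U}) = \det(\mat{B})$ coming from $\mat{A}\mat{U} = \mat{B}$ gives the claimed formula for $\det(\mat{A})$.

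There is no serious obstacle here; the only mild subtlety is making sure that the existence of a unimodular completion $\mat{U}^*$ of $\mat{U}_r$ is a reasonable hypothesis (which it is, since $\mat{U}_r$ is part of the unimodular matrix $\mat{U}$, so $\mat{U}$ itself is one such completion, and in practice one would compute a more convenient $\mat{U}_\ell^*$, for instance with controlled degrees). The proof itself is a short block-matrix calculation once the right product $\mat{V}\mat{U}^*$ is written down.
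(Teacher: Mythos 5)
Your proof is correct and follows essentially the same route as the paper's: both compute $\det(\mat{V}\mat{U}^{*})$ via the block lower-triangular product $\begin{bmatrix}\mat{V}_{u}\mat{U}_{\ell}^{*} & \mat{0}\\ \anyMat & \idMat\end{bmatrix}$ (using $\mat{V}_u\mat{U}_r=\mat{0}$ and $\mat{V}_d\mat{U}_r=\idMat$ from $\mat{V}\mat{U}=\idMat$) and then combine $\det(\mat{A})=\det(\mat{B})\det(\mat{V})$ with $\det(\mat{V})\det(\mat{U}^{*})=\det(\mat{V}_u\mat{U}_\ell^{*})$. You simply spell out the blockwise identities more explicitly than the paper does.
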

\begin{proof}
Since $\det(\mat{A}) = \det(\mat{B})\det(\mat{V})$, we just need to show that
$\det(\mat{V}) = \det(\mat{V}_{u}\mat{U}_{\ell}^{*})/\det(\mat{U}^{*})$. This
follows from 
\begin{eqnarray*}
\det(\mat{V})\det(\mat{U}^{*}) & = & \det(\mat{V}\mat{U}^{*}) \\
 & = &
\det\left(\begin{bmatrix}
  \mat{V}_{u}\\
  \mat{V}_{d}
\end{bmatrix}
\begin{bmatrix}
  \mat{U}_{\ell}^{*} & \mat{U}_{r}\end{bmatrix}\right)\\
& = & \det\left(\begin{bmatrix}\mat{V}_{u}\mat{U}_{\ell}^{*} & \mat{0} \\
\anyMat & \idMat
\end{bmatrix}\right)\\
 & = & \det(\mat{V}_{u}\mat{U}_{\ell}^{*}).
\end{eqnarray*}
In particular $\det(\mat{V}_{u}\mat{U}_{\ell}^{*})$ is a nonzero constant and
thus $\mat{V}_{u}\mat{U}_{\ell}^{*}$ is unimodular.
\end{proof}

\Cref{lem:scalingToDeterminant} shows that the determinant of $\mat{V}$
can be computed using $\mat{V}_{u}$, $\mat{U}_{r}$, and a unimodular completion
$\mat{U}^*$ of $\mat{U}_{r}$. In fact, this can be made more efficient still by
noticing that since we are looking for a constant determinant, the higher
degree parts of the matrices do not affect the computation. Indeed, if 
$\mat{U}\in \polMatSpace$ is unimodular, then one has
\begin{equation}\label{determinantOfUnimodular}
 \det(\mat{U}) = \det(\mat{U}
\bmod x) = \det(\mat{U}(0))
\end{equation}
since 
\[
\det(\mat{U}\bmod x) = \det(\mat{U}(0)) = \det(\mat{U})(0) = \det(\mat{U})\bmod
x =\det(\mat{U}).
\]

\Cref{determinantOfUnimodular} allows us to use just the degree zero
coefficient matrices in the computation. Hence \Cref{lem:scalingToDeterminant}
can be improved as follows.

\begin{lemma}
\label{lem:scalingToDeterminantSimplified} Let $\mat{A}$,
$\mat{U}= \begin{bmatrix}\mat{U}_{\ell} & \mat{U}_{r} \end{bmatrix}$, and
$\mat{V}=\begin{bmatrix}\mat{V}_{u}\\ \mat{V}_{d}
\end{bmatrix}$ be as before. Let $U_{r}=\mat{U}_{r}\bmod x$ and
$V_{u}=\mat{V}_{u}\bmod x$ be the constant matrices of $\mat{U}_{r}$ and
$\mat{V}_{u}$, respectively. Let $U_{\ell}^{*}\in \matSpace[\matdim][m]$ be a
matrix such that $U^{*} = \begin{bmatrix} U_{\ell}^{*} & U_{r}\end{bmatrix}$ is
nonsingular. Then
\[
\det(\mat{A}) = \frac{\det(\mat{B})\det(V_{u}U_{\ell}^{*})}{\det(U^{*})}.
\]
\end{lemma}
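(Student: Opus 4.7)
The plan is to reduce to a direct block-determinant computation over $\field$, using that $\mat{V} = \mat{U}^{-1}$ is unimodular so its determinant equals the determinant of its constant coefficient matrix. First I would note that $\mat{A}\mat{U} = \mat{B}$ together with $\mat{V} = \mat{U}^{-1}$ gives $\det(\mat{A}) = \det(\mat{B})\det(\mat{V})$, and invoke \cref{determinantOfUnimodular} to obtain $\det(\mat{V}) = \det(V)$ where $V = \mat{V} \bmod x = \begin{bmatrix}V_u \\ V_d\end{bmatrix}$. Thus the lemma reduces to proving the constant-matrix identity $\det(V) = \det(V_u U_\ell^*)/\det(U^*)$.

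Next I would exploit the block structure. Set $U = \mat{U} \bmod x = \begin{bmatrix} U_\ell & U_r \end{bmatrix}$, so that $VU = I_\matdim$. Reading off the four blocks of this product yields
\[
  V_u U_\ell = I_m, \quad V_u U_r = \mat{0}, \quad V_d U_\ell = \mat{0}, \quad V_d U_r = I_{\matdim - m}.
\]
Now consider the product $V U^*$ with $U^* = \begin{bmatrix} U_\ell^* & U_r \end{bmatrix}$. Using the two relations involving $U_r$, this product simplifies to the block form
\[
  V U^* = \begin{bmatrix} V_u \\ V_d \end{bmatrix} \begin{bmatrix} U_\ell^* & U_r \end{bmatrix}
         = \begin{bmatrix} V_u U_\ell^* & \mat{0} \\ V_d U_\ell^* & I_{\matdim - m} \end{bmatrix}.
\]
Taking determinants gives $\det(V)\det(U^*) = \det(V_u U_\ell^*)$. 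Since $U^*$ is assumed nonsingular, we can divide to obtain $\det(V) = \det(V_u U_\ell^*)/\det(U^*)$, and in particular $V_u U_\ell^*$ is nonsingular. Combining with $\det(\mat{A}) = \det(\mat{B})\det(V)$ yields the claim.

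There is really no major obstacle here; the argument is a short block-matrix manipulation, and the essential content lies in the two observations that (i) $\mat{V}$ is unimodular so its determinant is captured by its constant term, and (ii) the identity $VU = I$ forces $V_u U_r = \mat{0}$ and $V_d U_r = I$, which is exactly what makes the product $V U^*$ block-triangular regardless of the choice of $U_\ell^*$. This also explains why any constant completion $U_\ell^*$ of $U_r$ to a nonsingular matrix works, not just the one coming from reducing the polynomial completion $\mat{U}_\ell^*$ of \cref{lem:scalingToDeterminant} modulo $x$.
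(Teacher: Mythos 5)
Your proof is correct, but it takes a genuinely different route from the paper's. The paper proves this lemma by \emph{lifting}: it constructs a polynomial matrix $\mat{U}_\ell^*$ with $\mat{U}_\ell^* \bmod x = U_\ell^*$ such that $\begin{bmatrix}\mat{U}_\ell^* & \mat{U}_r\end{bmatrix}$ is unimodular (via an auxiliary unimodular completion $\mat{W}^*$ and a constant block-triangular correction $\begin{bmatrix} T_u & 0 \\ T_d & I\end{bmatrix}$), then invokes \Cref{lem:scalingToDeterminant} at the polynomial level and only afterwards reduces modulo $x$. You instead reduce modulo $x$ at the outset: since $\mat{V}$ is unimodular, \cref{determinantOfUnimodular} gives $\det(\mat{V}) = \det(V)$ with $V = \mat{V}\bmod x$, and the relations $\mat{V}_u\mat{U}_r = \mat{0}$ and $\mat{V}_d\mat{U}_r = \idMat$, read off from $\mat{V}\mat{U} = \idMat$, survive reduction modulo $x$, which makes $VU^*$ block triangular for \emph{any} constant completion $U_\ell^*$. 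In effect you replay the block-determinant computation from the proof of \Cref{lem:scalingToDeterminant} entirely over $\field$, bypassing both that lemma and the lifting construction. Your approach buys a shorter, self-contained argument that makes it transparent why an arbitrary nonsingular constant completion of $U_r$ suffices; the paper's approach buys the side fact, of some independent interest, that every such constant completion lifts to a unimodular polynomial completion of $\mat{U}_r$, which lets it reuse the already-established polynomial-level lemma. Both are valid, and all the hypotheses you use ($\mat{U}$ unimodular, hence $\mat{V}$ polynomial and unimodular; $\det(U^*)\neq 0$) are indeed available in the setting of the lemma.
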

\begin{proof}
Suppose $\mat{U}_\ell^* \in \polMatSpace[\matdim][m]$ is such that $U_\ell^* =
\mat{U}_\ell^* \bmod x$ and $\mat{U}^* = \begin{bmatrix} \mat{U}_{\ell}^{*} &
  \mat{U}_{r} \end{bmatrix}$ is unimodular. Using
\Cref{lem:scalingToDeterminant} and \cref{determinantOfUnimodular}, we have
that $\mat{V}_{u}\mat{U}_{\ell}^{*}$ is unimodular with $V_u U_\ell^* =
\mat{V}_{u}\mat{U}_{\ell}^{*} \bmod x$ and thus \[\det(\mat{A}) = \det(\mat{B})
\det(\mat{V}_{u}\mat{U}_{\ell}^{*}) / \det(\mat{U}^{*}) = \det(\mat{B})
\det(V_u U_\ell^*) / \det(U^{*}).\]

Let us now show how to construct such a matrix  $\mat{U}_\ell^*$. Let $\mat{W}_\ell^* \in
\polMatSpace[\matdim][m]$ be any matrix such that $\mat{W}^* = \begin{bmatrix}
  \mat{W}_{\ell}^{*} & \mat{U}_{r} \end{bmatrix}$ is unimodular and let
$W_\ell^*$ denote its constant term $W_\ell^* = \mat{W}_\ell^* \bmod x$. 
It is easily checked that $$\begin{bmatrix} W_\ell^* & U_r \end{bmatrix}^{-1}
\begin{bmatrix} U_\ell^* & U_r
\end{bmatrix} = \begin{bmatrix} T_u & 0 \\ T_d & I \end{bmatrix}$$ for some
nonsingular $T_u \in \matSpace[m]$ and some $T_d \in \matSpace[\matdim-m][m]$.
Define the matrix $\mat{U}_\ell^* = \mat{W}_\ell^*  \begin{bmatrix} T_u \\ T_d
\end{bmatrix}$ in $\polMatSpace[\matdim][m]$. On the one hand, we have that 
the matrix $\mat{U}^* = \begin{bmatrix} \mat{U}_{\ell}^{*} & \mat{U}_{r}
\end{bmatrix} = \mat{W}^* \begin{bmatrix} T_u & 0 \\ T_d & I \end{bmatrix}$ is
unimodular. On the other hand, by construction we have that $\mat{U}_\ell^*
\bmod x = W_\ell^* \begin{bmatrix} T_u \\ T_d
\end{bmatrix} = U_\ell^*$.
\end{proof}

Thus \cref{lem:scalingToDeterminantSimplified} requires us to compute
$U_{\ell}^{*}\in \matSpace[\matdim][m]$ a matrix such that $U^{*} =
\begin{bmatrix} U_{\ell}^{*} & U_{r}\end{bmatrix}$ is nonsingular. This can be
obtained from the nonsingular matrix that transforms $V_{u}$ to its reduced
column echelon form computed using the Gauss Jordan transform algorithm from
\cite{storjohann:phd2000} with a cost of $\bigO{\matdim m^{\expmatmul-1}}$
field operations.

We now have all the ingredients needed for computing the determinant of
$\mat{A}$. A recursive algorithm is given in
\Cref{alg:determinant}, which computes the determinant of $\mat{A}$ as the
product of the determinant of $\mat{V}$ and the determinant of $\mat{B}$. The
determinant of $\mat{B}$ is computed by recursively computing the determinants
of its diagonal blocks $\mat{B}_{1}$ and $\mat{B}_{2}$.

\begin{algorithm}[t]
\caption{$\determinant(\mat{A})$}
\label{alg:determinant}

\begin{algorithmic}[1]
\REQUIRE{$\mat{A}\in \polMatSpace$, nonsingular.
}

\ENSURE{the determinant of $\mat{A}$.}

\IF{$n=1$}
\RETURN{$\mat{A}$;}
\ENDIF

\STATE{$\begin{bmatrix}\mat{A}_{u}\\
\mat{A}_{d}
\end{bmatrix}:=\mat{A}$, with $\mat{A}_{u}$ consisting of the top $\left\lceil n/2\right\rceil $
rows of $\mat{A}$;}

\STATE{$\mat{B}_{1},\mat{U}_{r},\mat{V}_{u}:=\colBasis(\mat{A}_{u})$; \\
  \hspace{0.15in}
  {\bf Note:} Here $\colBasis()$ also returns the kernel basis $\mat{U}_{r}$ \\
  \hspace{0.15in}
  and the right factor $\mat{V}_{u}$ such that $\mat{A}_u = \mat{B}_{1}
  \mat{V}_u$.}

\STATE{$\mat{B}_{2}:=\mat{A}_{d}\mat{U}_{r}$;}

\STATE{$U_{r}:=\mat{U}_{r}\bmod x$; $V_{u}:=\mat{V}_{u}\bmod x$;}

\STATE{
  Compute a matrix $U_{\ell}^{*} \in \matSpace[\matdim][\lceil \matdim/2\rceil]$ such that
$U^{*}= \begin{bmatrix} U_{\ell}^{*} & U_{r}\end{bmatrix}$ is nonsingular;
}

\STATE{$d_{V} := \det(V_{u}U_{\ell}^{*}) / \det(U^{*})$ ~~ (element of $\field$);}

\STATE{$\mathbf{d}_{\mat{B}}:=\determinant(\mat{B}_{1})\determinant(\mat{B}_{2});$}

\RETURN{$d_{V}\mat{d}_{B}$;}
\end{algorithmic}
\end{algorithm}

\begin{proposition}
\label{prop:det} \Cref{alg:determinant} costs
$\softO{\matdim^{\expmatmul} \lceil s \rceil}$ field operations to compute the
determinant of a nonsingular matrix $\mat{A} \in \polMatSpace$, where $s$ is
the average column degree of $\mat{A}$.
\end{proposition}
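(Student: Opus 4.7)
The plan is to mirror the cost analysis carried out for \Cref{alg:hermiteDiagonal} in \Cref{prop:diagonalCost}, and simply add the costs of the new operations that handle the unimodular multiplier. Writing $\shifts = \cdeg{\mat{A}}$ and $\xi = \sumVec{\shifts} = \matdim s$, the target bound rewrites as $\softO{\matdim^{\expmatmul} \lceil s \rceil} = \softO{\matdim^{\expmatmul-1}\xi + \matdim^{\expmatmul}}$; keeping $\xi$ and $\matdim$ separate is essential since $s$ changes with $\matdim$ along the recursion whereas $\xi$ does not grow.

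First I would bound the cost of each non-recursive step. The column basis call at line~3 costs $\softO{\matdim^{\expmatmul}\lceil s\rceil}$ by \Cref{thm:fastkercolbasis}; as the algorithm of \cite{za2013} produces it, it also returns within the same cost the $\shifts$-minimal kernel basis $\mat{U}_r$ and the right factor $\mat{V}_u$ satisfying $\mat{B}_1 \mat{V}_u = \mat{A}_u$. The product $\mat{B}_2 = \mat{A}_d \mat{U}_r$ at line~4 fits the hypotheses of \Cref{thm:multiplyUnbalancedMatrices}, since $\shifts$ bounds the column degrees of $\mat{A}_d$ and the sum of the $\shifts$-column degrees of $\mat{U}_r$ is at most $\xi$ by \cite[Theorem~3.4]{za2012}; this multiplication therefore costs $\softO{\matdim^{\expmatmul}\lceil s\rceil}$. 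Lines~5--7 operate entirely on constant matrices in $\matSpace$: the nonsingular completion $U_\ell^*$ is obtained by a Gauss--Jordan transform from \cite{storjohann:phd2000} at cost $\bigO{\matdim^{\expmatmul}}$ as discussed before the algorithm, and the two scalar determinants $\det(V_u U_\ell^*)$ and $\det(U^*)$ each cost $\bigO{\matdim^{\expmatmul}}$ field operations by \cite{BunchHopcroft1974}.

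Next I would control the recursion. The key invariant is that the sum of the column degrees of $\mat{B}_1$ and of $\mat{B}_2$ is at most $\xi$. For $\mat{B}_1$ this follows from $\cdeg{\mat{B}_1} \le \cdeg{\mat{A}_u} \le \shifts$, a property of the column basis algorithm established in \cite{za2013}. For $\mat{B}_2 = \mat{A}_d \mat{U}_r$, the bound $\cdeg{\mat{A}_d} \le \shifts$ combined with \cite[Lemma~3.1]{za2012} gives $\cdeg{\mat{B}_2} \le \cdeg[\shifts]{\mat{U}_r}$, whose sum is at most $\xi$ as recalled above. Letting $g(\matdim)$ denote the cost of \Cref{alg:determinant} on an $\matdim\times\matdim$ input with column-degree sum at most $\xi$, the above yields the recurrence
\[
  g(\matdim) \in \softO{\matdim^{\expmatmul-1}\xi + \matdim^{\expmatmul}} + g(\lceil \matdim/2 \rceil) + g(\lfloor \matdim/2 \rfloor),
\]
which solves to $g(\matdim) \in \softO{\matdim^{\expmatmul-1}\xi + \matdim^{\expmatmul}} = \softO{\matdim^{\expmatmul}\lceil s \rceil}$ by the same argument used at the end of \Cref{prop:diagonalCost}.

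Correctness needs only to be noted rather than reproved: \Cref{lem:scalingToDeterminantSimplified} gives $\det(\mat{A}) = d_V \det(\mat{B})$ with $d_V = \det(V_u U_\ell^*)/\det(U^*)$, and $\det(\mat{B}) = \det(\mat{B}_1)\det(\mat{B}_2)$ by block triangularity, so the returned value $d_V \mat{d}_B$ is indeed $\det(\mat{A})$. The main obstacle in this proof is not conceptual but bookkeeping: one must be careful to separate $\xi$ from $\matdim$ before unrolling the recurrence, otherwise a naive $\softO{\matdim^{\expmatmul}}$ charged at each of the $\log \matdim$ levels would miss the point of the $\lceil s \rceil$ factor, which can be $o(1)$.
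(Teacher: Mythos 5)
Your proposal is correct and follows essentially the same route as the paper's own proof: the same per-step cost accounting (column basis plus kernel basis and right factor from \cite{za2013}, the unbalanced product via \Cref{thm:multiplyUnbalancedMatrices}, the constant-matrix work in $\bigO{\matdim^{\expmatmul}}$), the same invariant that the column-degree sums of $\mat{B}_1$ and $\mat{B}_2$ stay bounded by $\xi$, and the same recurrence as in \Cref{prop:diagonalCost}. The only item the paper accounts for that you omit is the final multiplication $\det(\mat{B}_1)\cdot\det(\mat{B}_2)$ of the two recursively computed polynomial determinants, which the paper bounds by $\softO{\xi}$ using $\deg(\det(\mat{A}))\le\xi$; this fits comfortably in the per-level budget and does not affect the conclusion.
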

\begin{proof}
From \Cref{lem:oneStepHermiteDiagonal} and
\Cref{prop:diagonalCost} the computation of the two diagonal blocks
$\mat{B}_{1}$ and $\mat{B}_{2}$ costs $\softO{\matdim^{\expmatmul}\lceil s
\rceil}$ field operations. As mentioned above, computing $U_l^*$ at Step 6 of
the algorithm costs $\bigO{\matdim^\expmatmul}$ operations. Step 7 involves
only constant matrices so that $d_V$ can be computed
$\bigO{\matdim^\expmatmul}$. Finally, $\det(\mat{B}_{1})$ and
$\det(\mat{B}_{2})$ are computed recursively and multiplied. Since these are
two univariate polynomials of degree at most $\deg(\det(\mat{A})) \le \xi =
\matdim s$, their product $\mathbf{d}_\mat{B}$ is obtained in $\softO{\xi}
\subset \softO{\matdim^\expmatmul \lceil s \rceil}$ operations.

Therefore, the recurrence relation for the cost of the
\Cref{alg:determinant} is the same as that in the proof of
\Cref{prop:diagonalCost}, and the total cost is
$\softO{\matdim^{\expmatmul} \lceil s \rceil}$.
\end{proof}

\Cref{prop:det} can be further improved using the following result
from~\cite[Corollary~3]{GSSV2012}.

\begin{proposition}
  \label{prop:reduction_determinant}
  Let $\mat{A} \in \polMatSpace$ be nonsingular. Using no operation in
  $\field$, one can build a matrix $\hat{\mat{A}} \in
  \polMatSpace[\hat{\matdim}]$ such that
  \begin{enumerate}[(i)]
    \item $\matdim \le \hat{\matdim} < 3 \matdim$ and
     $\deg(\hat{\mat{A}}) \le \lceil \degDet / \matdim \rceil$,
    \item the determinant of $\mat{A}$ is equal to the determinant of
      $\hat{\mat{A}}$.
  \end{enumerate}
\end{proposition}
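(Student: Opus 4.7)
The strategy is to apply the partial linearization technique of~\cite[Section~6]{GSSV2012} to the columns of $\mat{A}$. The goal is to produce $\hat{\mat{A}}$ whose entries all have degree at most $d := \lceil \degDet/\matdim \rceil$ by splitting the high-degree polynomial entries of $\mat{A}$ into blocks of length $d$ and introducing auxiliary rows and columns so as to preserve the determinant. Because the construction only rearranges existing polynomial coefficients of $\mat{A}$ and inserts scalar entries chosen from $\{0, \pm 1, \pm x^d\}$, no field arithmetic is needed.

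First I would establish the combinatorial heart of the construction: there exists a tuple $(\alpha_1, \ldots, \alpha_\matdim) \in \N^\matdim$ with $\sum_j \alpha_j < 2\matdim$ such that every column of $\mat{A}$ admits a degree-$d$ block decomposition
$$\matcol{A}{j} = \col{a}_{j,0} + x^d \col{a}_{j,1} + x^{2d} \col{a}_{j,2} + \cdots + x^{\alpha_j d} \col{a}_{j,\alpha_j},$$
where each piece $\col{a}_{j,k}$ has entrywise degree strictly less than $d$. This is where the generic determinant bound enters the argument. The naive choice $\alpha_j = \lceil (c_j + 1)/d \rceil - 1$ based on the column degrees $c_j$ of $\mat{A}$ only gives $\sum_j \alpha_j \leq \sumVec{\cdeg{\mat{A}}}/d$, which can be as large as $\Theta(\matdim^2)$ rather than the required $\bigO{\matdim}$ (as illustrated by the matrix in \Cref{rk:degdet}, whose column-degree sum is $\matdim d$ while $\degDet$ is only of order $d$). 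A more careful selection of $(\alpha_j)$, driven by the permutational definition of $\degDet$ and described in~\cite[Corollary~3]{GSSV2012}, achieves the required bound.

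Second I would assemble $\hat{\mat{A}}$ as a matrix of dimension $\hat{\matdim} = \matdim + \sum_j \alpha_j$. Each column $\matcol{A}{j}$ of $\mat{A}$ is replaced by a block of $\alpha_j + 1$ consecutive columns; the top $\matdim$ entries of those columns store the pieces $\col{a}_{j,0}, \col{a}_{j,1}, \ldots, \col{a}_{j,\alpha_j}$, and $\alpha_j$ auxiliary rows are appended carrying a bidiagonal pattern of $1$ and $-x^d$ entries that couples consecutive pieces within the block. The bound $\sum_j \alpha_j < 2\matdim$ from the first step gives $\matdim \leq \hat{\matdim} < 3\matdim$, and by construction every entry of $\hat{\mat{A}}$ has degree at most $d$.

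Third I would verify $\det(\hat{\mat{A}}) = \det(\mat{A})$ by a sequence of elementary unimodular column operations on $\hat{\mat{A}}$: for each $j$ and each $k = 1, \ldots, \alpha_j$, add $x^{kd}$ times the column holding the piece $\col{a}_{j,k}$ into the principal column of block $j$. Each such operation is triangular with unit diagonal, so it preserves the determinant; cumulatively they reassemble $\matcol{A}{j}$ in each principal column and clear the upper entries of the extra columns. After these operations, $\hat{\mat{A}}$ has become block-triangular with $\mat{A}$ in the top-left $\matdim \times \matdim$ block and a scalar identity in the remaining diagonal block, which yields the desired equality of determinants.

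The main obstacle is the first step: producing a splitting tuple with $\sum_j \alpha_j$ controlled by $\degDet/d \leq \matdim$, rather than by the column-degree sum, requires a global averaging argument rather than a per-column choice. This is precisely what is established in~\cite[Corollary~3]{GSSV2012}, so I would invoke their result here rather than reproving it; once it is in hand, the remaining assembly and determinant verification are routine block-matrix bookkeeping.
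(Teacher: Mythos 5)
The paper offers no proof of this proposition at all: it is imported verbatim from \cite[Corollary~3]{GSSV2012} (with the determinant preservation coming from \cite[Corollary~5]{GSSV2012}), so the only question is whether your reconstruction of that construction is viable. It is not, and the failure is concentrated in your first step. If every piece $\col{a}_{j,k}$ of column $j$ must have degree strictly less than $d = \lceil \degDet/\matdim\rceil$, then necessarily $(\alpha_j+1)d > c_j$, i.e.\ $\alpha_j \ge \lceil (c_j+1)/d\rceil - 1$: the ``naive choice'' you dismiss is already the \emph{minimum} possible, so no more careful selection of the $\alpha_j$ can do better while keeping all pieces below degree $d$. Hence $\sum_j \alpha_j \ge \sumVec{\cdeg{\mat{A}}}/d - \matdim$, and for the matrix of \Cref{rk:degdet} — where every column has column degree $\deg(\mat{A})$ because of its first entry, while $\degDet \le 2\deg(\mat{A})$ so that $d \le \lceil 2\deg(\mat{A})/\matdim\rceil$ — this lower bound is $\Theta(\matdim^2)$ once $\deg(\mat{A}) \ge \matdim$. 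The decomposition you posit (all pieces of degree below $d$, only $\bigO{\matdim}$ extra columns) therefore does not exist, and \cite[Corollary~3]{GSSV2012} does not assert that it does; invoking it at that point attributes to it a false statement.

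What the cited corollary actually performs is a \emph{two-stage} linearization, and the paper's own \Cref{exmp:reducing_degrees} makes the necessity of the second stage visible. The columns are first split with respect to degrees $d_1,\ldots,d_\matdim$ coming from a row/column permutation of $\mat{A}$ that makes the diagonal degrees dominate the trailing submatrices, with $\sum_j d_j \le \degDet$; these degrees bound only part of each column, so the intermediate matrix can still contain entries of degree up to $\deg(\mat{A})$ (in that example an entry of degree $75$ survives the column stage even though $\lceil\degDet/\matdim\rceil = 50$). A subsequent \emph{row} partial linearization of the intermediate matrix is what finally brings every entry below the target degree, while the total of extra rows and columns stays below $2\matdim$. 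Your single column stage omits this entirely, so the degree bound in $(i)$ cannot be met within the dimension bound. The later parts of your argument — the bidiagonal coupling by $1$ and $-x^d$ and the determinant-preserving column operations — are fine as bookkeeping, but they rest on a decomposition that is unavailable; the correct route is to use the full two-stage construction of \cite[Corollary~3]{GSSV2012} together with \cite[Corollary~5]{GSSV2012}, which is exactly what the paper does by citation.
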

This reduction, combined with our result in \cref{prop:det}, proves
\Cref{thm:determinant}.

\begin{example}
  \label{eg:ctd_determinant}
In order to observe the correctness of the algorithm, let
\[
  \mat{A}= \begin{bmatrix}
    -x+2 & -2x-3 & 3x^3+x^2 & -x+2    & -3x^5-x^4    \\ 
    -x   &   -2  & 3x^3     & -x      & -3x^5        \\ 
    -2   &   x+3 & 2        & -2      & -2x^2        \\ 
    0    &   1   & -3x^2-2  & -2x^2-1 & x^4+x^2      \\ 
    0    &   2   & 3        & -3x^2   & -2x^4-3x^2+3
  \end{bmatrix}
\]
working over $\Z_7[x]$. If $\mat{A}_u$ denotes the top three rows of
$\mat{A}$, then we have a column basis 
\[
  \mat{B}_1= \begin{bmatrix}
    -x+2 & -2x-3 & 3x^3+x^2 \\ 
    -x   & -2    & 3x^3     \\ 
    -2   & x+3   & 2
  \end{bmatrix}
\]
and a minimal kernel basis 
\[
  \mat{U}_r= \begin{bmatrix}
    3  & 0   \\ 
    0  & 0   \\ 
    0  & x^2 \\ 
    -3 & 0  \\ 
    0  & 1
  \end{bmatrix}
\]
for $\mat{A}_u$. The second block diagonal is then given by
\[
  \mat{A}_d \mat{U}_r = \begin{bmatrix}
    x^2-3 & -2x^4-x^2 \\
    -2x^2 & -2x^4+3
  \end{bmatrix}.
\]
The computation of the column basis $\mat{B}_1$ also gives the right factor 
\[
  \mat{V}_u= \begin{bmatrix}
    1 & 0 & 0 & 1 & 0    \\
    0 & 1 & 0 & 0 & 0    \\
    0 & 0 & 1 & 0 & -x^2
  \end{bmatrix}
\]
and so the constant term matrices are then 
\[
  U_r = \begin{bmatrix}
    3  & 0 \\
    0  & 0 \\
    0  & 0 \\
    -3 & 0 \\
    0  & 1
  \end{bmatrix} \quad\text{and}\quad
  V_u = \begin{bmatrix}
    1 & 0 & 0 & 1 & 0\\
    0 & 1 & 0 & 0 & 0\\
    0 & 0 & 1 & 0 & 0
  \end{bmatrix}
\]
with Gaussian-Jordan elimination used to find a nonsingular completion of $U_r$ as  
\[
  U_{\ell}^{*} = \begin{bmatrix}
    1 & 0 & 0 \\
    0 & 1 & 0 \\
    0 & 0 & 1 \\
    0 & 0 & 0 \\
    0 & 0 & 0
  \end{bmatrix}.
\]
The determinant of $\mat{U}$ is then computed as 
\[
  d_V = \frac{\det(V_u U_\ell^*)}{\det(U^*)} = -\frac{1}{3} = 2
\]
where we recall that $U^* = [U_{\ell}^{*} \;\; U_r]$. The determinants of $\mat{B}_1$ and $\mat{B}_2$ are then computed recursively.
In the case of $\mat{B}_1$ a minimal kernel basis and column basis are given by
\[
  \mat{U}_{r,1} = \begin{bmatrix}
    3 x^2 \\
    0     \\
    1
  \end{bmatrix}  \, ,\;\;\;
  \mat{B}_{1,1} = \begin{bmatrix}
    -x+2 & 0 \\
    -x   & 2x-2
  \end{bmatrix} \, ,\;\;\;\text{and}\;\;
  \mat{V}_{u,1} = 
  \begin{bmatrix}
    1 & 2 & -3x^2 \\
    0 & 1 & 0
  \end{bmatrix}.
\]
This gives the remaining diagonal block as $\mat{B}_{1,2}  = \begin{bmatrix}
  x^2 + 2 \end{bmatrix}$. The corresponding constant term matrices $U_{r,1}$
and $V_{u,1}$ and nonsingular completion $U_{\ell,1}^*$ are then given by
\[
  U_{r,1} = \begin{bmatrix} 0 \\ 0 \\ 1 \end{bmatrix} \, ,\quad
  V_{u,1} = \begin{bmatrix} 1 & 2 & 0 \\ 0 & 1 & 0 \end{bmatrix} \, ,\;\;\;\text{and}\;\;
  U_{\ell,1}^* = \begin{bmatrix} 1 & 0  \\ 0 & 1 \\ 0 & 0 \end{bmatrix},
\]
which gives $d_{V_1} = 1$. Hence $\det(\mat{B}_1) = (-x + 2)(2x-2)(x^2 + 2)$.
A similar argument gives $\det(\mat{B}_2) = (x^2 - 3)(x^4 + 3)$ and hence
\[
\det(\mat{A}) = d_V \det(\mat{B}_1) \det(\mat{B}_2) =
3x^{10} - 2x^9 + 3x^8 + 2x^7 - x^6 - x^5 + x^4 - x^3 - 2x^2 + x - 3.
\qedexmp
\]
\end{example}

\section{Fast computation of the Hermite form}
\label{sec:hermite}

In \Cref{sec:diagonals}, we have shown how to efficiently determine the
diagonal entries of the Hermite normal form of a nonsingular input matrix
$\mat{A} \in \polMatSpace$. One then still needs to determine the remaining
entries for the complete Hermite form $\hermite$ of $\mat{A}$.

Here, we observe that knowing the diagonal degrees of $\hermite$ allows us to
use partial linearization techniques~\cite[Section 6]{GSSV2012} to reduce to
the case of computing a column reduced form of $\mat{A}$ for an almost uniform
shift. Along with the algorithm in \Cref{sec:diagonals}, this gives an
algorithm to compute the Hermite form of $\mat{A}$ in
$\softO{\matdim^\expmatmul \deg(\mat{A})}$ field operations using fast
deterministic column reduction~\cite{GSSV2012}. 

\subsection{Hermite form via shifted column reduction}
\label{subsec:hermite_via_reduction}

It is known that the Hermite form $\hermite$ of $\mat{A}$ is a
\emph{shifted} reduced form of $\mat{A}$ for a whole range of shifts. Without
further information on the degrees in $\hermite$, one appropriate shift is
\begin{equation}\label{eqn:popov-shift}
\shifts[h] \;\;=\;\;   (\matdim(\matdim-1)d, ~~\matdim(\matdim-2)d, ~~~\ldots,~~\matdim
d, ~~ 0)
\end{equation}
where $d = \deg(\mat{A})$ (cf. \cite[Lemma 2.6]{BLV:jsc06}).
Note that this shift has a large amplitude, namely $\max(\shifts[h]) - \min(\shifts[h]) \in
\Theta(\matdim^2 d)$. Unfortunately we are not aware of a deterministic shifted
reduction algorithm that would compute an $\shifts[h]$-reduced form of
$\mat{A}$ in $\softO{\matdim^\expmatmul d}$ field operations.

Now, let us consider the degrees $\minDegs =
(\minDeg_1,\ldots,\minDeg_\matdim)$ of the diagonal entries of $\hermite$. Then
we have that $\hermite$ is a $-\minDegs$-column reduced form of $\mat{A}$ and,
in addition, that $\hermite$ can be easily recovered from any
$-\minDegs$-column reduced form of $\mat{A}$.  More precisely, suppose that we
know $\minDegs$, for example thanks to the algorithm in \Cref{sec:diagonals}.
Then, we claim that $\hermite$ can be computed as follows, where $\shifts[\mu]
= (\max(\minDegs),\ldots,\max(\minDegs)) \in \N^\matdim$:
\[
  \begin{array}{ccccc}
    \shiftMat{\shifts[\mu]-\minDegs}\mat{A}
    & \xrightarrow{\;\;\text{reduction}\;\;}
    & \shiftMat{\shifts[\mu]-\minDegs}\mat{R}
    & \xrightarrow{\;\;\text{normalization}\;\;}
    & \hermite = \mat{R} \, \cdot \leadingMat[{-\minDegs}]{\mat{R}}^{-1}
  \end{array} 
\]
where $\mat{R}$ is any $-\minDegs$-column reduced form of $\mat{A}$.
To show this, we will rely on the following consequence of \cite[Lemma 17]{SarSto11}.

\begin{lemma}
  \label{lem:popov_normalize0}
  Let $\mat{A}$ and $\mat{B}$ be column reduced matrices in
  $\polMatSpace$ with uniform column degree $(d,\ldots,d)$, for some $d \in \N$. 
  If $\mat{A}$ and $\mat{B}$ are right-unimodularly equivalent then $$\mat{A}
  \, \cdot \leadingMat{\mat{A}}^{-1} = \mat{B} \, \cdot
  \leadingMat{\mat{B}}^{-1}.$$
\end{lemma}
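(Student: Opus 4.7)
The plan is to show that the unimodular matrix relating $\mat{A}$ and $\mat{B}$ must be a constant invertible matrix, from which the claimed identity follows immediately by comparing leading column coefficient matrices.

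First I would write the right-unimodular equivalence as $\mat{B} = \mat{A} \mat{U}$ for some unimodular $\mat{U} \in \polMatSpace$. The main step is to apply the predictable degree property of column reduced matrices to $\mat{A}$: since $\mat{A}$ is column reduced with column degree $(d,\ldots,d)$, for every column index $j$ one has
\[
  \cdeg{\mat{B}}_j = \cdeg{\mat{A}\mat{U}}_j = \max_{1\le i\le\matdim}(d + \deg(U_{ij})) = d + \cdeg{\mat{U}}_j.
\]
Combining this with the hypothesis $\cdeg{\mat{B}} = (d,\ldots,d)$ forces $\cdeg{\mat{U}} = (0,\ldots,0)$, so $\mat{U}$ is a constant matrix in $\matSpace$, necessarily invertible since it is unimodular.

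Once $\mat{U}$ has degree zero, taking the leading column coefficient matrix on both sides of $\mat{B} = \mat{A}\mat{U}$ (noting that the column degrees on both sides are uniformly $d$) gives $\leadingMat{\mat{B}} = \leadingMat{\mat{A}}\,\mat{U}$, hence $\mat{U} = \leadingMat{\mat{A}}^{-1}\leadingMat{\mat{B}}$. Substituting this back,
\[
  \mat{B}\cdot\leadingMat{\mat{B}}^{-1}
  = \mat{A}\mat{U}\cdot\leadingMat{\mat{B}}^{-1}
  = \mat{A}\cdot\leadingMat{\mat{A}}^{-1},
\]
which is the desired identity.

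I do not expect a serious obstacle here: the predictable degree property is standard for column reduced matrices, and the only delicate check is that taking the leading column coefficient matrix commutes with right-multiplication by a constant matrix, which is immediate once one knows both sides share the same uniform column degree. The minor bookkeeping point is to ensure the convention of ``right-unimodular equivalence'' in the cited \cite[Lemma~17]{SarSto11} matches ours ($\mat{B} = \mat{A}\mat{U}$ rather than $\mat{U}\mat{A}$); if needed one transposes and works with rows, but nothing else changes.
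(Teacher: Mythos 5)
Your proof is correct, but it takes a genuinely different route from the paper. The paper's argument invokes \cite[Lemma~17]{SarSto11} to assert that $\mat{A} \cdot \leadingMat{\mat{A}}^{-1}$ and $\mat{B} \cdot \leadingMat{\mat{B}}^{-1}$ are the Popov forms of $\mat{A}$ and $\mat{B}$ respectively, and then concludes by the uniqueness of the Popov form as a canonical form under right-unimodular equivalence. You instead argue directly: the predictable degree property applied to $\mat{B} = \mat{A}\mat{U}$ forces the unimodular multiplier $\mat{U}$ to be a constant invertible matrix, after which $\leadingMat{\mat{B}} = \leadingMat{\mat{A}}\,\mat{U}$ and the identity follows by substitution. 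Your route is more elementary and self-contained, needing only the predictable degree property (standard for column reduced matrices) rather than the external canonical-form machinery; what it gives up is the conceptual framing that the normalized matrix \emph{is} the Popov form, which the paper reuses implicitly when it later lifts the statement to shifted settings in \Cref{lem:popov_normalize}. Your bookkeeping steps all check out: each column of $\mat{U}$ is nonzero since $\mat{U}$ is unimodular, so the predictable degree property applies columnwise; and taking the coefficient of $\var^d$ entrywise in $\mat{A}\mat{U}$ with $\mat{U}$ constant indeed yields $\leadingMat{\mat{A}}\,\mat{U}$.
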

\begin{proof}
  The matrix $\mat{A}$ is column reduced with uniform column degree
  $(d,\ldots,d)$. As such $ \mat{A} \,\cdot \leadingMat{\mat{A}}^{-1}$ is its
  Popov form according to \cite[Lemma 17]{SarSto11} (i.e. its leading
  coefficient matrix is the identity). Similarly, $\mat{B} \, \cdot
  \leadingMat{\mat{B}}^{-1}$ is the Popov form of $\mat{B}$ in this case.  We
  recall that the Popov form is a canonical form under right-unimodular
  equivalence for nonsingular matrices in $\polMatSpace$; for a general
  definition we refer the reader to~\cite{kailath:1980}. Thus, since $\mat{A}$
  and $\mat{B}$ are right-unimodularly equivalent, the uniqueness of the Popov
  form implies $\mat{A} \, \cdot \leadingMat{\mat{A}}^{-1} = \mat{B} \, \cdot
  \leadingMat{ \mat{B}}^{-1}$.
\end{proof}

As we often wish to apply \cref{lem:popov_normalize0} with shifts we also
include the following.
\begin{lemma}
  \label{lem:popov_normalize}
  Let $\shifts \in \shiftSpace$ be a shift, and let $\mat{A}$ and $\mat{B}$ be
  $\shifts$-column reduced matrices in $\polMatSpace$ with uniform
  $\shifts$-column degree $(d,\ldots,d)$, for some $d \in \Z$. If $\mat{A}$ and
  $\mat{B}$ are right-unimodularly equivalent then $$\mat{A} \, \cdot
  \leadingMat[\shifts]{\mat{A}}^{-1} = \mat{B} \, \cdot
  \leadingMat[\shifts]{\mat{B}}^{-1}.$$
\end{lemma}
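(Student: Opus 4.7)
The plan is to reduce this shifted statement to its unshifted counterpart \cref{lem:popov_normalize0} by absorbing the shift into the matrices via left multiplication by $\shiftMat{\shifts}$. First, I would note that the claim is invariant under replacing $\shifts$ with $\shifts + c \cdot (1,\ldots,1)$ for any integer $c$: this operation shifts every $\shifts$-column degree by the constant $c$ (so uniformity is preserved, with the common value becoming $d+c$), preserves $\shifts$-column reducedness, and leaves the shifted leading matrices $\leadingMat[\shifts]{\mat{A}}$ and $\leadingMat[\shifts]{\mat{B}}$ unchanged. Choosing $c$ large enough I may therefore assume $\shifts \in \N^\matdim$ and $d \in \N$.

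Next, I would set $\tilde{\mat{A}} = \shiftMat{\shifts} \mat{A}$ and $\tilde{\mat{B}} = \shiftMat{\shifts} \mat{B}$, which now belong to $\polMatSpace$. Using the identity $\deg(\shiftMat{\shifts} \col{p}) = \cdeg[\shifts]{\col{p}}$ column by column and inspecting leading coefficients, one obtains $\cdeg{\tilde{\mat{A}}} = \cdeg[\shifts]{\mat{A}} = (d,\ldots,d)$ together with $\leadingMat{\tilde{\mat{A}}} = \leadingMat[\shifts]{\mat{A}}$, and analogously for $\tilde{\mat{B}}$. Hence $\tilde{\mat{A}}$ and $\tilde{\mat{B}}$ are column reduced in the unshifted sense with uniform column degree $(d,\ldots,d)$. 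Moreover, writing $\mat{B} = \mat{A}\mat{U}$ for a unimodular $\mat{U}$ yields $\tilde{\mat{B}} = \tilde{\mat{A}} \mat{U}$, so the two are right-unimodularly equivalent.

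Applying \cref{lem:popov_normalize0} to $\tilde{\mat{A}}$ and $\tilde{\mat{B}}$ then gives
\[
  \shiftMat{\shifts} \mat{A} \cdot \leadingMat[\shifts]{\mat{A}}^{-1}
  \;=\; \tilde{\mat{A}} \cdot \leadingMat{\tilde{\mat{A}}}^{-1}
  \;=\; \tilde{\mat{B}} \cdot \leadingMat{\tilde{\mat{B}}}^{-1}
  \;=\; \shiftMat{\shifts} \mat{B} \cdot \leadingMat[\shifts]{\mat{B}}^{-1}.
\]
Since $\shiftMat{\shifts}$ is invertible over $\field(\var)$, left-cancelling it from both sides yields the desired identity $\mat{A} \cdot \leadingMat[\shifts]{\mat{A}}^{-1} = \mat{B} \cdot \leadingMat[\shifts]{\mat{B}}^{-1}$. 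There is no real obstacle in this proof; the only subtlety worth verifying carefully is the first step, namely that adding a constant vector to $\shifts$ genuinely preserves every quantity appearing in the statement, which legitimizes the reduction to the case $\shifts \in \N^\matdim$ where $\shiftMat{\shifts}$ is polynomial and the translation to the unshifted lemma is immediate.
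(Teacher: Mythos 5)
Your proof is correct and takes essentially the same route as the paper, which simply replaces $\mat{A}$ and $\mat{B}$ by $\shiftMat{\shifts}\mat{A}$ and $\shiftMat{\shifts}\mat{B}$ and invokes the unshifted \cref{lem:popov_normalize0}. Your preliminary normalization of the shift by adding a constant vector, so that $\shiftMat{\shifts}$ is genuinely a polynomial matrix and the unshifted lemma applies verbatim, is a detail the paper leaves implicit but is handled correctly here.
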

\begin{proof}
We simply replace $\mat{A}$ and $\mat{B}$ by $\shiftMat{\shifts} \,\mat{A}$ and
$\shiftMat{\shifts} \,\mat{B}$ in the previous proof.
\end{proof}

In addition, since the Hermite form of $\mat{A}$ is the shifted Popov form of
$\mat{A}$ for the shift $\shifts[h]$ in \cref{eqn:popov-shift}, we can state the
following specific case of~\cite[Lemma 4.1]{JeNeScVi16}.

\begin{corollary}
  \label{cor:hermite_knowndeg}
  Let $\mat{A} \in \polMatSpace$ be nonsingular and $\minDegs \in \N^\matdim$
  denote the degrees of the diagonal entries of the Hermite form $\hermite$ of
  $\mat{A}$. If $\mat{R}$ is a $-\minDegs$-column reduced form of
  $\mat{A}$, then $\mat{R}$ has $-\minDegs$-column degree
  $\cdeg[-\minDegs]{\mat{R}} = \unishift$, row degree $\rdeg{\mat{R}} =
  \minDegs$, and $\hermite = \mat{R} \, \cdot
  \leadingMat[{-\minDegs}]{\mat{R}}^{-1}$.
\end{corollary}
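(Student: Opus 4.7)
The plan is to reduce the statement to an application of \cref{lem:popov_normalize} with shift $-\minDegs$ applied to the pair $(\mat{R}, \hermite)$. The first step is to verify that $\hermite$ itself is $-\minDegs$-column reduced with uniform $-\minDegs$-column degree $(0,\ldots,0)$ and $\leadingMat[-\minDegs]{\hermite} = \idMat$. In column $j$ of $\hermite$, only the entries $h_{jj}, h_{j+1,j}, \ldots, h_{\matdim,j}$ are nonzero, and the Hermite degree conditions give $\deg(h_{jj}) - \minDeg_j = 0$ and $\deg(h_{ij}) - \minDeg_i < 0$ for $i > j$. Hence $\cdeg[-\minDegs]{\matcol{\hermite}{j}}$ equals $0$ and is attained uniquely on the diagonal, where the coefficient of $x^{\minDeg_j}$ in $h_{jj}$ is $1$ by monicity; so indeed $\leadingMat[-\minDegs]{\hermite} = \idMat$.

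Next I would establish that $\cdeg[-\minDegs]{\mat{R}} = \unishift$ as well. Write $\mat{R} = \hermite \mat{W}$ for a unimodular $\mat{W}$, which exists since $\mat{R}$ is right-unimodularly equivalent to $\mat{A}$ and hence to $\hermite$. The predictable degree property applied to $\hermite$, which is $-\minDegs$-column reduced with uniform $-\minDegs$-column degree zero, gives $\cdeg[-\minDegs]{\matcol{\mat{R}}{j}} = \cdeg{\matcol{\mat{W}}{j}}$, a nonnegative quantity for every $j$ since no column of the unimodular matrix $\mat{W}$ can be zero. On the other hand, for any nonsingular $\shifts$-column reduced matrix $\mat{M}$ the identity $\det(\shiftMat{\shifts}\mat{M}) = x^{\sum_i s_i} \det(\mat{M})$ yields $\sum_j \cdeg[\shifts]{\matcol{\mat{M}}{j}} = \deg(\det(\mat{M})) + \sum_i s_i$; applied to $\mat{R}$ with $\shifts = -\minDegs$, and combined with $\deg(\det(\mat{R})) = \deg(\det(\hermite)) = \minDeg_1 + \cdots + \minDeg_\matdim$, this forces $\sum_j \cdeg[-\minDegs]{\matcol{\mat{R}}{j}} = 0$. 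Together with the nonnegativity above, each $-\minDegs$-column degree of $\mat{R}$ must equal $0$.

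At this point $\mat{R}$ and $\hermite$ are both $-\minDegs$-column reduced, right-unimodularly equivalent, and share the uniform $-\minDegs$-column degree $(0,\ldots,0)$, so \cref{lem:popov_normalize} yields $\mat{R} \cdot \leadingMat[-\minDegs]{\mat{R}}^{-1} = \hermite \cdot \leadingMat[-\minDegs]{\hermite}^{-1} = \hermite$. For the row-degree assertion, $\cdeg[-\minDegs]{\mat{R}} = \unishift$ directly gives $\deg(\mat{R}_{ij}) \le \minDeg_i$ for all $i,j$ and hence $\rdeg{\mat{R}} \le \minDegs$ componentwise. Conversely, the $-\minDegs$-column reducedness of $\mat{R}$ makes $\leadingMat[-\minDegs]{\mat{R}}$ nonsingular, so each of its rows has a nonzero entry; since by definition $(\leadingMat[-\minDegs]{\mat{R}})_{ij}$ is the coefficient of $x^{\minDeg_i}$ in $\mat{R}_{ij}$, for each $i$ some entry $\mat{R}_{ij}$ has degree exactly $\minDeg_i$, so $\rdeg{\mat{R}}$ has $i$-th component equal to $\minDeg_i$. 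The main subtlety I expect is the middle step: arguing that every $-\minDegs$-column degree of $\mat{R}$ is exactly $0$ rather than merely summing to $0$, which is why the predictable degree property combined with the unimodularity of $\mat{W}$ has to be invoked.
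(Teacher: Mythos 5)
Your proof is correct and follows essentially the same route as the paper's: establish that $\hermite$ is a $-\minDegs$-reduced form of $\mat{A}$ with identity $-\minDegs$-leading matrix and uniform $-\minDegs$-column degree $\unishift$, transfer this degree information to $\mat{R}$, and invoke \cref{lem:popov_normalize}, finishing with the zero-row argument for $\rdeg{\mat{R}} = \minDegs$. The only difference is that you justify the transfer step $\cdeg[-\minDegs]{\mat{R}} = \unishift$ explicitly (via the predictable degree property and the determinant-degree identity for shifted reduced matrices), where the paper asserts it in one line as a consequence of the invariance of shifted minimal degrees among reduced forms.
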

\begin{proof}
  Note that $\leadingMat[-\minDegs]{\hermite}$ is the identity matrix, so
  that $\hermite$ is a $-\minDegs$-reduced form of $\mat{A}$. Furthermore,
  $\hermite$ has $-\minDegs$-column degree $(0,\ldots,0)$ which implies that
  $\cdeg[-\minDegs]{\mat{R}} = \unishift$ and thus $\rdeg{\mat{R}} \le
  \minDegs$ componentwise. By \Cref{lem:popov_normalize} we
  obtain $\hermite = \mat{R} \, \cdot \leadingMat[{-\minDegs}]{\mat{R}}^{-1}$. In
  addition, we must have $\rdeg{\mat{R}} = \minDegs$, since otherwise
  $\leadingMat[-\minDegs]{\mat{R}}$ would have a zero row.
\end{proof}

Thus we can start with the matrix $\shiftMat{\shifts[\mu]-\minDegs} \mat{A}$,
column reduce this matrix and then normalize it to get our normal form.  However
$\shiftMat{\shifts[\mu]-\minDegs} \mat{A}$ may have some entries of large
degree. Indeed, $\max(\minDegs)$ may be as large as $\deg(\det(\mat{A}))$ while
having $\min(\minDegs)=0$, in which case the degree of
$\shiftMat{\shifts[\mu]-\minDegs} \mat{A}$ is at least $\deg(\det(\mat{A}))$.
For efficient deterministic shifted column reduction we would need the degree
of $\shiftMat{\shifts[\mu]-\minDegs} \mat{A}$ to be in $\bigO{\deg(\mat{A})}$.

\subsection{Reducing the amplitude of \texorpdfstring{$\minDegs$}{delta} using partial linearization}
\label{subsec:hermite_knowndeg}

In the strategy presented in the previous subsection, the main obstacle to obtaining an efficient
algorithm is that the diagonal degrees of $\hermite$ might have a large
amplitude. In this subsection, we will show how \emph{partial linearization}
techniques allow us to build a matrix $\rowParLinDiag{\mat{A}}$ such that
$\hermite$ can be obtained from a $-\shifts[d]$-reduced form of
$\rowParLinDiag{\mat{A}}$ for a shift $\shifts[d]$ that has a small amplitude.

A key fact is that the average of the degrees $\minDegs$ is controlled.
Namely, denoting by $\degExp$ the average of $\minDegs$, we have that $\degExp
\le \deg(\mat{A})$. Indeed, the product of the diagonal entries of $\hermite$
is $\det(\mat{H})$ which, up to a constant multiplier, is the same as
$\det(\mat{A})$ and thus the degree of this product is $$\matdim \degExp =
\minDeg_1 + \cdots + \minDeg_\matdim = \deg(\det(\mat{A})) \le \matdim
\deg(\mat{A}).$$ In order to reduce the amplitude of $\minDegs$, one can split
the entries that are larger than $\degExp$ into several entries each at most
$\degExp$. From this we obtain another tuple $\shifts[d] =
(\shift[d]{1},\ldots,\shift[d]{\expand{\matdim}})$  with $\max(\shifts[d]) -
\min(\shifts[d]) \le \degExp \le \deg(\mat{A})$ and having length
$\expand{\matdim}$ less than $2\matdim$.

Most importantly for our purpose, there is a corresponding transformation of
matrices which behaves well with regards to shifted reduction. Namely, this
transformation is a type of \emph{row partial linearization}~\cite[Section
6]{GSSV2012}. Let us consider the case of the Hermite form $\hermite$ of
$\mat{A}$. For each $i$, we consider the row $i$ of $\hermite$. If its degree
$\minDeg_i$ is larger than $\degExp$ then the row is expanded into $\quoExp_i$
rows of degree at most $\degExp$. This yields a $\expand{\matdim} \times
\matdim$ matrix $\widetilde{\hermite}$ of degree at most $\degExp$.
Furthermore, certain elementary columns are inserted into
$\widetilde{\hermite}$ resulting in a square nonsingular matrix
$\rowParLinDiag{\hermite}$ which preserves fundamental properties of $\hermite$
(for example, its Smith factors and its determinant). The matrix
$\rowParLinDiag{\hermite}$ has dimension $\expand{\matdim} \times
\expand{\matdim}$ and degree at most $\degExp$, which in this case is the
average row degree of $\hermite$.

Consider for example a $4\times 4$ matrix $\hermite$ in Hermite form with
diagonal entries having degrees $(2,37,7,18)$. Such a matrix has degree profile
\[\hermite = \begin{bmatrix}
  (2 ) &      &      &      \\
  [36] & (37) &      &      \\
  [ 6] & [ 6] & ( 7) &      \\
  [17] & [17] & [17] & (18)  
\end{bmatrix}, \]
where $[d]$ stands for an entry of degree at most $d$ and $(d)$ stands for a
monic entry of degree exactly $d$. Here $\hermite$ has row degree $\minDegs =
(2,37,7,18)$.

Let us now construct the row partial linearization $\rowParLinDiag{\hermite}$.
Considering the upper bound  $\degExp = 1 + \lfloor (2+37+7+18)/4 \rfloor = 17$
on the average row degree of $\hermite$, we will split the high-degree rows of
$\hermite$ in several rows having degree less than $\degExp$. The first row is
unchanged; the second row is expanded into two rows of degree $16$ and one row of
degree $3$; the third row is unchanged; and finally the last row is expanded into
one row of degree $16$ and one row of degree $1$.
The matrix with expanded rows is then
\[\expand{\hermite} = \begin{bmatrix}
  (2 ) &      &      &      \\
  [16] & [16] &      &      \\
  [16] & [16] &      &      \\
  [2 ] & (3 ) &      &      \\
  [6 ] & [6 ] & (7 ) &      \\
  [16] & [16] & [16] & [16] \\
  [0 ] & [0 ] & [0 ] & (1 )
\end{bmatrix}. \]
Note that $\hermite$ and $\expand{\hermite}$ are related by
$\expandMat \cdot \expand{\hermite} = \hermite$, where $\expandMat$ is the so-called
\emph{expansion-compression} matrix
\[\expandMat = \begin{bmatrix}
   1 & 0 &     0     &     0     & 0 & 0 &    0      \\
   0 & 1 & \var^{17} & \var^{34} & 0 & 0 &    0      \\
   0 & 0 &     0     &     0     & 1 & 0 &    0      \\
   0 & 0 &     0     &     0     & 0 & 1 & \var^{17} 
 \end{bmatrix}. \]

We can insert elementary columns in $\expand{\hermite}$ by
\[\rowParLinDiag{\hermite} = \begin{bmatrix}
  (2 )                                                  \\
  [16] & \var^{17} &        & [16]                         \\
  [16] &  -1    & \var^{17} & [16]                         \\
  [2]  &        &   -1   & (3 )                         \\
  [6 ] &        &        & [6 ] & (7 )                  \\
  [16] &        &        & [16] & [16] & \var^{17} & [16]  \\
  [0 ] &        &        & [0 ] & [0 ] &   -1   & (1 ) 
\end{bmatrix} 
\]
which indicate the row operations needed to keep track of the structure of the
original rows of $\mat{H}$.  Now the reduced tuple of row degrees $\shifts[d] =
(2,17,17,3,7,17,1)$ has as its largest entry the \emph{average} row degree
$\degExp=17$ of $\hermite$. Furthermore, $\hermite$ can be reconstructed from
$\rowParLinDiag{\hermite}$, without field operations, as a submatrix of
$\expandMat \cdot \rowParLinDiag{\hermite}$.

\begin{remark}
  \label{rk:def_parlin}
  This partial linearization differs from that of \cite[Theorem~10]{GSSV2012}
  in that
  \begin{itemize}
    \setlength\itemsep{0pt}
    \item it operates on the rows rather than the columns,
    \item it scales the inserted elementary columns by $-1$ compared to the
      elementary rows in \cite{GSSV2012}, and
    \item it keeps the linearized rows together. This reflects the fact that
      the expansion-compression matrix $\expandMat$ is a column permutation of
      the one in the construction of \cite{GSSV2012}, which would be
      \[
        \begin{bmatrix}
       1 & 0 & 0 & 0 &     0     &     0     &    0      \\
       0 & 1 & 0 & 0 & \var^{17} & \var^{34} &    0      \\
       0 & 0 & 1 & 0 &     0     &     0     &    0      \\
       0 & 0 & 0 & 1 &     0     &     0     & \var^{17} 
       \end{bmatrix} \]
       in the example above.
  \end{itemize}
  Our motivation for these changes is that the partial row linearization we use
  here preserves shifted reduced and shifted Popov forms. This will be detailed below.
  \qedexmp
\end{remark}

Formally we define the partial linearization for a matrix $\mat{A}$ and a tuple
$\minDegs$, with the latter not necessarily related to $\rdeg{\mat{A}}$.
Indeed, we will apply this in a situation where the tuple $\minDegs$ is formed
by the diagonal degrees of the Hermite form of $\mat{A}$.

\begin{definition}
  \label{defn:rowparlindiag}
  Let $\mat{A} \in \polMatSpace$, $\minDegs =
  (\minDeg_1,\ldots,\minDeg_\matdim) \in \N^\matdim$ and set
  \[\degExp = 1 + \left\lfloor \frac{(\minDeg_1 + \cdots + \minDeg_\matdim)}{\matdim}
  \right\rfloor .\] For any $i\in\{1,\ldots,\matdim\}$ write $\minDeg_i =
  (\quoExp_i -1) \degExp + \remExp_i$ with $\quoExp_i = \lceil \minDeg_i /
  \degExp \rceil$ and $1 \le \remExp_i \le \degExp$ if $\minDeg_i > 0$, while
  $\quoExp_i = 1$ and $\remExp_i = 0$ if $\minDeg_i = 0$. Set $\expand{\matdim}
  = \quoExp_1 + \cdots + \quoExp_\matdim$ and define $\shifts[d] \in
  \N^{\expand{\matdim}}$ as
  \begin{equation}
    \label{eqn:expandMinDegs}
    \shifts[d] = ( ~\underbrace{\degExp, ~\ldots~, ~\degExp,~\remExp_1}_{\quoExp_1}, ~ \ldots~,~
    \underbrace{~\degExp, ~\ldots, ~\degExp,~\remExp_\matdim}_{\quoExp_\matdim} ~)
  \end{equation}
  as well as the row expansion-compression matrix $\expandMat \in
  \polMatSpace[\matdim][\expand{\matdim}]$ as
  \begin{equation}
    \label{eqn:expandMat}
    \expandMat = 
    \begin{bmatrix}
      1 & \var^\degExp & \cdots & \var^{(\quoExp_1-1)\degExp}   \\
        &           &        &    &  \ddots              \\
        &           &        &    &          & 1 & \var^\degExp & \cdots & \var^{(\quoExp_\matdim-1)\degExp} 
    \end{bmatrix}.
  \end{equation}
  Let $\expand{\mat{A}} \in \polMatSpace[\expand{\matdim}][\matdim]$ be such
  that $\mat{A} = \expandMat \cdot \expand{\mat{A}}$ with all the rows of
  $\expand{\mat{A}}$ having degree at most $\degExp$ except possibly at
  indices $\{\quoExp_1 + \cdots + \quoExp_i, 1\le i\le \matdim\}$. 
  Define $\rowParLinDiag{\mat{A}} \in \polMatSpace[\expand{\matdim}]$ as:
  \bigskip
  
  \noindent
  \begin{itemize}
    \item[(i)] for $1\le i \le \matdim$, the column $\quoExp_1 + \cdots +
      \quoExp_i$ of $\rowParLinDiag{\mat{A}}$ is the column $i$ of $\expand{\mat{A}}$; \bigskip
    \item[(ii)] for $0\le i \le \matdim-1$ and $1 \le j \le \quoExp_{i+1}-1$, the
      column $\quoExp_1 + \cdots + \quoExp_i + j$ of $\rowParLinDiag{\mat{A}}$
      is the column $$\trsp{[0, \cdots, 0, \var^\degExp, -1, 0, \cdots, 0]} \in
      \polMatSpace[\expand{\matdim}][1]$$ with the entry $\var^\degExp$ at row
      index $\quoExp_1 + \cdots + \quoExp_i + j$.
  \end{itemize}
\end{definition}

It follows from this construction that any matrix $\mat{A} \in \polMatSpace$ is
the submatrix of $\expandMat \cdot \, \rowParLinDiag{\mat{A}}$ formed by its columns
at indices $\{\quoExp_1+\cdots+\quoExp_i, 1\le i\le \matdim\}$. 

It is important to note that this transformation has good properties regarding
the computation of $-\minDegs$-shifted reduced forms of $\mat{A}$, where
$\minDegs$ is the tuple of diagonal degrees of the Hermite form of $\mat{A}$.
Indeed, it transforms any $-\minDegs$-reduced form $\mat{R}$ of $\mat{A}$ into
a $-\shifts[d]$-reduced form $\rowParLinDiag{\mat{R}}$ of the transformed
$\rowParLinDiag{\mat{A}}$. In other words, we have the following diagram:
\[
  \begin{array}{ccccc}
    \shiftMat{\shifts[\mu]-\minDegs}\mat{A}
    & \xrightarrow{\;\;\text{reduction}\;\;}
    & -\minDegs\text{-reduced form of } \mat{A} \\
    | & & | \\
    \text{\scriptsize partial linearization} & & \text{\scriptsize partial linearization} \\
    \downarrow & & \downarrow \\
    \shiftMat{\shiftDegExp-\shifts[d]} \rowParLinDiag{\mat{A}} & 
    \xrightarrow{\;\;\text{reduction}\;\;} &
    -\shifts[d]\text{-reduced form of } \rowParLinDiag{\mat{A}}
    \end{array} \; ,
\]
where $\shiftDegExp$ is the uniform tuple
$(\max(\shifts[d]),\ldots,\max(\shifts[d]))$ of length $\expand{\matdim}$. In
terms of efficiency, it is more interesting to perform the reduction step on
$\shiftMat{\shiftDegExp-\shifts[d]} \rowParLinDiag{\mat{A}}$ with the shift
$-\shifts[d]$, rather than on $\mat{A}$ with the shift $-\minDegs$. Indeed,
using the fastest known deterministic reduction algorithm~\cite{GSSV2012}, the
latter computation uses $\softO{\matdim^\expmatmul (\deg(\mat{A}) +
\max(\minDegs))}$ field operations.  On the other hand, the former is in
$\softO{\matdim^\expmatmul (\deg(\mat{A})+\degExp)}$, since $\max(\shifts[d])
\le \degExp$ and $\deg(\rowParLinDiag{\mat{A}}) \le \deg(\mat{A})$. We recall
that $\degExp$ is close to the average of $\minDegs$.

We state this formally in the following lemma.  For the sake of presentation we postpone the proof until later in
\Cref{subsec:properties_proof}.

\begin{lemma}
  \label{lem:rowparlindiag}
  Let $\minDegs = (\minDeg_1,\ldots,\minDeg_\matdim) \in \N^\matdim$, and
  define $\shifts[d]$ as in \cref{eqn:expandMinDegs}.
  \begin{enumerate}[(i)]
    \item If a matrix $\mat{R} \in \polMatSpace$ is $-\minDegs$-reduced with
      $-\minDegs$-column degree $\unishift$, then $\rowParLinDiag{\mat{R}}$ is
      $-\shifts[d]$-reduced with $-\shifts[d]$-column degree $\unishift$.
    \item If two matrices $\mat{A}$ and $\mat{B}$ in $\polMatSpace$ are right
      unimodularly equivalent, then $\rowParLinDiag{\mat{A}}$ and
      $\rowParLinDiag{\mat{B}}$ are also right unimodularly equivalent.
    \item If $\mat{A} \in \polMatSpace$ is nonsingular, $\mat{R}$ is a
      $-\minDegs$-reduced form of $\mat{A}$, and $\mat{R}$ has
      $-\minDegs$-column degree $\unishift$, then $\rowParLinDiag{\mat{R}}$ is
      a $-\shifts[d]$-reduced form of $\rowParLinDiag{\mat{A}}$ with
      $-\shifts[d]$-column degree $\unishift$.
  \end{enumerate}
\end{lemma}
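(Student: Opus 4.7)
The plan is to establish (i) and (ii) separately, and then deduce (iii) by combining them.

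For (i), I will compute the $-\shifts[d]$-column degrees and the $-\shifts[d]$-leading coefficient matrix of $\rowParLinDiag{\mat{R}}$ directly from \Cref{defn:rowparlindiag}. The shift $\shifts[d]$ from \cref{eqn:expandMinDegs} is tailored to match the $\degExp$-adic expansion: writing $e_j = \quoExp_1+\cdots+\quoExp_j$, within block $j$ (rows $e_{j-1}+1,\ldots,e_j$) the first $\quoExp_j-1$ rows receive shift $\degExp$ and the last receives $\remExp_j$. A type-(a) column at position $e_i$ is the expansion of the $i$-th column of $\mat{R}$: its entry at any non-end row of block $j$ has degree at most $\degExp$, while its entry at the end-row $e_j$ is the coefficient of $\var^{(\quoExp_j-1)\degExp}$ in $R_{ji}$ and therefore has degree at most $\remExp_j$. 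Hence this column has $-\shifts[d]$-column degree equal to the $-\minDegs$-column degree of the $i$-th column of $\mat{R}$, namely $0$, and the nonzero entries of its $-\shifts[d]$-leading column (at the end-rows $e_j$ with $\deg(R_{ji})=\minDeg_j$) recover the entries of $\leadingMat[-\minDegs]{\mat{R}}$. Each elementary column has $\var^\degExp$ at some non-end row (contribution $0$) and $-1$ one row below (contribution $-\degExp$, or $-\remExp_j<0$ when the row below is an end-row), so has $-\shifts[d]$-column degree $0$ with leading coefficient $1$ at its own pivot. Permuting rows to put the end-indices $e_1,\ldots,e_\matdim$ first and columns to put the type-(a) ones first, the $-\shifts[d]$-leading coefficient matrix of $\rowParLinDiag{\mat{R}}$ takes the block lower triangular form
\[
  \begin{bmatrix}
    \leadingMat[-\minDegs]{\mat{R}} & 0 \\
    \ast & \idMat
  \end{bmatrix},
\]
whose diagonal blocks are nonsingular (the first because $\mat{R}$ is $-\minDegs$-reduced), so the whole matrix is nonsingular, proving (i).

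For (ii), I argue at the level of column modules. Let $E\subseteq \polRing^{\expand{\matdim}}$ be the submodule generated by the elementary columns; it depends only on the linearization data $\quoExp_1,\ldots,\quoExp_\matdim,\degExp$, not on $\mat{A}$. A direct calculation inside each block $j$ shows that $(v_1,\ldots,v_{\quoExp_j})\in\polRing^{\quoExp_j}$ lies in the block of $E$ iff $v_1 + v_2\var^\degExp + \cdots + v_{\quoExp_j}\var^{(\quoExp_j-1)\degExp} = 0$: the forward direction is immediate from the bidiagonal shape of the elementary block, and the converse is witnessed by the explicit coefficients $q_k = -\sum_{\ell>k} v_\ell \var^{(\ell-k-1)\degExp}\in\polRing$ which satisfy $q_k \var^\degExp - q_{k-1} = v_k$ for all $k$ with $q_0 = q_{\quoExp_j} = 0$. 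Hence $E$ is exactly the kernel of the compression map $\expandMat\colon\polRing^{\expand{\matdim}}\to\polRing^\matdim$. Since the type-(a) columns of $\rowParLinDiag{\mat{A}}$ compress under $\expandMat$ to the columns of $\mat{A}$, the column module of $\rowParLinDiag{\mat{A}}$ equals the preimage under $\expandMat$ of the column module of $\mat{A}$. If $\mat{A}$ and $\mat{B}$ are right-unimodularly equivalent (same column module), so are $\rowParLinDiag{\mat{A}}$ and $\rowParLinDiag{\mat{B}}$; when $\mat{A}$ is nonsingular both are nonsingular square matrices of size $\expand{\matdim}$ (by a rank count: $\matdim + (\expand{\matdim}-\matdim) = \expand{\matdim}$), so equality of column modules yields a unimodular right multiplier.

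Statement (iii) then follows at once: $\mat{A}$ and $\mat{R}$ are right-unimodularly equivalent by definition of a reduced form, so (ii) gives the same for $\rowParLinDiag{\mat{A}}$ and $\rowParLinDiag{\mat{R}}$, while (i) shows that $\rowParLinDiag{\mat{R}}$ is $-\shifts[d]$-reduced with $-\shifts[d]$-column degree $\unishift$, so it is a $-\shifts[d]$-reduced form of $\rowParLinDiag{\mat{A}}$. The main obstacle is the kernel characterization in (ii): the degree bookkeeping of (i) is routine, but pinning $E$ down as exactly $\ker\expandMat$ (not merely contained in it) is what guarantees that the column module of $\rowParLinDiag{\mat{A}}$ is neither too small nor too large, and is precisely what makes partial linearization a faithful reduction of the shifted column reduction problem for $\mat{A}$ to one for $\rowParLinDiag{\mat{A}}$.
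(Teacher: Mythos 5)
Your proof is correct, and while parts (i) and (iii) follow essentially the paper's own route --- (i) by exhibiting the $-\shifts[d]$-leading matrix of $\rowParLinDiag{\mat{R}}$ as a permuted block-triangular matrix built from $\leadingMat[-\minDegs]{\mat{R}}$ and an identity block, (iii) by combining (i) and (ii) --- your argument for (ii) is genuinely different. The paper completes the expansion-compression matrix $\expandMat$ to a unimodular matrix $\begin{bmatrix}\mat{E}\\ \expandMat\end{bmatrix}$ of determinant $1$, checks that $\mat{E}\cdot\triExpMat$ is upper triangular with diagonal entries $-1$ and hence unimodular, and reduces $\begin{bmatrix}\mat{E}\\ \expandMat\end{bmatrix}\rowParLinDiag{\mat{A}}$ to $\begin{bmatrix}\idMat&\mat{0}\\ \mat{0}&\mat{A}\end{bmatrix}$ by block column operations before cancelling the unimodular left factor. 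You instead argue module-theoretically: you prove that the elementary columns generate \emph{exactly} $\ker\expandMat$ (your coefficients $q_k$ are the correct witnesses for surjectivity onto the kernel), whence the column module of $\rowParLinDiag{\mat{A}}$ is the full preimage under $\expandMat$ of that of $\mat{A}$ and depends only on the latter. The two arguments carry the same content --- the unimodularity of $\mat{E}\cdot\triExpMat$ is the matrix incarnation of your kernel characterization --- but yours isolates the structural reason the linearization is faithful, while the paper's yields an explicit unimodular transformation. Two minor caveats. First, you only justify the final step of (ii) (``equal column modules give a unimodular right multiplier'') for nonsingular matrices, whereas the lemma states (ii) for arbitrary square matrices; this costs nothing for (iii), where everything is nonsingular, but for full generality you should either treat equal-size matrices of arbitrary rank or simply push a unimodular $\mat{U}$ with $\mat{A}=\mat{B}\mat{U}$ through your preimage description. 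Second, for the lower-left block $\ast$ of your leading matrix to actually vanish (as the paper asserts) one needs the non-end rows of $\expand{\mat{R}}$ to have degree strictly less than $\degExp$, which holds for the $\degExp$-adic expansion; your weaker block-lower-triangular claim already suffices for invertibility, so nothing is lost.
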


Our algorithm will first build $\rowParLinDiag{\mat{A}}$ and then find a
$-\shifts[d]$-reduced form $\hat{\mat{R}}$ for this new matrix. We note that,
for any $-\minDegs$-reduced form $\mat{R}$ of $\mat{A}$, the matrix
$\hat{\mat{R}} = \rowParLinDiag{\mat{R}}$ is a suitable reduced form and, as
remarked earlier, has the property that it is easy to recover $\mat{R}$.
However, it is not the case that any $\hat{\mat{R}}$ computed by shifted
reduction from $\rowParLinDiag{\mat{A}}$ will have the form $\hat{\mat{R}} =
\rowParLinDiag{\mat{R}}$. In order to solve this issue, we will rely on
normalization as in \Cref{lem:popov_normalize}.  This allows us to deduce
$\rowParLinDiag{\hermite}$ from $\hat{\mat{R}}$, and then the entries of
$\hermite$ can be read off from those of $\rowParLinDiag{\hermite}$.
Diagrammatically we have
\[
  \begin{array}{ccccc}
    \shiftMat{\shifts[\mu]-\minDegs}\mat{A}
    & \xrightarrow{\:\;\text{reduction}\;\:}
    & \shiftMat{\shifts[\mu]-\minDegs}\mat{R}
    & \xrightarrow{\;\:\text{normalization}\;\:}
    & \hermite = \mat{R} \, \cdot \leadingMat[-\minDegs]{\mat{R}}^{-1} \\
    | & & & & | \\
    \text{\scriptsize partial linearization} & & & & \text{\scriptsize partial linearization} \\
    \downarrow & & & & \downarrow \\
    \shiftMat{\shiftDegExp-\shifts[d]} \rowParLinDiag{\mat{A}} & 
    \xrightarrow{\:\;\text{reduction}\;\:} &
    \shiftMat{\shiftDegExp-\shifts[d]} \hat{\mat{R}} &
    \xrightarrow{\;\:\text{normalization}\;\:} & 
    \rowParLinDiag{\hermite} 
    = \hat{\mat{R}} \,\cdot \leadingMat[{-\shifts[d]}]{\hat{\mat{R}}}^{-1}
    \end{array} \; .
\]

\begin{corollary}
  \label{cor:reduction_pivdeg}
  Let $\mat{A} \in \polMatSpace$ be nonsingular and let $\minDegs =
  (\minDeg_1,\ldots,\minDeg_\matdim) \in \N^\matdim$ denote the degrees of the
  diagonal entries of the Hermite form $\hermite$ of $\mat{A}$. Using the notation
  from \Cref{defn:rowparlindiag}, we have that
  \begin{enumerate}[(i)]
    \item $\leadingMat[{-\shifts[d]}]{\rowParLinDiag{\hermite}}$ is the
      identity matrix,
    \item if $\hat{\mat{R}} \in \polMatSpace[\expand{\matdim}]$ is a
      $-\shifts[d]$-reduced form of $\rowParLinDiag{\mat{A}}$, then
      $\rowParLinDiag{\hermite} = \hat{\mat{R}} \, \cdot
      \leadingMat[{-\shifts[d]}]{\hat{\mat{R}}}^{-1}$.
  \end{enumerate}
\end{corollary}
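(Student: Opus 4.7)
\medskip
\noindent\textbf{Plan of proof.} The plan is to establish (i) by directly inspecting $\leadingMat[{-\shifts[d]}]{\rowParLinDiag{\hermite}}$ column by column, then deduce (ii) by combining (i) with \Cref{cor:hermite_knowndeg}, \Cref{lem:rowparlindiag}(iii), and \Cref{lem:popov_normalize}.

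For (i), I would handle the two types of columns from \Cref{defn:rowparlindiag} separately. On an elementary column at position $\quoExp_1 + \cdots + \quoExp_i + j$ with $1 \le j < \quoExp_{i+1}$, the entry $\var^\degExp$ sits at a row whose $\shifts[d]$-value is $\degExp$, so its $-\shifts[d]$-weighted degree is $0$ with leading coefficient $1$; the $-1$ entry, at the next row, contributes a strictly negative weighted degree (using $\degExp \ge 1$ and the fact that $\remExp_{i+1} = 0$ forces $\quoExp_{i+1} = 1$, a case in which no elementary column exists in block $i+1$). On the ``Hermite'' column at position $\quoExp_1 + \cdots + \quoExp_{i+1}$, which is column $i+1$ of $\expand{\hermite}$, I would exploit the Hermite structure of $\hermite$: the entries $h_{j,i+1}$ vanish for $j < i+1$, have degree strictly less than $\minDeg_j$ for $j > i+1$, and $h_{i+1,i+1}$ is monic of degree $\minDeg_{i+1} = (\quoExp_{i+1}-1)\degExp + \remExp_{i+1}$. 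Splitting row $j$ into its $\quoExp_j$ pieces of degree at most $\degExp - 1$ (with only the last piece allowed to reach $\remExp_j$), one checks that every piece contributes a strictly negative $-\shifts[d]$-weighted degree to this column, except the last piece of row $i+1$, whose leading term $\var^{\remExp_{i+1}}$ sits at row $\quoExp_1 + \cdots + \quoExp_{i+1}$ where the shift is $\remExp_{i+1}$, giving weighted degree $0$ with coefficient $1$. Hence each column of $\leadingMat[{-\shifts[d]}]{\rowParLinDiag{\hermite}}$ is the corresponding standard basis vector and the leading matrix is $\idMat$.

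For (ii), I would first apply \Cref{cor:hermite_knowndeg} to obtain that $\hermite$ is itself a $-\minDegs$-reduced form of $\mat{A}$ with $-\minDegs$-column degree $\unishift$. Then \Cref{lem:rowparlindiag}(iii) yields that $\rowParLinDiag{\hermite}$ is a $-\shifts[d]$-reduced form of $\rowParLinDiag{\mat{A}}$ with uniform $-\shifts[d]$-column degree $\unishift$. Since $\hat{\mat{R}}$ is by hypothesis also a $-\shifts[d]$-reduced form of $\rowParLinDiag{\mat{A}}$, it is right unimodularly equivalent to $\rowParLinDiag{\hermite}$; the standard fact that right-unimodularly equivalent shifted column reduced matrices share the same multiset of shifted column degrees forces $\cdeg[{-\shifts[d]}]{\hat{\mat{R}}} = \unishift$ as well. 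Thus \Cref{lem:popov_normalize} applies, giving $\hat{\mat{R}} \cdot \leadingMat[{-\shifts[d]}]{\hat{\mat{R}}}^{-1} = \rowParLinDiag{\hermite} \cdot \leadingMat[{-\shifts[d]}]{\rowParLinDiag{\hermite}}^{-1}$, and combined with (i) this equals $\rowParLinDiag{\hermite}$, proving (ii).

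The main obstacle will be the careful bookkeeping in (i), where one must verify for every column that all entries except a single designated one have strictly negative $-\shifts[d]$-weighted degree. The delicate edge cases—when $\minDeg_i = 0$ (forcing $\quoExp_i = 1$ and $\remExp_i = 0$) and when the $-1$ entry of an elementary column lands on the ``$\remExp_{i+1}$'' row of its block—must be checked individually to conclude that the leading matrix is exactly $\idMat$ rather than merely a permutation.
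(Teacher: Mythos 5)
Your proposal is correct and follows essentially the same route as the paper: part (i) is exactly the column-by-column inspection that the paper compresses into ``follows from the construction'' (your treatment of the edge cases $\remExp_{i+1}=0$ and of the row carrying the $-1$ entry is the right bookkeeping), and part (ii) is the paper's argument via \Cref{lem:rowparlindiag} and \Cref{lem:popov_normalize}, with the uniformity of $\cdeg[{-\shifts[d]}]{\hat{\mat{R}}}$ made explicit where the paper leaves it implicit. No gaps.
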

\begin{proof}
  $(i)$ follows from the construction of $\rowParLinDiag{\hermite}$. From
  \Cref{lem:rowparlindiag} we have that $\rowParLinDiag{\hermite}$ is a
  $-\shifts[d]$-reduced form of $\mat{A}$, so that $(ii)$ follows from $(i)$
  and \Cref{lem:popov_normalize}.
\end{proof}

In particular, $\hermite$ can be recovered as being the submatrix of
$\expandMat \, \cdot  \hat{\mat{R}} \, \leadingMat[{-\shifts[d]}]{\hat{\mat{R}}}^{-1}$
formed by its columns $\{\quoExp_1+\cdots+\quoExp_i, 1\le i\le \matdim\}$.

\begin{example}[Reducing the diagonal degrees]
  \label{eg:reducing_mindeg}
  Consider a matrix $\mat{A} \in \polMatSpace[4]$ such that its Hermite form
  $\hermite$ has diagonal degrees $\minDegs = (2,37,7,18)$.
  As shown earlier,
  \[\rowParLinDiag{\hermite} = \begin{bmatrix}
    (2 )                                                  \\
    [16] & \var^{17} &        & [16]                         \\
    [16] &  -1~    & \var^{17} & [16]                         \\
    [2]  &        &   -1~   & (3 )                         \\
    [6 ] &        &        & [6 ] & (7 )                  \\
    [16] &        &        & [16] & [16] & \var^{17} & [16]  \\
    [0 ] &        &        & [0 ] & [0 ] &   -1~   & (1 ) 
  \end{bmatrix}. \]
  We see that $\shifts[d]=(2,17,17,3,7,17,1)$ corresponds to the row degree of
  $\rowParLinDiag{\hermite}$, that this matrix has $-\shifts[d]$-column
  degree $\unishift$ and that its $-\shifts[d]$-leading matrix is the identity.
  In particular, it is $-\shifts[d]$-reduced. In addition, from $(ii)$ of
  \Cref{lem:rowparlindiag}, $\rowParLinDiag{\hermite}$ and
  $\rowParLinDiag{\mat{A}}$ are right-unimodularly equivalent. As a
  result, $\rowParLinDiag{\hermite}$ is a $-\shifts[d]$-reduced form of
  $\rowParLinDiag{\mat{A}}$.

  Let $\hat{\mat{R}}$ be any $-\shifts[d]$-reduced form of
  $\rowParLinDiag{\mat{A}}$. Then $\hat{\mat{R}}$ also has $-\shifts[d]$-column
  degree $\unishift$, its $-\shifts[d]$-leading matrix is invertible, and its
  degree profile is 
  \[\hat{\mat{R}} = \begin{bmatrix}
    [2 ] & [2 ] & [2 ] & [2 ] & [2 ] & [2 ] & [2 ] \\
    [17] & [17] & [17] & [17] & [17] & [17] & [17] \\
    [17] & [17] & [17] & [17] & [17] & [17] & [17] \\
    [3 ] & [3 ] & [3 ] & [3 ] & [3 ] & [3 ] & [3 ] \\
    [7 ] & [7 ] & [7 ] & [7 ] & [7 ] & [7 ] & [7 ] \\
    [17] & [17] & [17] & [17] & [17] & [17] & [17] \\
    [1 ] & [1 ] & [1 ] & [1 ] & [1 ] & [1 ] & [1 ]
  \end{bmatrix}. \]
  While $\hat{\mat{R}}$ is generally not of the form $\rowParLinDiag{\mat{R}}$
  for $\mat{R}$ some $-\minDegs$-reduced form of $\mat{A}$, it still follows
  from \Cref{lem:popov_normalize} that $\rowParLinDiag{\hermite} =
  \hat{\mat{R}} \, \cdot \leadingMat[{-\shifts[d]}]{\hat{\mat{R}}}^{-1}$. \qedexmp
\end{example}

\subsection{Algorithm and computational cost}
\label{subsec:algo_proof}

The results in the previous subsection lead to
\Cref{algo:hermite_knowndeg} for the computation of the Hermite form $\hermite$ from $\mat{A}$ and
$\minDegs$. Its main computational task is to compute a column reduced form of
a matrix of dimension $\bigO{\matdim}$ and degree $\bigO{\deg(\mat{A})}$
(Step~12). This can be done efficiently and deterministically with the
algorithm in~\cite[Section~8]{GSSV2012}.

\begin{proposition}
  \label{prop:knowndeg_hermite}
  Let $\mat{A} \in \polMatSpace$ be nonsingular, and let $\minDegs \in
  \N^\matdim$ be the degrees of the diagonal entries of the Hermite form of
  $\mat{A}$. On input $\mat{A}$ and $\minDegs$,
  \Cref{algo:hermite_knowndeg} computes the Hermite form of $\mat{A}$
  using $\softO{\matdim^\expmatmul \deg(\mat{A})}$ field operations.
\end{proposition}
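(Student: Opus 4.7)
The plan is to verify both the correctness and the cost of \Cref{algo:hermite_knowndeg} by relying essentially on \Cref{cor:reduction_pivdeg} for correctness and on the deterministic column reduction of~\cite{GSSV2012} for the complexity. First I would describe the algorithm as: build $\rowParLinDiag{\mat{A}}$ of dimension $\expand{\matdim} < 2\matdim$ and degree at most $\deg(\mat{A})$ from $\mat{A}$ and $\minDegs$; compute a $-\shifts[d]$-column reduced form $\hat{\mat{R}}$ of $\rowParLinDiag{\mat{A}}$; normalize this form as $\hat{\mat{R}} \cdot \leadingMat[{-\shifts[d]}]{\hat{\mat{R}}}^{-1}$ to obtain $\rowParLinDiag{\hermite}$; and finally extract $\hermite$ by reading off the columns of $\expandMat \cdot \rowParLinDiag{\hermite}$ at indices $\{\quoExp_1+\cdots+\quoExp_i,\; 1\le i\le \matdim\}$.

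For correctness, I would argue that the construction of $\rowParLinDiag{\mat{A}}$ is the one from \Cref{defn:rowparlindiag} and that $\rowParLinDiag{\mat{A}}$ is itself nonsingular (since its determinant equals, up to sign, that of $\mat{A}$ via the block structure of the elementary columns). \Cref{cor:reduction_pivdeg}$(ii)$ then yields $\rowParLinDiag{\hermite} = \hat{\mat{R}} \cdot \leadingMat[{-\shifts[d]}]{\hat{\mat{R}}}^{-1}$, and the extraction step recovers $\hermite$ from the identity $\mat{A} = \expandMat \cdot \expand{\mat{A}}$ applied to $\hermite$. The only subtle point here is that $\leadingMat[{-\shifts[d]}]{\hat{\mat{R}}}$ is invertible, which follows from $\hat{\mat{R}}$ being $-\shifts[d]$-reduced and having the same $-\shifts[d]$-column degree $\unishift$ as $\rowParLinDiag{\hermite}$, by \Cref{lem:rowparlindiag}$(iii)$ together with \Cref{cor:hermite_knowndeg}.

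For the complexity, the key quantitative observations are that $\matdim \degExp \le \deg(\det(\mat{A})) + \matdim \le \matdim (\deg(\mat{A})+1)$, so $\degExp \le \deg(\mat{A}) + 1$ and $\max(\shifts[d]) \le \degExp$; and that $\deg(\rowParLinDiag{\mat{A}}) \le \max(\deg(\mat{A}), \degExp) \in \bigO{\deg(\mat{A})}$. Therefore $\shiftMat{\shiftDegExp-\shifts[d]}\rowParLinDiag{\mat{A}}$ has dimension $\expand{\matdim} < 2\matdim$ and degree in $\bigO{\deg(\mat{A})}$, so that one call to the deterministic column reduction algorithm of~\cite[Section~8]{GSSV2012} computes $\hat{\mat{R}}$ in $\softO{\matdim^{\expmatmul} \deg(\mat{A})}$ field operations. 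Building $\rowParLinDiag{\mat{A}}$ is free (a rearrangement of coefficients), inverting the constant matrix $\leadingMat[{-\shifts[d]}]{\hat{\mat{R}}}$ costs $\bigO{\matdim^{\expmatmul}}$, the final polynomial matrix product $\hat{\mat{R}} \cdot \leadingMat[{-\shifts[d]}]{\hat{\mat{R}}}^{-1}$ fits within $\softO{\matdim^\expmatmul \deg(\mat{A})}$ by standard polynomial matrix multiplication, and the extraction of $\hermite$ from $\expandMat \cdot \rowParLinDiag{\hermite}$ costs no more than $\bigO{\matdim^2 \deg(\mat{A})}$. Summing gives the claimed $\softO{\matdim^{\expmatmul} \deg(\mat{A})}$ bound.

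The main obstacle I expect is not in the complexity accounting, which is routine once the degree bound $\degExp \le \deg(\mat{A}) + 1$ is noted, but rather in making sure that the algorithm as stated really corresponds to the diagram preceding \Cref{cor:reduction_pivdeg} and that the normalization step does produce $\rowParLinDiag{\hermite}$ even though $\hat{\mat{R}}$ need not be of the form $\rowParLinDiag{\mat{R}}$. This is precisely what \Cref{lem:popov_normalize} combined with \Cref{cor:reduction_pivdeg} guarantees, and it is the conceptual heart of the argument.
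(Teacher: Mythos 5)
Your proposal is correct and follows essentially the same route as the paper's proof: correctness via \Cref{cor:reduction_pivdeg} (with \Cref{lem:popov_normalize} handling the fact that $\hat{\mat{R}}$ need not equal $\rowParLinDiag{\mat{R}}$), and the cost via the bounds $\degExp \le 1 + \deg(\mat{A})$ and $\expand{\matdim} < 2\matdim$ feeding into one call to the deterministic column reduction of~\cite{GSSV2012}. The only (immaterial) difference is in the normalization step, where the paper bounds the product $\hat{\mat{R}} \cdot \leadingMat[{-\shifts[d]}]{\hat{\mat{R}}}^{-1}$ by complete linearization of the rows, giving $\bigO{\matdim^\expmatmul \lceil \sumVec{\shifts[d]}/\matdim\rceil}$, whereas you use the coarser but still sufficient coefficient-wise multiplication bound.
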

\begin{proof}
  The correctness of the algorithm follows directly from
  \Cref{cor:reduction_pivdeg} and from the remark that a matrix $\mat{R} \in
  \polMatSpace[\expand{\matdim}]$ is $-\shifts[d]$-column reduced if and only
  if $\mat{D} \cdot \mat{R}$ is column reduced (for the uniform shift), where
  $\mat{D}$ is the diagonal matrix at Step~11.

  Furthermore, we have $\deg(\mat{D}) \le \degExp$ and
  $\deg(\rowParLinDiag{\mat{A}}) \le \max(\deg(\mat{A}),\degExp)$. Since 
  $\degExp = 1 + \lfloor \sumVec{\minDegs} / \matdim \rfloor$, and
  as $\hermite$ is in Hermite form and $\minDegs$ are the degrees of its
  diagonal entries, we have $\sumVec{\minDegs} = \deg(\det(\hermite)) =
  \deg(\det(\mat{A})) \le \matdim \deg(\mat{A})$. Thus, $\degExp \le 1 +
  \deg(\mat{A})$ and the degrees of $\mat{D}$ and
  $\rowParLinDiag{\mat{A}}$ are both at most $1 + \deg(\mat{A})$. Their product $\mat{D} \cdot \rowParLinDiag{\mat{A}}$ therefore has degree at most $2 +
  2\deg(\mat{A})$. On the other hand, these matrices have dimension
  \[\expand{\matdim} = \sum_{i=1}^\matdim \quoExp_i \le \sum_{i=1}^\matdim (1 +
    \minDeg_i / \degExp) = \matdim + \frac{\sumVec{\minDegs}}{1 + \lfloor
    \sumVec{\minDegs}/\matdim \rfloor} < 2\matdim.\] As a result, Step~12 uses
    $\softO{\matdim^\expmatmul \deg(\mat{A})}$ field
    operations~\cite[Theorem~18]{GSSV2012}.

  Concerning Step~13, from \Cref{cor:reduction_pivdeg} the matrix
  $\hat{\mat{R}}$ has row degree $\shifts[d]$, and
  $\leadingMat[{-\shifts[d]}]{\hat{\mat{R}}}^{-1}$ is a constant matrix. Thus
  the computation of $\hat{\mat{R}} \cdot
  \leadingMat[{-\shifts[d]}]{\hat{\mat{R}}}^{-1}$ can be performed via complete
  linearization of the rows of $\hat{\mat{R}}$, using $\bigO{\matdim^\expmatmul
    \lceil \sumVec{\shifts[d]} / \matdim \rceil}$ operations. This concludes
    the proof since $\sumVec{\shifts[d]} = \sumVec{\minDegs} =
    \deg(\det(\mat{\hermite})) = \deg(\det(\mat{A})) \le \matdim
    \deg(\mat{A})$.
\end{proof}

\begin{algorithm}[t]
  \caption{$\mathrm{HermiteKnownDegree}(\mat{A},\shifts,\minDegs)$}
\label{algo:hermite_knowndeg}

\begin{algorithmic}[1]
\REQUIRE{$\mat{A}\in\polMatSpace$ a nonsingular matrix, $\minDegs =
(\minDeg_1,\ldots,\minDeg_\matdim) \in \N^\matdim$ the degrees of the diagonal
entries of the Hermite form of $\mat{A}$.}

\ENSURE{the Hermite form of $\mat{A}$.}

\STATE{$\degExp := 1 + \lfloor
(\minDeg_1 + \cdots + \minDeg_\matdim) / \matdim \rfloor$;}

\FOR{$i\in \{1,\ldots,\matdim\}$}
\IF{$\minDeg_i>0$}
\STATE{$\quoExp_i := \lceil \degExp / \minDeg_i \rceil$; $\remExp_i := \minDeg_i - (\quoExp_i-1) \degExp$;}
\ELSE
\STATE{$\quoExp_i := 1$; $\remExp_i = 0$;}
\ENDIF
\ENDFOR

\STATE{$\expand{\matdim} := \quoExp_1 + \cdots + \quoExp_\matdim$ and
$\expandMat \in \matSpace[\expand{\matdim}][\matdim]$ as in
\cref{eqn:expandMat}; }

\STATE{$\shifts[d] = (\shift[d]{1},\ldots,\shift[d]{\expand{\matdim}})$
as in \cref{eqn:expandMinDegs};}

\STATE{$\mat{D} := \mathrm{Diag}(\var^{\degExp-\shift[d]{1}},\ldots,\var^{\degExp-\shift[d]{\expand{\matdim}}})$;}

\STATE{$\mat{D} \hat{\mat{R}} :=$ column reduced form of $\mat{D} \cdot \rowParLinDiag{\mat{A}}$;}
\hfill \COMMENT{\emph{using the algorithm in~\cite{GSSV2012}}}

\STATE{$\hat{\hermite} := \expandMat \, \cdot \hat{\mat{R}}
\, \cdot \leadingMat[{-\shifts[d]}]{\hat{\mat{R}}}^{-1}$;}

\STATE{$\hermite :=$ the submatrix of $\hat{\hermite}$ formed by its columns
$\{\quoExp_1+\cdots+\quoExp_i, 1\le i\le \matdim\}$}

\RETURN{$\hermite$;}
\end{algorithmic}
\end{algorithm}

Combining \Cref{algo:hermite_knowndeg,alg:hermiteDiagonal} results in a deterministic 
algorithm for computing the Hermite form of $\mat{A}$
in $\softO{\matdim^\expmatmul \deg(\mat{A})}$ field operations. 

\begin{example}
  \label{eg:ctd_knowndeg}
  Let $\field = \Z_7$ be the field with $7$ elements, and consider the matrix
  $\mat{A} \in \polMatSpace[3]$ from \Cref{eg:ctd_hermitediagonal}:
  \[
  \mat{A} = 
  \begin{bmatrix}
                    6x + 1 &       2x^3 + x^2 + 6x + 1 &                      3 \\
    4x^5 + 5x^4 + 4x^2 + x &    6x^5 + 5x^4 + 2x^3 + 4 & x^4 + 5x^3 + 6x^2 + 5x \\
                         2 & 2x^5 + 5x^4 + 5x^3 + 6x^2 &                      6
  \end{bmatrix} .
  \]
  According to \Cref{eg:ctd_hermitediagonal} the diagonal entries of the
  Hermite form of $\mat{A}$ have degrees $\minDegs = (0,1,9)$. Note that
  $\minDegs$ is non-uniform, and $\max(\minDegs) - \min(\minDegs) =
  \deg(\det(\mat{A})) - 1$.
  
  Using the column reduction algorithm in \cite{GSSV2012} to compute a
  $-\minDegs$-reduced form of $\mat{A}$ would imply working on the matrix
  $\shiftMat{\shifts[\mu]-\minDegs}\mat{A} = \shiftMat{(9,8,0)}\mat{A}$, which
  has degree $13 = \deg(\det(\mat{A})) + \deg(\mat{A}) - 2$. In this case
  partial linearization gives us a $5 \times 5$ matrix
  $\rowParLinDiag{\mat{A}}$ and a shift $\shifts[d]$ such that
  $\deg(\rowParLinDiag{\mat{A}}) \le \deg(\mat{A})$ and $\max(\shifts[d]) -
  \min(\shifts[d]) \le \deg(\mat{A})$. In particular, the matrix
  $\shiftMat{\shifts[m]-\shifts[d]}\rowParLinDiag{\mat{A}}$ to be reduced has
  degree $8 \le 2 \deg(\mat{A})$.
  
  To see this, \Cref{defn:rowparlindiag} gives the parameters $\degExp=4$,
  $\tuple{\quoExp} = (1,1,3)$, $\tuple{\remExp} = (0,1,1)$, $\shifts[d] = (0,
  1, 4, 4, 1)$, the expansion-compression matrix
  \[
    \expandMat =
    \begin{bmatrix}
      1 & 0 & 0 &   0 &   0 \\
      0 & 1 & 0 &   0 &   0 \\
      0 & 0 & 1 & x^4 & x^8
    \end{bmatrix},
  \]
  and finally
  \[
    \rowParLinDiag{\mat{A}} =
    {
    \begin{bmatrix}
                   6x + 1 &    2x^3 + x^2 + 6x + 1 &   0 &   0 &                      3 \\
   4x^5 + 5x^4 + 4x^2 + x & 6x^5 + 5x^4 + 2x^3 + 4 &   0 &   0 & x^4 + 5x^3 + 6x^2 + 5x \\
                        2 &            5x^3 + 6x^2 & x^4 &   0 &                      6 \\
                        0 &                 2x + 5 &   6 & x^4 &                      0 \\
                        0 &                      0 &   0 &   6 &                      0
    \end{bmatrix}.
    }
  \]
  Computing a $-\shifts[d]$-reduced form for $\rowParLinDiag{\mat{A}}$ gives
  \[
    \hat{\mat{R}} =
    {
    \begin{bmatrix}
                    5 &                    1 &   0 &                   1 &                          2 \\
                    5 &               4x + 4 &   0 &              3x + 5 &                     6x + 3 \\
       x^3 + 6x^2 + 4 &    3x^4 + x^3 + 6x^2 & x^4 & x^3 + 5x^2 + 4x + 3 & 6x^4 + 2x^3 + 3x^2 + x + 6 \\
      3x^3 + 4x^2 + 6 & 4x^4 + 4x^3 + 4x + 5 &   6 &        x^3 + 2x + 4 &       5x^4 + 2x^3 + 4x + 2 \\
                    6 &                    x &   0 &                   6 &                          0
    \end{bmatrix}
    } .
  \]
  Note that $\rdeg{\mat{R}} = \shifts[d]$, and
  more precisely,
  \[
    \leadingMat[{-\shifts[d]}]{\mat{R}}
    =
    \begin{bmatrix}
      5 & 1 & 0 & 1 & 2 \\
      0 & 4 & 0 & 3 & 6 \\
      0 & 3 & 1 & 0 & 6 \\
      0 & 4 & 0 & 0 & 5 \\
      0 & 1 & 0 & 0 & 0
    \end{bmatrix}.
  \]
  Normalizing $\hat{\mat{R}}$ via $\hat{\mat{R}} \cdot
    \leadingMat[{-\shifts[d]}]{\mat{R}}^{-1}$ gives
  \begin{align*}
    \rowParLinDiag{\hermite} 
    & =
    \begin{bmatrix}
                    1 &                    0 &   0 &   0 &                   0 \\
                    1 &                x + 6 &   0 &   0 &                   0 \\
      3x^3 + 4x^2 + 5 & 4x^3 + 5x^2 + 6x + 4 & x^4 &   0 &    3x^3 + 3x^2 + 4x \\
      2x^3 + 5x^2 + 4 &     2x^3 + 3x^2 + 3x &   6 & x^4 & x^3 + 4x^2 + 6x + 4 \\
                    4 &                    3 &   0 &   6 &               x + 2
    \end{bmatrix}.
  \end{align*}
  Performing the inverse linearization, by taking columns $(1,2,5)$ of
  $\expandMat \cdot \rowParLinDiag{\hermite}$, directly gives the entries in the
  Hermite form of $\mat{A}$:
  \[
    \hermite =
    \begin{bmatrix}
            1 &      0 &                                                        0 \\
            1 &  x + 6 &                                                        0 \\
       h_{31} & h_{32} & x^9 + 2x^8 + x^7 + 4x^6 + 6x^5 + 4x^4 + 3x^3 + 3x^2 + 4x
    \end{bmatrix}
  \]
  with
  \begin{align*}
    h_{31} & = 4x^8 + 2x^7 + 5x^6 + 4x^4 + 3x^3 + 4x^2 + 5  ,    \\
    h_{32} & = 3x^8 + 2x^7 + 3x^6 + 3x^5 + 4x^3 + 5x^2 + 6x + 4 .
    \qedexmp
  \end{align*}
  
\end{example}

\subsection{Proof of \texorpdfstring{\Cref{lem:rowparlindiag}}{Lemma 5.5}}
\label{subsec:properties_proof}

Let us now give the detailed proof of \Cref{lem:rowparlindiag}. 

$(i)$
Since $\mat{R} \in \polMatSpace$ is $-\minDegs$-reduced with
$-\minDegs$-column degree $\unishift$, it has row degree $\minDegs$ since otherwise
the invertible matrix $\leadingMat[-\minDegs]{\mat{R}}$ would have a zero row.
We show that $\leadingMat[{-\shifts[d]}]{\rowParLinDiag{\mat{R}}}$ is a
permutation of the rows and columns of $\begin{bmatrix}
  \leadingMat[-\minDegs]{\mat{R}} & \mat{0} \\ \mat{0} & \idMat \end{bmatrix}
\in \matSpace[\expand{\matdim}]$. In particular,
$\leadingMat[{-\shifts[d]}]{\rowParLinDiag{\mat{R}}}$ is invertible and thus
$\rowParLinDiag{\mat{R}}$ is $-\shifts[d]$-reduced.

Let us first observe it on an example. We consider the case $\minDegs =
(2,37,7,18)$. Then $\mat{R}$ has the following degree profile,
\[\mat{R} = \begin{bmatrix}
    [2 ] & [2 ] & [2 ] & [2 ] \\
    [37] & [37] & [37] & [37] \\
    [ 7] & [ 7] & [ 7] & [ 7] \\
    [18] & [18] & [18] & [18]  
\end{bmatrix} \]
with invertible $-\minDegs$-leading matrix. Following the construction in
\Cref{defn:rowparlindiag}, we have $\shifts[d] = (2,17,17,3,7,17,1)$
and
\[\rowParLinDiag{\mat{R}} = \begin{bmatrix}
  [2 ] &        &        & [2 ] & [2 ] &        & [2 ] \\
  [16] & \var^{17} &        & [16] & [16] &        & [16] \\
  [16] &  -1    & \var^{17} & [16] & [16] &        & [16] \\
  [3 ] &        &   -1   & [3 ] & [3 ] &        & [3 ] \\
  [7 ] &        &        & [7 ] & [7 ] &        & [7 ] \\
  [16] &        &        & [16] & [16] & \var^{17} & [16] \\
  [1 ] &        &        & [1 ] & [1 ] &   -1   & [1 ] 
\end{bmatrix}. \]
Observe that $\mat{R}$ has $-\shifts[d]$-column degree at most $\unishift$
componentwise, and that its $-\shifts[d]$-leading matrix is
\[\leadingMat[{-\shifts[d]}]{\rowParLinDiag{\mat{R}}} = \begin{bmatrix}
  \ell_{11} &   &   & \ell_{12} & \ell_{13} &   & \ell_{14} \\
            & 1 &   &                                       \\
            &   & 1 &                                       \\
  \ell_{21} &   &   & \ell_{22} & \ell_{23} &   & \ell_{24} \\
  \ell_{31} &   &   & \ell_{32} & \ell_{33} &   & \ell_{34} \\
            &   &   &           &           & 1 &           \\
  \ell_{41} &   &   & \ell_{42} & \ell_{42} &   & \ell_{42}
\end{bmatrix}, \]
where $(\ell_{ij})_{1\le i,j \le 4} = \leadingMat[-\minDegs]{\mat{R}}$. Since
$\leadingMat[-\minDegs]{\mat{R}}$ is invertible,
$\leadingMat[{-\shifts[d]}]{\rowParLinDiag{\mat{R}}}$ is invertible as well. Furthermore  $\rowParLinDiag{\mat{R}}$ is $-\shifts[d]$-reduced and that
it has $-\shifts[d]$-column degree $\unishift$.

In the general case, by construction of $\rowParLinDiag{\mat{R}}$ one can
check that $\leadingMat[{-\shifts[d]}]{\rowParLinDiag{\mat{R}}}$ is a matrix in
$\matSpace[\expand{\matdim}]$ such that 
\begin{itemize}
  \item[(a)] its $\matdim \times \matdim$ submatrix with row and column indices in
$\{\quoExp_1+\cdots+\quoExp_i, 1\le i\le \matdim\}$ is equal to
$\leadingMat[-\minDegs]{\mat{R}}$,
  \item[(b)] its $(\expand{\matdim}-\matdim) \times (\expand{\matdim}-\matdim)$
    submatrix with row and column indices in $\{1,\ldots,\expand{\matdim}\} -
    \{\quoExp_1+\cdots+\quoExp_i, 1\le i\le \matdim\}$ is equal to the
    identity matrix,
  \item[(c)] its other entries are all zero.
\end{itemize}
This directly implies that
$\leadingMat[{-\shifts[d]}]{\rowParLinDiag{\mat{R}}}$ is invertible. In addition by construction
$\rowParLinDiag{\mat{R}}$ has $-\shifts[d]$-column degree at most
$\unishift$ componentwise.  The fact that
$\leadingMat[{-\shifts[d]}]{\rowParLinDiag{\mat{R}}}$ is invertible also implies
that $\rowParLinDiag{\mat{R}}$ has $-\shifts[d]$-column degree exactly
$\unishift$.

\smallskip
$(ii)$ Denote by $\triExpMat \in
\polMatSpace[\expand{\matdim}][\expand{\matdim}-\matdim]$ the submatrix of
$\rowParLinDiag{\mat{A}}$ formed by its columns at indices $\{\quoExp_1 +
  \cdots + \quoExp_i + j, 1 \le j \le \quoExp_{i+1}-1, 0\le i \le \matdim-1
\}$. Up to a permutation of its columns, $\rowParLinDiag{\mat{A}}$ is then
$[\triExpMat \;\; \widetilde{\mat{A}}]$. In particular, $\expandMat \, \cdot
\rowParLinDiag{\mat{A}}$ is right-unimodularly equivalent to $\expandMat \,
[\triExpMat \;\; \widetilde{\mat{A}}] = [\mat{0} \;\; \mat{A}]$. For the remainder of this proof we 
will use the shorthand notation $\expandMat \, \cdot
\rowParLinDiag{\mat{A}} \equiv [\mat{0} \;\; \mat{A}]$.

Define the matrix $\mat{E} \in
\matSpace[(\expand{\matdim}-\matdim)][\expand{\matdim}]$ whose row $\quoExp_1 +
\cdots + \quoExp_i + j - i$ is the coordinate vector with $1$ at index
$\quoExp_1 + \cdots + \quoExp_i + j + 1$, for all $1 \le j \le
\quoExp_{i+1}-1$ and $0\le i \le \matdim-1$. That is, we have
\[
  \begin{bmatrix} \mat{E} \\ \expandMat \end{bmatrix} = 
  \begin{bmatrix}
    0 &       1                                               \\
      &              & \ddots &                               \\
      &              &        &            1                  \\
      &              &        &                             & \ddots \\
      &              &        &                             &        & 0 &     1                                               \\
      &              &        &                             &        &   &            & \ddots &                               \\
      &              &        &                             &        &   &            &        &            1                  \\
    1 & \var^\degExp & \cdots & \var^{(\quoExp_1-1)\degExp}   \\
      &              &        &                             & \ddots               \\
      &              &        &                             &        & 1 & \var^\degExp & \cdots & \var^{(\quoExp_\matdim-1)\degExp} 
  \end{bmatrix}.
\]
By construction, the matrix $\mat{U} := \mat{E} \cdot \triExpMat$ is upper
triangular with diagonal entries $-1$, and thus unimodular. As a result,
\[
  \begin{bmatrix}
    \mat{E} \\
    \expandMat
  \end{bmatrix}
  \rowParLinDiag{\mat{A}}
  \;\;\equiv\;\;
  \begin{bmatrix}
    \mat{E} \\
    \expandMat
  \end{bmatrix}
  \begin{bmatrix}
    \triExpMat & \expand{\mat{A}}
  \end{bmatrix}
  =
  \begin{bmatrix}
    \mat{U} & \anyMat \\
    \mat{0} & \mat{A}
  \end{bmatrix}
  \;\;\equiv\;\;
  \begin{bmatrix} 
    \idMat & \mat{0} \\ \mat{0} & \mat{A}
  \end{bmatrix}.
\]
Similarly, we have that 
$\begin{bmatrix}
    \mat{E} \\
    \expandMat
  \end{bmatrix}
\rowParLinDiag{\mat{B}} \equiv
\begin{bmatrix} \idMat & \mat{0} \\ \mat{0} & \mat{B} \end{bmatrix}$.

Since $\mat{A} \equiv \mat{B}$ by assumption, we obtain
$\begin{bmatrix}
    \mat{E} \\
    \expandMat
  \end{bmatrix}
\rowParLinDiag{\mat{A}}
\equiv
\begin{bmatrix}
    \mat{E} \\
    \expandMat
  \end{bmatrix}
\rowParLinDiag{\mat{B}}$.
This implies that $\rowParLinDiag{\mat{A}} \equiv \rowParLinDiag{\mat{B}}$
since the matrix $\begin{bmatrix} \mat{E} \\ \expandMat \end{bmatrix}$ is
invertible (more precisely, its determinant is $1$).

\medskip
$(iii)$ is a direct consequence of $(i)$ and $(ii)$.

\section{Reduction to almost uniform degrees in Hermite form computation}
\label{sec:parlin}

As mentioned in \Cref{subsec:degdet}, we aim at a cost bound which involves the
generic determinant bound. In \Cref{sec:diagonals} we showed how to compute
the diagonal entries of $\hermite$ in $\softO{\matdim^\expmatmul \lceil
s\rceil}$ operations, with $s$ the average column degree of the input matrix.
However, this does not take into account the fact that the degrees of its rows
are possibly unbalanced. In \Cref{sec:hermite}, we were only able to obtain the
cost bound $\softO{\matdim^\expmatmul \deg(\mat{A})}$ for computing the
remaining entries of $\hermite$.

The goal of this section is to show that, applying results
from~\cite[Section~6]{GSSV2012}, one can give a reduction from the general case
of Hermite form computation to the case where the degree of the input matrix
$\mat{A}$ is in $\bigO{\lceil \degDet/\matdim \rceil}$. This is stated formally
in \cref{prop:reduction_hermite}, after what we give two complete examples to
illustrate this reduction (\Cref{exmp:reducing_degrees,eg:ctd_parlin}).

To get a rough idea of how the partial linearization in
\cite[Section~6]{GSSV2012} works and how it benefits Hermite form computation,
consider the matrix
\[   \mat{A} = \begin{bmatrix} 1 & x^{39} + x \\ x & x^{41} + 1 \end{bmatrix}.\]
In this case the column degrees of the matrix are quite unbalanced as $1$ and
$41$ have an average column degree of $21$. However we can create
a second  matrix, of slightly larger dimension, as
\[
  \mat{B} = \left[ \begin{array}{ccc} 1 & 0 & -x^{22} \\ 
x^{17} & 1 & x \\ x^{19} & x & 1 
\end{array} \right]
\]
which shares some nice properties with $\mat{A}$. This matrix is constructed by
dividing the third column into its two $x^{22}$-adic coefficients (rows $2$ and
$3$) and then including an additional row (row $1$) which provides the single
column operation which would undo the division. Thus by construction this
matrix is unimodularly equivalent to 
\[
 \left[ \begin{array}{ccc} 1 & 0 & 0 \\
x^{17} & 1 & x^{39} + x \\ x^{19} & x & x^{41} + 1 
\end{array} \right]
\]
and it is easily seen that the Hermite form of $\mat{A}$ will be given by the
$2 \times 2$ trailing submatrix of the Hermite form of $\mat{B}$. As such we
rely on the computation of the Hermite form of a matrix, not much larger than
the original matrix, but having the nice property that the degrees are much
more uniformly distributed.

\begin{proposition}
  \label{prop:reduction_hermite}
  Let $\mat{A} \in \polMatSpace$ be nonsingular. Using no operation in
  $\field$, one can build a nonsingular matrix $\mat{B} \in \polMatSpace[m]$
  such that
  \begin{enumerate}[(i)]
    \item $\matdim \le m < 3 \matdim$ and $\deg(\mat{B}) \le \lceil\degDet /
      \matdim\rceil$,
    \item the Hermite form of $\mat{A}$ is the trailing principal
      $\matdim \times \matdim$ submatrix of the Hermite form of $\mat{B}$.
  \end{enumerate}
\end{proposition}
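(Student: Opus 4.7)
The plan is to apply the column partial linearization technique of \cite[Section~6]{GSSV2012}, which is the column analogue of the row partial linearization in \Cref{defn:rowparlindiag}, and which has already been applied in \Cref{prop:reduction_determinant} to prove the determinant analogue of the present statement with the very same bounds on $m$ and $\deg(\mat{B})$. Specifically, with target degree $d = \lceil \degDet/\matdim \rceil$, the construction replaces each column $\matcol{\mat{A}}{j}$ by its $\var^d$-adic pieces (each of degree less than $d$) placed in the bottom $\matdim$ rows, and inserts an upper block of elementary rows with entries $\pm 1$ and $\var^d$ encoding the recombination relations between these pieces. The bounds $\matdim \le m < 3\matdim$ and $\deg(\mat{B}) \le d$ follow from the analysis in \cite[Section~6]{GSSV2012}, so the only additional point to verify is property $(ii)$.

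For $(ii)$, note that by construction $\mat{B}$ has a block structure of the form $\begin{bmatrix} \mat{L} & \anyMat \\ \mat{0} & \mat{R} \end{bmatrix}$, where $\mat{L}$ is an $(m-\matdim) \times (m-\matdim)$ upper-triangular elementary matrix arising from the $\var^d$-adic linearization (hence unimodular, with Hermite form $\idMat$ after normalization) and $\mat{R} \in \polMatSpace$ is the bottom-right block. The column operations that add $\var^d$ times each split piece to its predecessor piece (using the $\var^d$ entries of $\mat{L}$) aggregate the pieces back into the original columns of $\mat{A}$ and convert $\mat{B}$ into the block lower-triangular matrix $\begin{bmatrix} \idMat & \mat{0} \\ \anyMat & \mat{A} \end{bmatrix}$. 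Since the Hermite form is invariant under right-unimodular equivalence, the Hermite form of $\mat{B}$ coincides with the Hermite form of this block-triangular matrix. For a block lower-triangular matrix with nonsingular diagonal blocks, the Hermite form is itself block lower-triangular, obtained by computing the Hermite form of each diagonal block and reducing the below-diagonal block modulo the pivots; hence the Hermite form of $\mat{B}$ is $\begin{bmatrix} \idMat & \mat{0} \\ \anyMat' & \hermite \end{bmatrix}$, where $\hermite$ is the Hermite form of $\mat{A}$. This establishes $(ii)$.

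The main obstacle, shared with \Cref{prop:reduction_determinant}, is the careful design of the linearization layout. One must allocate the $\var^d$-adic pieces of each column of $\mat{A}$ into the correct stripes of $\mat{B}$ and fill the top $m-\matdim$ rows so that simultaneously (a) every entry of $\mat{B}$ has degree at most $d$, (b) the recombining column operations indeed produce the promised block lower-triangular form with $\idMat$ in the top-left, and (c) the number of extra columns $m-\matdim$ is controlled by $\degDet/d$ rather than by the cruder $\sumVec{\cdeg{\mat{A}}}/d$, so that $m < 3\matdim$. In particular, the high-degree off-diagonal parts of $\mat{A}$ must be absorbed into the already-allocated linearization columns rather than forcing new ones; this is the delicate combinatorial step worked out in \cite[Section~6]{GSSV2012}. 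Once the layout is fixed, the Hermite form identity in $(ii)$ is a direct consequence of the unimodular invariance of the Hermite form together with the block-triangular structure just described.
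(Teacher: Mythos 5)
Your reduction of $(ii)$ to a right-unimodular equivalence between $\mat{B}$ and $\begin{bmatrix}\idMat & \mat{0}\\ \anyMat & \mat{A}\end{bmatrix}$ is where the argument breaks down. That equivalence does hold for a \emph{pure column} partial linearization, where the inserted elementary rows encode column operations (adding $\var^{d}$ times a high-order piece to its low-order predecessor) that reassemble the pieces, exactly as in the $2\times 2$ example opening \Cref{sec:parlin}. But a pure column linearization cannot achieve both bounds in $(i)$ simultaneously: the per-column linearization degrees in \cite[Section~6]{GSSV2012} are the permuted diagonal degrees realizing $\degDet$, not the actual column degrees, so entries off the chosen permutation may greatly exceed the degree allocated to their column and survive the column step (see the degree-$75$ entry in column $1$ of $\hat{\mat{A}}$ in \Cref{exmp:reducing_degrees}, whose allocated degree is $2$); whereas fully linearizing every column with respect to $d=\lceil\degDet/\matdim\rceil$ would produce on the order of $\sumVec{\cdeg{\mat{A}}}/d$ extra columns, which for matrices as in \Cref{rk:degdet} is of order $\matdim^{2}$, violating $m<3\matdim$. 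The construction of \cite[Corollary~3]{GSSV2012} therefore follows the column step with a \emph{row} partial linearization to absorb the leftover high-degree entries, and that step is undone by row operations, i.e.\ left multiplication, which does not preserve the (column) Hermite form. Hence $\mat{B}$ is not, by construction, right-unimodularly equivalent to your block lower-triangular matrix, and $(ii)$ does not follow. (Your claimed shape $\begin{bmatrix}\mat{L} & \anyMat\\ \mat{0} & \mat{R}\end{bmatrix}$ also does not match the construction, whose lower-left block is nonzero.)

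What does survive both linearization steps is a statement about inverses: \cite[Corollary~3]{GSSV2012} guarantees that the principal $\matdim\times\matdim$ submatrix of $\mat{C}^{-1}$ equals $\mat{A}^{-1}$, and the paper's proof exploits exactly this. Writing the Hermite form of $\mat{B}$ as $\mat{T}=\begin{bmatrix}\hermite_1 & \mat{0}\\ \anyMat & \hermite_2\end{bmatrix}$, the fact that $\mat{B}^{-1}\mat{T}$ is a polynomial (unimodular) matrix forces $\mat{D}=\mat{A}^{-1}\hermite_2$ to be polynomial; then $\det(\mat{B})=\det(\mat{A})$ from \cite[Corollary~5]{GSSV2012} gives $\det(\hermite_1)\det(\mat{D})\in\field\setminus\{0\}$, so $\hermite_1$ and $\mat{D}$ are unimodular, whence $\hermite_1=\idMat$ and $\hermite_2=\hermite$. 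If you wish to keep a proof in the spirit of yours, you must separately justify that the row-linearization step leaves the trailing block of the Hermite form unchanged; that is precisely what the inverse-plus-determinant argument accomplishes.
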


\begin{proof}
  The partial linearization used in  \cite[Corollary 3]{GSSV2012} takes $\mat{A}$
  and constructs a matrix $\mat{C} \in \polMatSpace[m]$ with smoothed degrees 
  having the properties: (a) $\mat{C}$ is a nonsingular matrix with $m < 3\matdim$, 
  (b) $\deg(\mat{C}) \le \lceil \degDet / \matdim \rceil$ and (c) the principal $\matdim \times \matdim$ submatrix of $\mat{C}^{-1}$ is equal to
  $\mat{A}^{-1}$.
  Permuting the rows and columns of this matrix $\mat{C}$ into
  \[
    \mat{B} \;=\;
    \begin{bmatrix}
      \mat{0} & \idMat_{m-\matdim} \\
      \idMat_{\matdim} & \mat{0}
    \end{bmatrix}
    \mat{C} 
    \begin{bmatrix}
      \mat{0} & \idMat_{\matdim} \\
      \idMat_{m-\matdim} & \mat{0}
    \end{bmatrix}
    \in \polMatSpace[m],
  \]
  we see that  $\mat{A}^{-1}$ appears as the trailing $\matdim \times \matdim$
  submatrix of $\mat{B}^{-1}$. We will prove that the Hermite form of
  $\mat{B}$ has the shape
  $\begin{bmatrix}
    \idMat  & \mat{0}  \\
    \anyMat & \hermite
  \end{bmatrix}$,
  where $\hermite$ is the Hermite form of $\mat{A}$.

  Let
  $\mat{T} =   
    \begin{bmatrix}
      \hermite_1 & \mat{0} \\
      \anyMat & \hermite_2 
    \end{bmatrix}$
  be the Hermite form of $\mat{B}$, where $\hermite_1 \in
  \polMatSpace[(m-\matdim)]$ and $\hermite_2 \in \polMatSpace[\matdim]$. We
  can write $\hermite_2 = \mat{A} \mat{D}$, where the matrix $\mat{D} =
  \mat{A}^{-1} \hermite_2$ has entries in $\polRing$. Indeed, by
  construction,
  \[
    \mat{B}^{-1} \mat{T} = 
    \begin{bmatrix}
      \anyMat  &   \anyMat     \\
      \anyMat  &  \mat{A}^{-1}
    \end{bmatrix}
    \begin{bmatrix}
      \hermite_1 & \mat{0} \\
      \anyMat & \hermite_2 
    \end{bmatrix}
    = 
    \begin{bmatrix}
      \anyMat  &   \anyMat     \\
      \anyMat  &  \mat{A}^{-1} \hermite_2
    \end{bmatrix}.
  \] 
  is a (unimodular) matrix in $\polMatSpace[m]$.  On the other hand, according
  to \cite[Corollary~5]{GSSV2012} we have $\det(\mat{B}) = \det(\mat{A})$, and
  therefore 
  \[
    \det(\mat{A}) = \lambda \det(\mat{T}) = \lambda \det(\hermite_1) \det(\hermite_2) = \lambda \det(\hermite_1) \det(\mat{A}) \det(\mat{D})
  \]
  where $\lambda = \det(\mat{B}^{-1}\mat{T})^{-1}$ is a nonzero constant from
  $\field$. Thus, $\hermite_1$ and $\mat{D}$ are both unimodular. Therefore, since $\hermite_1$ is in Hermite form, it must be
  the identity matrix and, since $\hermite_2$ is in Hermite form and
  right-unimodularly equivalent to $\mat{A}$, it must be equal to $\hermite$.
\end{proof}

For the details of how to build the matrix $\mat{C}$ using row and column
partial linearization, we refer the reader to \cite[Section~6]{GSSV2012}. We
give here two detailed examples (see also \cite[Example~4]{GSSV2012}), written
with the help of our prototype implementation of the algorithms described in
this paper.

\begin{example}
  \label{exmp:reducing_degrees}
  Let $\field$ be the finite field with $997$ elements. Using a computer
  algebra system, we choose $\mat{A}\in \polMatSpace[4]$ with prescribed
  degrees and random coefficients from $\field$.  Instead of showing the entire
  matrix let us only consider the degree profile which in this case is
  \[ \mat{A} =
    \begin{bmatrix}
      [ 2] & [10] & [63] & [ 5] \\
      [75] & [51] & [95] & [69] \\
      [ 4] & [ 5] & [48] & [ 7] \\
      [10] & [54] & [75] & [ 6]
  \end{bmatrix}, \]
  where $[d]$ indicates an entry of degree $d$. For the sake of presentation,
  we note that $\degDet = 199 = 75 + 54 + 63 + 7$; however, this quantity is
  not computed by our algorithm. Instead, to find which degrees we will use to
  partially linearize the columns of $\mat{A}$, we permute its rows and columns
  to ensure that the diagonal degrees dominate the degrees in the trailing
  principal submatrices:
  \begin{equation}
    \label{eqn:degree_profile}
    \begin{bmatrix}
      0 & 1 & 0 & 0 \\
      0 & 0 & 0 & 1 \\
      0 & 0 & 1 & 0 \\
      1 & 0 & 0 & 0
    \end{bmatrix}
    \begin{bmatrix}
      [ 2] & [10] & [63] & [ 5] \\
      [75] & [51] & [95] & [69] \\
      [ 4] & [ 5] & [48] & [ 7] \\
      [10] & [54] & [75] & [ 6]
    \end{bmatrix}
    \underbrace{
    \begin{bmatrix}
      0 & 0 & 0 & 1 \\
      0 & 1 & 0 & 0 \\
      1 & 0 & 0 & 0 \\
      0 & 0 & 1 & 0
    \end{bmatrix}
    }_{\boldsymbol{\pi}}
    \;=\;
    \begin{bmatrix}
      [95] & [51] & [69] & [75]  \\
      [75] & [54] & [ 6] & [10]  \\ 
      [48] & [ 5] & [ 7] & [ 4]  \\
      [63] & [10] & [ 5] & [ 2]
    \end{bmatrix}
  \end{equation}
  Then, the diagonal degrees $95,54,7,2$ are used for column partial
  linearization; we remark that $95 + 54 + 7 + 2 = 158 \le \degDet$. Permuting
  back the rows and columns of $\mat{A}$, we will partially linearize its
  columns with respect to the degrees $(2,54,95,7) = (95,54,7,2)\boldsymbol{\pi}^{-1}$.  Since the average of
  these degrees is $\lceil 158/4 \rceil = 40$, the columns are linearized
  into $(1,2,3,1)$ columns, respectively. That is, columns $1$ and $4$ of
  $\mat{A}$ will not be affected, column $2$ of $\mat{A}$ will be expanded into
  $2$ columns, and column $3$ of $\mat{A}$ will be expanded into $3$ columns.
  Elementary rows are inserted at the same time to reflect these column
  linearizations.  Thus, we obtain a column linearized version of $\mat{A}$ as
  \[ \hat{\mat{A}} =
    \begin{bmatrix}
      [ 2]  & [10]       & [39]       & [ 5]  &   0  & [23]       &   0  \\
      [75]  & [39]       & [39]       & [69]  & [11] & [39]       & [15] \\
      [ 4]  & [ 5]       & [39]       & [ 7]  &   0  & [ 8]       &   0  \\
      [10]  & [39]       & [39]       & [ 6]  & [14] & [35]       &   0  \\
      0     & -\var^{40} & 0          & 0     & 1    & 0          & 0    \\
      0     & 0          & -\var^{40} & 0     & 0    & 1          & 0    \\
      0     & 0          & 0          & 0     & 0    & -\var^{40} & 1    
  \end{bmatrix}. \]

  In particular, we have
  \[ 
    \rdeg{\hat{\mat{A}}} = (39, 75, 39, 39, 40, 40, 40),
  \]
  whose average is $\lceil 312 / 7\rceil = 45$. Now, we perform a partial
  linearization on the rows with respect to their row degree. Only the second
  row has degree $75>45$, and is therefore split into two rows; inserting an
  elementary column accordingly, we obtain
  \[ \mat{C} =
    \begin{bmatrix}
      [ 2] & [10]       & [39]       & [ 5]  &   0  & [23]       &   0  &  0         \\
      [44] & [39]       & [39]       & [44]  & [11] & [39]       & [15] &  -\var^{45}\\
      [ 4] & [ 5]       & [39]       & [ 7]  &   0  & [ 8]       &   0  &  0         \\
      [10] & [39]       & [39]       & [ 6]  & [14] & [35]       &   0  &  0         \\
        0  & -\var^{40} &   0        &   0   &   1  &   0        &   0  &  0         \\
        0  &   0        & -\var^{40} &   0   &   0  &   1        &   0  &  0         \\
        0  &   0        &   0        &   0   &   0  & -\var^{40} &   1  &  0         \\
      [30] &   0        &   0        & [24]  &   0  &   0        &   0  &  1         
  \end{bmatrix} \]
  whose degree is $45$. Finally, we verify that the Hermite form of
  $\begin{bmatrix}
    \mat{0} & \idMat_4 \\
    \idMat_4 & \mat{0}
  \end{bmatrix}
  \mat{C}
  \begin{bmatrix}
    \mat{0} & \idMat_4 \\
    \idMat_4 & \mat{0}
  \end{bmatrix}$
  is
  $\begin{bmatrix}
    \idMat_4 & \mat{0} \\
    \anyMat & \hermite
  \end{bmatrix}$, 
  with $\hermite$ the Hermite form of $\mat{A}$. Thus, we have transformed
  a Hermite form problem in dimensions $4 \times 4$ and degree $95$ into one in
  dimensions $8 \times 8$ but degree less than $50 = \lceil\degDet / 4\rceil$.
  \qedexmp
\end{example}

\begin{example}
  \label{eg:ctd_parlin} 
Let $\field$ be the field with $7$ elements, and consider the matrix from
\Cref{eg:ctd_hermitediagonal}:
\[
  \mat{A} = 
  \begin{bmatrix}
                    6x + 1 &       2x^3 + x^2 + 6x + 1 &                      3 \\
    4x^5 + 5x^4 + 4x^2 + x &    6x^5 + 5x^4 + 2x^3 + 4 & x^4 + 5x^3 + 6x^2 + 5x \\
                         2 & 2x^5 + 5x^4 + 5x^3 + 6x^2 &                      6
  \end{bmatrix} .
\]
Here, $\degDet = \deg(\det(\mat{A})) = 1+5+4 = 10$. We consider a row- and
column-permuted version of the matrix $\mat{A}$ ensuring that the diagonal
degrees are dominant, as we did in \cref{exmp:reducing_degrees}:
\[
  \begin{bmatrix}
    0 & 1 & 0 \\
    1 & 0 & 0 \\
    0 & 0 & 1
  \end{bmatrix}
  \mat{A}
  \underbrace{
  \begin{bmatrix}
    0 & 1 & 0 \\
    1 & 0 & 0 \\
    0 & 0 & 1
  \end{bmatrix}
  }_{\boldsymbol{\pi}}
  = 
  \begin{bmatrix}
       6x^5 + 5x^4 + 2x^3 + 4 & 4x^5 + 5x^4 + 4x^2 + x & x^4 + 5x^3 + 6x^2 + 5x \\
          2x^3 + x^2 + 6x + 1 &                 6x + 1 &                      3 \\
    2x^5 + 5x^4 + 5x^3 + 6x^2 &                      2 &                      6
  \end{bmatrix} .
\]
This gives us the linearization degrees $(1,5,0) =
(5,1,0)\boldsymbol{\pi}^{-1}$, which have average $\lceil 6/3\rceil = 2$,
so the partial column linearization results in
\[
  \hat{\mat{A}} =
  \begin{bmatrix}
                      6x + 1 & 6x + 1 &                      3 & 2x + 1 &      0 \\
      4x^5 + 5x^4 + 4x^2 + x &      4 & x^4 + 5x^3 + 6x^2 + 5x &     2x & 6x + 5 \\
                           2 &      0 &                      6 & 5x + 6 & 2x + 5 \\
                           0 &   6x^2 &                      0 &      1 &      0 \\
                           0 &      0 &                      0 &   6x^2 &      1
  \end{bmatrix}.
\]
Then, we perform row partial linearization of this matrix with respect to its row
degrees $(1,5,1,2,2)$, whose average is $\lceil 11 / 5\rceil = 3$, giving
\[
  \mat{C} =
  \begin{bmatrix}
                      6x + 1 & 6x + 1 &                      3 & 2x + 1 &      0 & 0 \\
                    4x^2 + x &      4 &              6x^2 + 5x &     2x & 6x + 5 & 6x^3 \\
                           2 &      0 &                      6 & 5x + 6 & 2x + 5 & 0 \\
                           0 &   6x^2 &                      0 &      1 &      0 & 0 \\
                           0 &      0 &                      0 &   6x^2 &      1 & 0 \\
                   4x^2 + 5x &      0 &                    x+5 &      0 &      0 & 1
  \end{bmatrix}.
\]

Using the algorithm in \Cref{sec:diagonals}, we obtain the degrees
$(0,0,0,0,1,9)$ of the diagonal entries of the Hermite form of the permuted
matrix
\[
  \mat{B} =
  \begin{bmatrix}
    \mat{0} & \idMat_{3} \\
    \idMat_{3} & \mat{0}
  \end{bmatrix}
  \mat{C}
  \begin{bmatrix}
    \mat{0} & \idMat_{3} \\
    \idMat_{3} & \mat{0}
  \end{bmatrix}.
\]
Proceeding then as in \Cref{sec:hermite}, we can to compute the complete
Hermite form of $\mat{B}$ using the knowledge of these degrees, giving
\[
  \begin{bmatrix}
    \idMat_3 & \mat{0} \\
    \mat{R} & \hermite
  \end{bmatrix}
\]
where $\hermite$ is the Hermite form of $\mat{A}$ as given in \Cref{eg:ctd_knowndeg}, and
the transpose of $\mat{R}$ is
\[\trsp{\mat{R}} =
  {
  \begin{bmatrix}
    0 & 6 & 4x^7 + 6x^6 + x^5 + 4x^4 + 2x^3 + 6x^2 + 4 \\
    0 & 3 & 6x^8 + 4x^7 + 4x^5 + 3x^4 + 3x^3 + 2x + 6  \\
    0 & 4 & 3x^8 + 2x^7 + 3x^6 + 3x^5 + 4x^3 + 5x^2 + x + 2
  \end{bmatrix}
  }.
\qedexmp
\]
\end{example}

\section{Conclusion\label{sec:Future-Research}}

In this paper we have given new, deterministic algorithms for computing the
Hermite normal form and the determinant of a nonsingular polynomial matrix. Our
methods are based on the efficient, deterministic computation of the diagonal
elements of the Hermite form. While our algorithms are fast in terms of the
number of operations in an abstract field ${\mathbb K}$, they do not take into
consideration the possible growth of coefficients in the field, an issue when
working over certain fields such as ${\mathbb Q}$, the rational numbers.
Kannan \cite{kannan} was the first to show that computing Hermite normal forms
over ${\mathbb Q}[x]$ can be done in polynomial time. Fraction-free algorithms
for Hermite form computation which take into consideration coefficient growth
have been given in \cite{BLV:jsc06} (using a shifted Popov algorithm) and
\cite{labhalla} (where the problem is converted into a large linear system). We
plan to investigate exact algorithms for Hermite and determinant computation
based on the fraction-free approach used in \cite{BL2000} and also the use of 
Chinese remaindering. In the latter case the reduced domains (e.g.
$\Z_p[x]$) do not encounter coefficient growth which allows for effective use
of the algorithms in this paper. The issue in this case is the reconstruction
of the images, where we expect the techniques used in \cite{cheng} will be
helpful. 

In terms of additional future research we are interested in the still open
problem of reducing computation of the Hermite form over the integers
\cite{storjohann-labahn96} to the complexity of integer matrix multiplication.
In addition, we are interested in finding efficient, deterministic algorithms
for other normal forms of polynomial matrices, such as the Popov normal form,
or more generally the shifted Popov normal forms.  In addition we are
interested in fast normal form algorithms where the entries are differential
operators rather than polynomials. Such algorithms are useful for reducing
systems of linear differential equations to solvable first order systems
\cite{barkatou:2013}. 

\bigskip \noindent\textbf{Acknowledgments.} The authors would like to thank
Arne Storjohann and an anonymous referee for suggestions on simplifying the
presentation of \cref{sec:parlin} and about the alternative determinant
algorithm in the Appendix. G.~Labahn was supported by a grant from NSERC 
while V. Neiger was supported by the international mobility
grants from \emph{Projet Avenir Lyon Saint-\'Etienne}, \emph{Mitacs Globalink -
Inria}, and Explo'ra Doc from \emph{R\'egion Rh\^one-Alpes}.

\appendix

\bibliographystyle{siamplain}

\section*{Appendix. Another fast and deterministic algorithm for the determinant}
\label{app:alternative_det}

In this appendix, we describe an alternative to our determinant
\cref{alg:determinant}, kindly suggested by a reviewer. The main idea is to
rely on $x$-Smith decomposition \cite{GSSV2012} in order to make sure that the
determinant can be easily retrieved from the diagonal entries of a triangular
form computed with \cref{alg:hermiteDiagonal}. This is thus a way to overcome
the obstacle mentioned in \cref{rk:det_from_diagonal}.

Let $\mat{A} \in \polMatSpace$ be a nonsingular polynomial matrix. Then,
\cite[Corollary~1]{GSSV2012} states that we can compute a triangular $x$-Smith
decomposition of $\mat{A}$ using $\softO{\matdim^\expmatmul \deg(\mat{A})}$
field operations. This yields matrices $\pi, \mat{U},\hermite$ such that
$\mat{A} \pi = \mat{U} \hermite$, where
\begin{itemize}
  \setlength\partopsep{0pt}
  \setlength\itemsep{1pt}
  \item $\pi \in \matSpace[\matdim]$ is a permutation matrix,
  \item $\hermite \in \polMatSpace[\matdim]$ is triangular with
    $\det(\hermite) = \var^{\alpha}$ for some $\alpha\in\N$,
  \item $\mat{U} \in \polMatSpace[\matdim]$ is such that $\det(\mat{U} \bmod
    \var) \neq 0$ and $\deg(\mat{U}) \le \deg(\mat{A})$.
\end{itemize}
Then, we have $\det(\mat{A}) = \det(\mat{U}) \det(\hermite) \det(\pi)^{-1}$,
and the cost of finding $\det(\hermite)$ and $\det(\pi)$ is negligible. It
remains to compute $\det(\mat{U})$, which can be done in
$\softO{\matdim^\expmatmul \deg(\mat{U})} \subseteq \softO{\matdim^\expmatmul
  \deg(\mat{A})}$ operations. Indeed, since $\det(\mat{U} \bmod \var) \neq 0$,
  determining the diagonal entries of a triangular form of $\mat{U}$ allows us
  to deduce its determinant as explained in \cref{rk:det_from_diagonal}.

Thus, we obtain $\det(\mat{A})$ in $\softO{\matdim^\expmatmul \deg(\mat{A})}$
field operations; with \cref{prop:reduction_determinant}, this gives another
proof of \cref{thm:determinant}.
\end{document}